\def\final{1}
\definecolor{darkblue}{rgb}{0,0,0.38}
\definecolor{darkred}{rgb}{0.6,0,0}
\definecolor{darkgreen}{rgb}{0.1,0.35,0}
\newtheorem{theorem}{Theorem}[section]
\newtheorem*{theorem*}{Theorem}
\newtheorem*{conjecture*}{Conjecture}
\newtheorem*{remark*}{Remark}
\newtheorem{lemma}[theorem]{Lemma}
\newtheorem*{lemma*}{Lemma}
\newtheorem{proposition}[theorem]{Proposition}
\newtheorem*{proposition*}{Proposition}
\newtheorem{claim}[theorem]{Claim}
\newtheorem*{claim*}{Claim}
\newtheorem*{corollary*}{Corollary}
\newtheorem*{property*}{Property}
\newenvironment{proofof}[1]{\begin{proof}[#1]}{\end{proof}}
\theoremstyle{definition}
\DeclareMathOperator{\cost}{cost}
\newcommand{\cc}[1]{c(#1)}
\newcommand{\rico}[1]{}
\newcommand{\neil}[1]{}
\newcommand{\todo}[1]{}
\newcommand{\rico}[1]{\emph{\color{red}rico: #1}}
\newcommand{\neil}[1]{\emph{\color{darkgreen}{Neil: #1}}}
\newcommand{\todo}[1]{{\color{blue}\emph{(#1)}}}
\def\b1{{\bf 1}}
\newcommand{\grphx}{\mathcal{X}}
\newcommand{\grphy}{\mathcal{Y}}
\newcommand{\compof}[1]{\Gamma(#1)}
\newcommand{\comp}{\mathcal{K}}
\newcommand{\compx}{\compof{\grphx}}
\newcommand{\compF}{\compof{\grphx - F}}
\newcommand{\slk}{h}
\newcommand{\slkx}{h_{\grphx}}
\newcommand{\slkF}{h_{\bar{F}}}
\newcommand{\slkB}{h_{\bar{B}}}
\newcommand{\slkles}[1]{h_{\scriptsize\xoverline[0.4]{#1}}}
\newcommand{\slkless}[1]{h_{\scriptsize\xoverline{#1}}}
\newcommand{\dirE}{\vec{E}}
\newcommand{\remove}{-}
\newcommand{\contract}{\circledast}
\newcommand{\naturals}{\mathbb{N}}
\newcommand{\minimizers}[1]{\mathcal{S}_{#1}}
\newcommand{\grd}{E(\grphx)}
\newcommand{\core}{K}
\newcommand{\remP}{B_{\text{\it rem}}}
\newcommand{\remPcore}{B_{\text{\it rem}}^\core}
\newcommand{\gammoid}{\mathfrak{G}}
\newcommand{\bgammoid}{\gammoid'}
\newcommand{\brank}{\rho'}
\newcommand{\rank}{\rho}
\newcommand{\R}{\mathbb{R}}
\newcommand{\harmonic}[1]{H({#1})}
\newcommand{\pot}[2]{\Phi_{#1}(#2)}
\newcommand{\Pot}{\pot{\core}{\grphx}}
\newcommand{\E}[1]{\mathbb{E}\{#1\}}
\newcommand{\witness}[3]{W(#3)}
\newcommand{\Witness}[1]{\witness{\core}{\grphx}{#1}}
\newcommand{\primary}{core}
\newcommand{\secondary}{cleanup}
\newcommand{\card}[1]{\lvert#1\rvert}
\newcommand{\bases}[1]{\mathcal{B}'_{#1}}
\newcommand{\minrems}[1]{\mathcal{B}_{#1}}
\renewcommand{\Pr}[1]{\mathbb{P}\{#1\}}
\newcommand{\afront}{a^{\text{\rm f}}}
\newcommand{\aback}{a^{\text{\rm b}}}
\DeclareMathOperator{\supp}{supp}
\newsavebox\myboxA
\newsavebox\myboxB
\newlength\mylenA
\newcommand*\xoverline[2][0.75]{%
    \sbox{\myboxA}{$\m@th#2$}%
    \setbox\myboxB\null%
    \ht\myboxB=\ht\myboxA%
    \dp\myboxB=\dp\myboxA%
    \wd\myboxB=#1\wd\myboxA%
    \sbox\myboxB{$\m@th\overline{\copy\myboxB}$}%
    \setlength\mylenA{\the\wd\myboxA}%
    \addtolength\mylenA{-\the\wd\myboxB}%
    \ifdim\wd\myboxB<\wd\myboxA%
       \rlap{\hskip 0.5\mylenA\usebox\myboxB}{\usebox\myboxA}%
    \else
        \hskip -0.5\mylenA\rlap{\usebox\myboxA}{\hskip 0.5\mylenA\usebox\myboxB}%
    \fi}
\title{Matroids and Integrality Gaps \\
for Hypergraphic Steiner Tree Relaxations}
\date{M.I.T. \\[1.5em] \today}
\author{%
Michel X.~Goemans\thanks{E-mail: {\tt goemans@math.mit.edu}. Supported by NSF grants CCF-1115849 and CCF-0829878, and by
ONR grant N00014-11-1-0053.}
\and Neil Olver\thanks{E-mail: {\tt olver@math.mit.edu}. Supported by NSF grant CCF-1115849.}
\and Thomas Rothvo\ss\thanks{E-mail: {\tt rothvoss@math.mit.edu}. Supported by the Alexander von Humboldt Foundation within the Feodor Lynen program, by ONR grant N00014-11-1-0053 and by NSF contract
CCF-0829878.}
\and Rico Zenklusen\thanks{E-mail: {\tt ricoz@math.mit.edu}.
Supported by NSF grants CCF-1115849 and CCF-0829878, and by
ONR grants N00014-11-1-0053 and N00014-09-1-0326.
}}
\begin{document}

\begin{titlepage}
\maketitle
\thispagestyle{empty}
\begin{abstract}

Until recently, LP relaxations have only played a very limited role in
the design of approximation algorithms for the Steiner tree problem.
In particular, no (efficiently solvable) Steiner tree
relaxation was known to have an integrality gap bounded away from $2$,
before Byrka et al.~\cite{byrka_2011_steiner} showed an upper bound
of $\approx 1.55$ of a hypergraphic LP relaxation 
and presented a $\ln(4)+\epsilon\approx 1.39$ approximation based
on this relaxation.
Interestingly, even though their approach is LP based, they do not
compare the solution produced against the LP value. 

We take a fresh look at hypergraphic LP relaxations
for the Steiner tree problem---one that heavily exploits methods
and results from the theory of matroids and submodular
functions---which leads 
to stronger integrality gaps, faster algorithms, and a variety of
structural insights of independent interest.
More precisely, along the lines of the algorithm of Byrka et
al.~\cite{byrka_2011_steiner}, we present a deterministic
$\ln(4)+\epsilon$ approximation that compares against the LP value and
therefore proves a matching $\ln(4)$ upper bound on the
integrality gap of hypergraphic relaxations.

Similarly to~\cite{byrka_2011_steiner}, we
iteratively fix one component and update the LP solution.
However, whereas in~\cite{byrka_2011_steiner} the 
LP is solved at every iteration after contracting a
component, we show how feasibility can be maintained by
a greedy procedure on a well-chosen matroid.
Apart from avoiding the expensive step of solving a hypergraphic
LP at each iteration, our algorithm can be analyzed
using a simple potential function.
This potential function gives an easy means
to determine stronger approximation guarantees and integrality
gaps when considering
restricted graph topologies.
In particular, this readily leads to a $\frac{73}{60}\approx 1.217$
upper bound on the integrality gap of hypergraphic relaxations
for quasi-bipartite graphs.

Additionally, for the case of quasi-bipartite graphs,
we present a simple algorithm to transform
an optimal solution to the bidirected cut relaxation
to an optimal solution of the hypergraphic relaxation,
leading to a fast $\frac{73}{60}$ approximation
for quasi-bipartite graphs.
Furthermore, we show how the separation problem of the hypergraphic
relaxation can be solved by computing maximum flows, which provides a
way to obtain a fast independence oracle for the matroids that we use
in our approach.  

\end{abstract}

\end{titlepage}

\newpage

\section{Introduction}

The Steiner tree problem is one of the most fundamental and
important problems in Computer Science and Operations Research.
Whereas a $2$-approximation is easily obtained by computing a
minimum spanning tree over the terminals, obtaining algorithms
with an approximation guarantee bounded away from $2$ has proven
to be a non-trivial task.
The problem is known to be inapproximable to within $\frac{96}{95}$,
unless $\mathbf{NP} = \mathbf{P}$~\cite{BP89,CC08}).
There has been a long sequence of
combinatorial approximation algorithms~\cite{GP68,zelikovsky_1993_11_over_6_apx,KZ97,PS00,RZ05},
based on different greedy approaches, culminating in the famous
$1+\frac{\ln(3)}{2}+\epsilon<1.55$ approximation of
Robins and Zelikovsky~\cite{RZ05}.
No further progress was achieved until 
Byrka, Grandoni, Rothvo\ss\ and Sanit\`a~\cite{byrka_2011_steiner} presented 
the first LP-based approach leading to a
$\ln(4)+\epsilon\approx 1.39$ approximation.
A major hindrance in the design of LP-based Steiner
tree algorithms is a rather poor understanding of
potential LP relaxations. In particular, until
the result of~\cite{byrka_2011_steiner}, for no (efficiently
solvable) LP relaxation of the Steiner tree problem was
it known whether the integrality gap was bounded away from $2$.
Intriguingly, even though their $\ln(4)+\epsilon$ approximation
algorithm is based on a particular LP relaxation, its approximation guarantee is
not with respect to the LP solution and does not imply a
$\ln(4)$ integrality gap for the relaxation.
In~\cite{byrka_2011_steiner}, the authors show 
a weaker $\approx 1.55$ integrality gap
using a technique not directly linked to their algorithm.
Chakrabarty et al.~\cite{chakrabarty_2010_integralitygap} provide a simpler alternative proof of the same bound.

The linear relaxation used by Byrka et al., the \emph{directed component-based relaxation}, was introduced 
by Polzin and Vahdati-Daneshmand~\cite{PV03}, based in turn on an equivalent \emph{undirected} component-based
LP introduced by Warme~\cite{warme_1998_hyperspanningtrees}.
It is the undirected version that we will use in this paper.
Another notable relaxation is the partition-based LP introduced by K\"onemann et al.~\cite{koenemann_2011_partition_based_LP}.
In~\cite{chakrabarty_2010_hypergraphic}, Chakrabarty et al.~showed that this relaxation is equivalent to the others mentioned above, and introduced the term ``hypergraphic'' for this family of relaxations.
They also proved that basic solutions are sparse, having support size less than the number of terminals.
The limited understanding of LP relaxations of the
Steiner tree problem is arguably a major barrier in the design
of stronger approximation algorithms.
The goal of this work is to fill this gap by providing a fresh
view on the component-based LP relaxation---one that heavily
exploits methods and results from the theory of matroids
and submodular functions.
More precisely, based on the approach of Byrka
et al.~\cite{byrka_2011_steiner}, we present a deterministic
$\ln(4)+\epsilon$ algorithm
that starts with a solution to the component-based LP relaxation,
iteratively contracts a component and updates the LP
solution.
The algorithm of Byrka et al.~solves the
component-based LP (through a very large extended formulation)
in each iteration after contracting, in order to again obtain
a feasible solution.
By contrast, we show how
the LP can be modified by a simple greedy algorithm over
a well-chosen matroid to achieve the same goal.
This leads to a considerably faster way to update the
LP, but more importantly, we show how the approximation
quality of our approach can be analyzed  with respect
to the initial LP solution.
This implies a bound of the integrality gap of the component-based LP
relaxation of $\ln(4)$.
By comparison, the best known lower bound is $8/7 \approx 1.142$ (e.g., by the example of
 \cite{koenemann_2011_partition_based_LP}). %
Furthermore, we show how the separation problem of the
component-based relaxation can be reduced to computing maximum flows.
Whereas this result is likely to be of independent interest,
it also provides a way to obtain a fast
independence oracle for the matroids
that we use in our approach.
Additionally, we further investigate the special case
of quasi-bipartite graphs, which has played a central
role in the design of approximation algorithms for the
Steiner tree problem, as well as to find $\mathbf{APX}$-hard
problem classes.
Rajagopalan and Vazirani~\cite{RV99} showed that the integrality gap
of the bidirected cut relaxation for such graphs can be bounded by $3/2$.
This was later improved to $4/3$ \cite{CDV08} and to $1.28$~\cite{chakrabarty_2010_integralitygap}.
We obtain a $\frac{73}{60}$ bound for the integrality gap, again matching the approximation factor of~\cite{byrka_2011_steiner}.
Such a bound was previously known only for the case when all edge costs are equal~\cite{chakrabarty_2010_integralitygap}.
Chakrabarty et al.~\cite{chakrabarty_2010_hypergraphic} showed that on quasi-bipartite graphs, the bidirected cut and hypergraphic relaxations are actually equivalent. 
 However their proof is based on a duality argument, and they leave as an open problem the question of converting a solution from the bidirected cut relaxation to the hypergraphic relaxation efficiently (more quickly than simply optimizing the hypergraphic LP).
We present a simple algorithm to perform this transformation; since the bidirected cut relaxation can be solved much more efficiently via a compact extended formulation, this gives a much faster method of solving the hypergraphic LP in the quasi-bipartite case.
Combining this result with the suggested approximation
algorithm, we obtain a significantly faster
$\frac{73}{60}$ approximation than
the one of Byrka et al.~\cite{byrka_2011_steiner},
since we do not need to (repeatedly) optimize the component-based
relaxation by using either the ellipsoid method or a
very large extended formulation.

\section{Discussion of results and techniques}\label{sec:overview}

\subsection{The component-based LP}

Let $G=(V,E)$ be an undirected graph with terminals $R\subseteq V$
and edge costs $c : E \to \mathbb{R}_+$.
A \emph{component} $C$ is simply a subgraph of $G$ with the property that it is a tree spanning $V(C)$, all leaves of $C$ are terminals, and all internal nodes are
non-terminals.
Write $\cost(C) := \sum_{e \in E(C)} \cc{e}$ for the cost of a component $C$.
We will frequently need the terminal set of a component $C \in \comp$, and so by abuse of notation, when we refer to $C$ as a vertex set, we mean the set $V(C)\cap R$ of terminals in $C$. In particular, $\card{C}$ refers to the number of terminals in $C$.

Now let $\comp$ be the set of all components of $G$; we assume that all components contain at least two terminals, else they can be safely removed.
We use the notation $(Z)^+ := \max\{Z, 0\}$.
Then the component-based LP relaxation is as follows~\cite{warme_1998_hyperspanningtrees}:
%
\begin{equation}\tag{\textsc{lp}}\label{eq:kLP}
	\begin{alignedat}{3}
\min \quad&&\sum_{C\in \comp}
         x_C\cost(C)\quad\quad\quad &\\
&&     \sum_{C\in \comp}
         x_C (|S\cap C|-1)^+  ~&\leq~  |S|-1 &\quad\qquad&\forall S\subseteq R,
         S\neq \emptyset \\
&&     \sum_{C\in \comp}
         x_C (|C|-1) ~& =~  |R|-1 \\
&&      x_C ~&\geq~ 0  &&\forall C\in \comp.
\end{alignedat}
\end{equation}
Borchers and Du~\cite{borchers_1997_ksteiner} showed that the optimal \emph{$k$-restricted} Steiner tree, meaning only components with at most $k$ terminals can be used, has cost at most $1+1/\lfloor\log_2 k\rfloor$ times the cost of an optimal Steiner tree. 
Furthermore, when restricting the variables in~\eqref{eq:kLP}
to components with at most $k$ terminals, the resulting
linear program can be solved efficiently, e.g., by solving
a polynomial-size extended formulation~\cite{byrka_2011_steiner}.  
It follows that for any fixed $\epsilon>0$, a ($1+\epsilon$)-approximate
solution to~\eqref{eq:kLP} can be obtained efficiently.
We also point out in Appendix~\ref{appendix:nphardness} that
optimizing $\eqref{eq:kLP}$ exactly is strongly $\mathbf{NP}$-hard (this does not
seem to have been previously observed).

\medskip

The framework of our algorithm is similar to Byrka et al.~\cite{byrka_2010_improved}, and in particular, it is iterative in nature.
They begin by computing a near-optimal fractional solution $x$ to \eqref{eq:kLP}. 
They then sample a component $C$ at random, proportional to its entry $x_C$, and \emph{contract} this component. 
The solution $x$ is no longer feasible to \eqref{eq:kLP} on this new contracted instance, so they re-solve the LP and iterate this procedure until all terminals are connected. 

In their analysis, they show that a single random contraction reduces the cost of the \emph{optimum} Steiner tree by a certain factor in each iteration.
The crucial ingredient is a lower bound on the expected cost of edges that could be removed from an optimum solution after a contraction, while still obtaining a Steiner tree.
To obtain a bound on the integrality gap, we need a stronger result that says even a 
\emph{fractional} solution becomes significantly cheaper after a random contraction.
Even for a fixed set of terminals $Q$, it was unclear how to modify a fractional solution 
in order to preserve feasibility after contraction---a question that had a simple answer in the integral case. \todo{should we say why?}
Our first goal will be to obtain an understanding of the structure of these modifications.

\medskip

While it can be avoided, it significantly simplifies the discussion to consider ``blown up'' versions of solutions to~\eqref{eq:kLP}.
Consider any $x \in \mathbb{Q}^{\comp}_+$, and let $N \in \mathbb{N}$ be such that $x_C \cdot N \in \mathbb{N}$ for all $C \in \comp$. 
The minimal \emph{blowup graph} corresponding to $x$ is the unweighted multigraph defined as follows.
First take the disjoint union of $x_C \cdot N$ disjoint copies of $C$ for each component $C$;
then identify, for each $v \in R$, all the copies of $v$.
The edge costs of $\grphx$ are inherited from $G$ in the obvious way.
See Figure~\ref{fig:blowup} for an example of an LP solution and its associated minimal blowup graph.
Observe then that $\cost(\grphx) = N\cdot\cost(x)$.
Note that $\grphx$ (along with $N$, but this will remain fixed throughout) encodes all the information in $x$. 
In particular, given $\grphx$ we can determine all of its components: these are simply the maximal connected subgraphs that are trees whose leaves are precisely the terminals spanned.
Thus we can define $\compx$ as the set of components of a blowup graph $\grphx$. 
Each component $C \in \compx$ is a subgraph of $\grphx$, but again, we will abuse notation when the context is clear and sometimes use $C$ to refer to just the terminals of $C$.
Thus, e.g., for some $S \subseteq R$, $S \cap C$ refers to the terminals in $C$ that are also in $S$.
We will need slightly more generality in our definition of a blowup graph.
For any $t \in \naturals$, let $G_t$ be the multigraph obtained by first taking $t$ disjoint copies of $G$, and then for each $v \in R$, identifying all copies of $v$.
For a solution $y$ with corresponding minimal blowup graph $\grphy$, we call a multigraph $\grphy'$ a (not necessarily minimal) blowup graph corresponding to $y$ if
\begin{inparaenum}[(i)]
\item $\grphy' \subseteq G_t$ for some $t \in \naturals$,
\item $\grphy' \supseteq \grphy$, and
\item for any distinct terminals $u,v \in R$, there is no $u$-$v$-path in $\grphy'$ that is not already present in $\grphy$.
\end{inparaenum}
Any edges in $\grphy'$ that were not in $\grphy$ we call \emph{pendant} edges.
We will say that a blowup graph $\grphy$ is \emph{feasible} if it corresponds to a feasible solution to \eqref{eq:kLP}; otherwise we call it infeasible.
Note that pendant edges have no effect on feasibility; they will always be removed in what we will later call a ``cleanup'' step.

\subsection{Edge removal after contraction}

Let $\grphx$ be the blowup graph corresponding to some solution $x$.
We are interested in the situation after contracting some full component of $G$.
In order to avoid some annoying technicalities, for now instead of contracting $Q$ we will think of increasing the value of $x_Q$ by $1$. In other words, in terms of the blowup graph, we take $N$ fresh copies of component $Q$ and add it to $\grphx$.
We denote the new blowup graph obtained by $\grphx \contract Q$. 
Formally, $\grphx \contract Q$ is obtained by taking the disjoint union of $\grphx$ and $N$ copies of $Q$, and then identifying all copies of $v$ for each $v \in R$.

It is clear that $\grphx \contract Q$ is not feasible.
We are interested in describing the set of edges $F \subseteq E(\grphx)$ that can be \emph{removed} so that $(\grphx \contract Q) \remove F$ is feasible.

This is the primary reason that it is simpler to work with the blowup graph $\grphx$ rather than $x$; this modification operation is much simpler than an equivalent operation defined on $x$. 
For example, removing a single edge from $\grphx$ can have the effect of splitting up some component $C$ into subcomponents $C_1$ and $C_2$; the corresponding effect on $x$ is to reduce $x_C$ by $1/N$ and increase $x_{C_1}$ and $x_{C_2}$ by the same amount.

Unfortunately, the set of all possible edge removals is not so well behaved.
In order to expose the structure we need, we must consider \emph{minimal} removals. 
Let

	\[ \mathcal{B}_Q = \{ B \subseteq \grd \mid (\grphx \contract Q) \remove B \text{ is feasible, and $B$ is minimal with this property} \}. \]
	Figure~\ref{fig:contraction} shows an example; after a set $B \in \mathcal{B}_Q$ is removed, an edge of the blowup graph becomes pendant, and so can also be removed without affecting feasibility.

One of the most crucial elements of our analysis is the following:
\begin{theorem}\label{thm:isMatroid}
For every component $Q$,
$\mathcal{B}_Q$ forms the set of bases of a matroid $M_Q$.%
\end{theorem}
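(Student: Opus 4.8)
The plan is to realize $\mathcal{B}_Q$ as the set of bases of a matroid obtained from an auxiliary ``feasibility matroid'' by a single contraction followed by dualization. Set $\widetilde{G}:=\grphx\contract Q$ and split its edge set as $E(\widetilde{G})=E(\grphx)\sqcup E_Q$, where $E_Q$ denotes the edges of the $N$ fresh copies of $Q$ used to form $\grphx\contract Q$. The heart of the argument is the claim that
\[ \mathcal{I}\;:=\;\Bigl\{\,J\subseteq E(\widetilde{G})\ :\ \sum_{C\in\compof{J}}\bigl(\card{S\cap C}-1\bigr)^{+}\ \le\ N\bigl(\card{S}-1\bigr)\ \text{ for every }\emptyset\neq S\subseteq R\,\Bigr\} \]
is the family of independent sets of a matroid $\mathcal{M}$ on $E(\widetilde{G})$; here $\compof{J}$ denotes the (trimmed) components of the subgraph of $\widetilde{G}$ with edge set $J$, so that $\mathcal{I}$ is precisely the collection of edge sets of sub-blowup-graphs of $\widetilde{G}$ that satisfy all inequalities of \eqref{eq:kLP}, possibly violating only the single equality. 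The family $\mathcal{I}$ is visibly downward closed, since deleting an edge only splits a component and $(a-1)^{+}+(b-1)^{+}\le(a+b-1)^{+}$ for $a,b\ge 0$; the real content is the exchange property.

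Granting the claim, the theorem follows by routine matroid manipulations. First, $E_Q\in\mathcal{I}$, because $N$ copies of $Q$ alone satisfy every inequality: $(\card{S\cap Q}-1)^{+}\le\card{S}-1$, so the left-hand side is at most $N(\card{S}-1)$. Hence $\mathcal{M}/E_Q$ is a matroid on $E(\grphx)$ whose bases are exactly those $K\subseteq E(\grphx)$ for which $K\cup E_Q$ is a base of $\mathcal{M}$, i.e.\ an inclusion-wise maximal member of $\mathcal{I}$; writing $B:=E(\grphx)\setminus K$, this says $\widetilde{G}\remove B=(\grphx\contract Q)\remove B$ is $\le$-feasible with $B$ minimal. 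One checks that such a $B$ in fact yields a \emph{feasible} graph, not merely a $\le$-feasible one: the terminal count $\tau(J):=\sum_{C\in\compof{J}}(\card{C}-1)$ drops by at most one per edge deletion, and by exactly one unless the deleted edge is pendant, so minimality of $B$ together with the equicardinality of the bases of $\mathcal{M}$ forces $\tau(\widetilde{G}\remove B)$ to be the same for every such $B$; since one of these graphs, the one extending the independent set $E(\grphx)$, attains the maximum value $N(\card{R}-1)$ (as $\grphx$ is feasible), all of them do, and $\tau=N(\card{R}-1)$ is exactly the equality constraint of \eqref{eq:kLP}. In the other direction, any $B\in\mathcal{B}_Q$ gives $(\grphx\contract Q)\remove B\in\mathcal{I}$, maximal in $\mathcal{I}$ because adding back a deleted edge cannot push $\tau$ above $N(\card{R}-1)=\tau((\grphx\contract Q)\remove B)$ and would therefore merely recreate a pendant edge, contradicting minimality; hence $E(\grphx)\setminus B$ is a base of $\mathcal{M}/E_Q$. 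Therefore $\mathcal{B}_Q$ is exactly the set of complements in $E(\grphx)$ of the bases of $\mathcal{M}/E_Q$, that is, the set of bases of the dual matroid $M_Q:=(\mathcal{M}/E_Q)^{*}$.

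What remains, and what I expect to be the main obstacle, is proving that $\mathcal{I}$ is a matroid; this is where submodularity of $S\mapsto N(\card{S}-1)$ over the nonempty subsets of $R$ enters. I see two natural routes. The first exhibits $\mathcal{M}$ as a hypergraphic-type matroid: one starts from the $N$-fold matroid union of the graphic matroid of the complete graph on $R$ (a matroid by the matroid union theorem, whose bases are the edge multisets that pack into $N$ spanning trees) and induces a matroid on $E(\widetilde{G})$ via the bipartite relation linking an edge of a component to the terminal pairs it separates; inducing a matroid through a bipartite graph again yields a matroid (Rado), provided one arranges that the several edges of a single component are jointly forced onto a spanning tree of that component's terminal set, and that pendant edges and interior subdivision edges of a component become loops, respectively parallel elements, as appropriate. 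The second, more self-contained route verifies the exchange axiom for $\mathcal{I}$ directly: for $J\notin\mathcal{I}$ the family of tight and violated terminal sets $S$ is well structured (closed under the relevant uncrossing, precisely because $S\mapsto N(\card{S}-1)$ is submodular), and from two maximal members of $\mathcal{I}$ one produces a single-edge exchange by rerouting across such a set, in essence a Dilworth-truncation argument applied to the polymatroid $A\mapsto\sum_{C\in\compof{A}}(\card{S\cap C}-1)^{+}$. Either way, once the inequality system of \eqref{eq:kLP} is known to cut out the independent sets of a matroid on the edge set of the blowup graph, the rest is bookkeeping around pendant edges and the single equality.
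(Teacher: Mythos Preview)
Your overall plan is different from the paper's, and the ``bookkeeping'' part works, but the step you correctly flag as the main obstacle is genuinely the whole proof, and you have not done it.

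\medskip

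\textbf{Comparison with the paper.} The paper does not build any auxiliary matroid on $E(\widetilde G)$. It writes down an explicit rank function on $E(\grphx)$,
\[
r_Q(F)\;=\;\min_{S\supseteq Q}\,h_{\grphx\remove F}(S),
\]
and verifies the rank axioms directly: monotonicity is immediate, $r_Q(F)\le |F|$ comes from Lemma~\ref{lem:hsubm}\,(\ref{item:boundedInc}) together with $r_Q(\emptyset)=0$ (feasibility of $\grphx$), and submodularity is the real content, proved by a careful analysis of how the minimizing sets $S$ shift when one edge is added (using the intersecting submodularity of $h$). Matching the bases of this matroid with $\mathcal{B}_Q$ is then a short argument (Lemma~\ref{lem:hstarfeas} plus Lemma~\ref{lem:hsubm}\,(\ref{item:boundedInc})). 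Your route instead posits a matroid $\mathcal{M}$ on all of $E(\widetilde G)$ whose independent sets are the $\le$-feasible edge sets, and recovers $M_Q$ as $(\mathcal{M}/E_Q)^*$.

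\medskip

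\textbf{What is fine.} Your feasibility-versus-$\le$-feasibility argument is essentially correct, though terse. Maximality of $J=E(\widetilde G)\setminus B$ in $\mathcal{I}$ does force that no $e\in B$ is pendant in $J+e$; this rules out pure-Steiner components of $J$ (any such component would be bordered by an edge of $B$ with an endpoint in it), whence $\tau(J)=\tau(E(\widetilde G))-|B|$ exactly. Equicardinality then gives a common $\tau$-value for all bases, and the base extending $E(\grphx)$ pins it at $N(|R|-1)$. The converse direction is also fine.

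\medskip

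\textbf{The gap.} You never prove $\mathcal{I}$ is a matroid, and this is not a formality. In fact, via Lemma~\ref{lem:hstarfeas} one checks that $\{K\subseteq E(\grphx): K\cup E_Q\in\mathcal{I}\}$ is \emph{exactly} the family of independent sets of $M_Q^{*}$ in the paper's sense; so your assertion ``$\mathcal{M}/E_Q$ is a matroid'' is equivalent to the theorem you are trying to prove. Your two suggested routes do not bypass this:
\begin{itemize}
\item The Rado/matroid-union route has a type mismatch. The $N$-fold union of the graphic matroid on $R$ lives on terminal \emph{pairs}; the known hypergraphic/hyperforest matroid lives on \emph{components}. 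Your ground set is the set of \emph{edges of the blowup}, and a single component contributes many edges whose removals interact (pendant, splitting, etc.). The parenthetical ``provided one arranges that the several edges of a single component are jointly forced onto a spanning tree of that component's terminal set'' is exactly the hard part and is not a standard Rado induction; you would need an additional gadget to encode the tree structure of each component, and you have not supplied one.
\item The direct exchange/uncrossing route is the right instinct, but what it amounts to, once written out, is the paper's proof of Proposition~\ref{prop:matroid}: intersecting submodularity of $h_{\grphx\remove F}$, monotonicity of the marginal in $S$, and a comparison of the maximal/minimal minimizers for $F_1\subseteq F_2$. You have described the shape of this argument but not executed it.
\end{itemize}
So as it stands your proposal reduces the theorem to a statement (matroid-ness of $\mathcal{I}$, or even of $\mathcal{M}/E_Q$) that is at least as strong as the theorem itself, without proving it.
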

In particular, it follows that any minimal removal set has the same number
of edges; this number turns out to be $N(|Q|-1)$.
We are able to give a precise description of the matroid $M_Q$ by giving its rank function; 
more details of this will be given in Section~\ref{sec:deeper}.
We can also show that the matroid is a gammoid (a special type of matroid related to flows); see Appendix~\ref{appendix:sep}.
As an aside, we note that $M_Q$ depends only on the terminals of $Q$, and not its structure; we could actually define a matroid $M_S$ for \emph{any} subset $S$ of terminals, but this will not be important for our purposes.

\medskip %

We will now study which edge sets can be removed after
the random contraction of a component. Even though we will finally
present a deterministic algorithm, this analysis will be helpful in guaranteeing the existence of removal sets with
certain properties by an averaging argument.

As before, let $\grphx$ be the blowup graph corresponding to a feasible LP solution $x$.
Upon contracting component $Q$, we may remove some edges in order to again obtain a feasible solution.
In particular, by Theorem~\ref{thm:isMatroid}, we can remove any basis of $M_Q$.
For added flexibility, we allow choosing
a basis $B_Q \in \mathcal{B}_Q$ randomly, according to any distribution we like.
In this case, each edge $e$ will be removed with some probability $q_e$.
The probability vectors that are attainable are simply the convex combinations of incidence vectors of the bases; in other words, 
precisely the vectors in $B(M_Q)$, the base polytope of $M_Q$.

Now consider, as in~\cite{byrka_2010_improved}, randomly contracting a single component, with component $Q \in \compx$ contracted with probability $1/\card{\compx}$.
Note that since each original component $\tilde{Q} \in \comp$ has $Nx_{\tilde{Q}}$ copies in $\compx$, this is the same as contracting a component in $\comp$ with probability proportional to $x_{\tilde{Q}}$.
Again, we allow ourselves to choose an arbitrary distribution over $\mathcal{B}_Q$ for removals on contracting $Q$, and ask what probability vectors $p$ describing edge removal probabilities are attainable.
But any such probability vector is given by some convex combination $\tfrac{1}{\card{\compx}}\sum_{Q \in \compx} q^Q$, where $q^Q \in B(M_Q)$. 
In other words, the attainable probability vectors form precisely the polytope $\remP$ given by the Minkowski sum
\[ \remP = \frac{1}{\card{\compx}}\sum_{Q \in \compx} B(M_Q). \]
This implies that $\remP$ is a polymatroid~\cite{mcdiarmid_1975_rados}; from our knowledge of the rank functions of the $M_Q$'s, we can also describe the rank function of $\remP$, as will be described in detail in Section~\ref{sec:deeper}.

In the following, we use \emph{scaled cost} to refer to costs reduced by a factor of $N$, compensating for the blowup factor.
The goal is to show that the expected scaled cost of removed edges is large, compared to the expected cost of the component that is contracted.
Perfection would be if we could always remove edges of total scaled cost as large as the cost of the contracted component, but of course this is not possible (it would imply an integrality gap of $1$).
Thus we lower our goals slightly.
It \emph{is} possible to show that there is a point $p \in \remP$ with $p_e \geq \tfrac{N}{2\card{\compx}}$ for all $e \in E(\grphx)$.
This gives an expected decrease of $\cost(\grphx)/(2\card{\compx})$ in the LP solution after scaling down, and the expected cost of the contracted component is $\cost(\grphx)/\card{\compx}$; so this implies only an uninteresting bound of $2$ on the integrality gap.
Instead, we must choose the distribution more carefully.

More precisely, we will choose a well-structured subset $\core\subseteq \grd$ and only consider
removal probabilities $p\in \remP$
whose support is contained in $\core$.
The set $\core$ will be chosen to be a minimal subset of $\grd$ whose removal
from $\grd$ disconnects all terminals in the blowup graph.
We call such a set a \emph{splitting set}\footnote{The
complements of splitting sets are sometimes
called \emph{losses}.}.
Interestingly, the family of all splitting sets form the bases
of a cographic matroid, since $K$ is a splitting set precisely when
$E(\grphx)\setminus K$ is a spanning tree in the graph obtained
from $\grphx$ by contracting together all its terminals.
As we will see more formally in the proof of Theorem~\ref{thm:uniform}, when choosing $K$ to be a
splitting set, the set
$\mathcal{B}_Q^K=\{B\in \mathcal{B}_Q\mid B\subseteq K\}$
is nonempty for every $Q$, and so form the bases
of the matroid $M_Q^K$ obtained by restricting $M_Q$ to $K$.
This implies that the polytope 
$\remPcore=\{p\in \remP\mid \supp(p)\subseteq \core\}$
of removal probabilities we consider is nonempty, and
thus forms the base polytope of the polymatroid
obtained by restricting the polymatroid corresponding
to $\remP$ to $K$. 

Once we have chosen some splitting set $K$,
we will call edges in $K$ \emph{\primary\ edges}, and all
other edges \emph{\secondary\ edges}.
To see the reason for this name, recall that the matroid $M_Q$ describes only the \emph{minimal} edge removals upon contracting $Q$. 
However, there may be other removals that are possible; for $B \in \mathcal{B}_Q$, there may be pendant edges in $(\grphx \contract Q) \remove B$ which can be removed without having any effect on feasibility.
Our choice of $\core$ ensures that for any edge $e \in E(\grphx) \setminus \core$, $e$ can be deleted (``cleaned up'') once enough edges of $\core \cap C$ have been removed.
But just as importantly, we can prove
\begin{theorem}\label{thm:uniform}
	If $\core$ is any splitting set, then there is a distribution over $\mathcal{B}^K_{Q}$ for each $Q \in \compx$ such that if $Q$ is chosen uniformly at random from $\compx$, and then $B$ is chosen from $\mathcal{B}_{Q}^K$ according to the chosen distribution, then
	\[ \Pr{e \in B} \geq N/\card{\compx} \qquad \text{for each } e \in \core. \]
\end{theorem}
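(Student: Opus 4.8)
The plan is to turn the statement into a membership question for the base polytope $\remPcore$ and settle it through the polymatroid axioms. Write $p^{*}\in\R^{\grd}$ for the vector with $p^{*}_{e}=N/\card{\compx}$ for $e\in\core$ and $p^{*}_{e}=0$ otherwise. If we pick $Q$ uniformly from $\compx$ and then $B$ from some distribution over $\mathcal{B}^{\core}_{Q}$, the resulting removal‑probability vector is $\tfrac{1}{\card{\compx}}\sum_{Q}q^{Q}$ with $q^{Q}\in\conv\{\mathbf 1_{B}:B\in\mathcal{B}^{\core}_{Q}\}$, and every such vector is realizable; once we know $\mathcal{B}^{\core}_{Q}\neq\emptyset$ for all $Q$, this convex hull is the base polytope $B(M^{\core}_{Q})$, so the set of attainable vectors is exactly $\remPcore$. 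Since $\sum_{e\in\core}p^{*}_{e}=\card{\core}\cdot N/\card{\compx}$ will be shown to equal the rank of $\remPcore$, no point of $\remPcore$ can strictly dominate $p^{*}$ in any coordinate; hence the theorem is equivalent to $p^{*}\in\remPcore$, and it suffices to prove the latter.

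The first ingredient is that $\mathcal{B}^{\core}_{Q}\neq\emptyset$ for every $Q\in\compx$, i.e.\ that every splitting set spans $M_{Q}$. Intuitively a splitting set is ``large enough'': it disconnects all terminals in $\grphx$, so after adding $N$ copies of $Q$ the resulting excess connectivity can be cancelled by deleting edges of $\core$ only. I would derive this formally from the rank formula for $M_{Q}$ (Section~\ref{sec:deeper}), or from the description of $M_{Q}$ as a gammoid (Appendix~\ref{appendix:sep}). Granting it, $\remPcore$ is the base polytope of the restriction of the polymatroid $\remP$ to $\core$, whose rank function is $r(S)=\tfrac{1}{\card{\compx}}\sum_{Q\in\compx}\rank_{M_{Q}}(S)$ for $S\subseteq\core$.

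By the description of a polymatroid base polytope, $p^{*}\in\remPcore$ is equivalent to $p^{*}(\core)=r(\core)$ together with $p^{*}(S)\le r(S)$ for all $S\subseteq\core$. The equality is a counting check: a splitting set has $\card{\core}=N(\card R-1)$ — since $\grd\setminus\core$ is a spanning tree of $\grphx$ with all terminals identified, this follows from $\sum_{C\in\compx}(\card C-1)=N(\card R-1)$, the equality constraint of~\eqref{eq:kLP} — and, using $\rank_{M_{Q}}(\core)=N(\card Q-1)$ from the previous step, $r(\core)=\tfrac{1}{\card{\compx}}\sum_{Q}N(\card Q-1)=\tfrac{N}{\card{\compx}}\cdot N(\card R-1)=p^{*}(\core)$. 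The remaining inequalities unwind to
\[\sum_{Q\in\compx}\rank_{M_{Q}}(S)\ \ge\ N\,\card S\qquad\text{for every }S\subseteq\core.\]

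This inequality is the heart of the proof, and I expect proving it to be the main obstacle. The approach is to use the explicit $\rank_{M_{Q}}$ from Section~\ref{sec:deeper}: every $e\in\core$ lying in a component $C\in\compx$ determines, through the spanning tree that $\core$ induces on the terminal set of $C$, a pair of distinct terminals, and one wants a per‑component bound $\rank_{M_{Q}}(S)\ge\lvert\{e\in S:\ e\text{ is ``active'' for }Q\}\rvert$ — with a notion of active that makes the active edges of $S$ independent in $M_{Q}$ — together with the statement that every $e\in\core$ is active for at least $N$ of the $\card{\compx}$ components $Q$. Summing over $Q$ and interchanging the summations then gives the bound. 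The delicate part is calibrating ``active'': it must be restrictive enough that active edges are genuinely independent in $M_{Q}$ (here the splitting‑set structure forces the induced terminal pairs to behave like a forest, which is what bounds the rank), yet permissive enough that the count of $N$ per edge goes through — and the latter is exactly where feasibility of $x$ enters, via the~\eqref{eq:kLP}‑constraints for the terminal subsets spanned by the $\core$‑edges of a component, together with the minimality of the splitting set $\core$. (Alternatively, one can argue by contradiction with a minimal $S$ violating the inequality and uncross it; the same calibration issue reappears.) Once the displayed inequality holds, $p^{*}\in\remPcore$, and expanding $p^{*}$ as a convex combination of incidence vectors of bases in the $\mathcal{B}^{\core}_{Q}$ recovers the desired distributions.
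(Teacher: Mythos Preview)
Your reduction to the membership problem $p^{*}\in\remPcore$, the check that $r_Q(K)=N(|Q|-1)$ so that $\mathcal{B}^{K}_{Q}\neq\emptyset$, and the rank--equality count $p^{*}(K)=r(K)$ are all correct and match the paper. You have also correctly isolated the crux as the inequality $\sum_{Q\in\compx}r_Q(F)\ge N|F|$ for every $F\subseteq K$.

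The gap is your plan for this inequality. Your ``active'' scheme asks for a relation between edges $e\in K$ and components $Q$ with (a) the edges active for a given $Q$ independent in $M_Q$, and (b) each $e\in K$ active for at least $N$ components. But (a) caps the number of edges active for $Q$ at $\mathrm{rank}(M_Q^K)=N(|Q|-1)$, and summing gives $\sum_Q N(|Q|-1)=N\cdot N(|R|-1)=N|K|$; together with (b) every inequality must be tight, so you are implicitly demanding an \emph{integral} decomposition---one basis $B_Q\in\mathcal{B}^{K}_{Q}$ per component, covering each $e\in K$ exactly $N$ times---which is strictly stronger than $p^{*}\in\remPcore$ and is not supplied by the hints you list. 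The paper avoids this. Writing $r_Q(F)=\slkF(S_Q)$ for a minimizer $S_Q\supseteq Q$ and observing $\slkF(R)=|F|$ (this is where the splitting-set property enters), the target becomes $\sum_Q\slkF(S_Q)\ge N\,\slkF(R)$. The key tool is then a structural lemma (Theorem~\ref{thm:submLowerBound}): any nonnegative intersecting-submodular $h$ with $h(\{v\})=0$ admits a monotone lower bound $f=\sum_i\lambda_i f_{\mathcal{P}^i}$ by partition functions, with $f(U)=h(U)$. Applied to $\slkF^{+}$, monotonicity gives $\slkF(S_Q)\ge f(S_Q)\ge f(Q)$, and feasibility of~\eqref{eq:kLP} enters precisely as the \emph{partition inequalities} $\sum_Q f_{\mathcal{P}}(Q)\ge N(|\mathcal{P}|-1)=N f_{\mathcal{P}}(R)$, which close the chain to $N f(R)=N\slkF(R)$. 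This decomposition of the slack function into partition functions is the missing idea in your proposal.
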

This is discussed further in Section~\ref{sec:deeper}.
\subsection{The algorithm}\label{sec:alg}
For the accounting in our analysis, we will need to keep track of precisely which edges in $E(C) \cap \core$ must be removed before an edge $e \in E(C) \setminus \core$ can be deleted (cleaned up). 
Define $W(e) \subseteq \core$, the \emph{witness set} of edge $e$, as the unique \emph{minimal} set of edges such that after removing $W(e)$, $e$ becomes a pendant edge and can be cleaned up.
The fact that there exists such a unique set is shown in Lemma~\ref{lem:witness} in the appendix.
We also define $W(e) = \{e\}$ if $e \in \core$.
Figure~\ref{fig:witness} shows an example of a witness set.

We define a \emph{weight} (distinct from the cost) on all \primary\ edges in such a way that the total weight of \primary\ edges equals the total cost of $\grphx$, by charging the cost of a \secondary\ edge to the \primary\ edges in its witness set.
More precisely, let
\[ w(e) = \cc{e}\;\; + \sum_{f \notin \core: e \in \Witness{f}} \frac{\cc{f}}{\card{\Witness{f}}} \qquad \text{for all } e \in \core. \]
The following is an easy consequence of Theorem~\ref{thm:uniform} and the fact that $\sum_{e \in \core}w(e) = \cost(\grphx)$:
\begin{lemma}\label{lem:contrCostToBasis}
Let $K$ be any splitting set.
There exists some component $Q$ such that $\cost(Q) \leq w(B^Q)/N$, where $B^Q$ is a maximum weight basis of $M_Q^K$.
\end{lemma}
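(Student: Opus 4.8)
The plan is to use an averaging argument driven by Theorem~\ref{thm:uniform}. Fix the splitting set $K$. By Theorem~\ref{thm:uniform}, for each $Q \in \compx$ there is a distribution over $\mathcal{B}^K_Q$ so that, when $Q$ is drawn uniformly from $\compx$ and then $B$ is drawn from the chosen distribution over $\mathcal{B}^K_Q$, we have $\Pr{e \in B} \geq N/\card{\compx}$ for every core edge $e \in \core$. First I would compute the expected weight of the removed set under this coupled random process:
\[
	\E{w(B)} = \sum_{e \in \core} w(e)\cdot\Pr{e \in B} \geq \frac{N}{\card{\compx}}\sum_{e \in \core} w(e) = \frac{N}{\card{\compx}}\cost(\grphx),
\]
where the last equality is the stated identity $\sum_{e \in \core} w(e) = \cost(\grphx)$ (which holds because the weight redistribution exactly reassigns the cost of each cleanup edge $f$ to the core edges in its witness set $\Witness{f}$, and every cleanup edge has a nonempty witness set contained in $\core$).

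Next I would relate this to the expected cost of the contracted component. Since each original component $\tilde Q \in \comp$ is present in $\compx$ with multiplicity $N x_{\tilde Q}$, and since $\sum_{\tilde Q} N x_{\tilde Q}(\card{\tilde Q}-1) = N(\card R - 1)$, the number of terminals is preserved; more simply, drawing $Q$ uniformly from $\compx$ gives $\E{\cost(Q)} = \frac{1}{\card{\compx}}\sum_{Q \in \compx}\cost(Q) = \frac{1}{\card{\compx}}\cost(\grphx)$, because $\cost(\grphx) = \sum_{Q \in \compx}\cost(Q)$. Therefore
\[
	\E{w(B)} \;\geq\; \frac{N}{\card{\compx}}\cost(\grphx) \;=\; N\cdot\E{\cost(Q)} \;=\; \E{N\cdot\cost(Q)}.
\]
Hence $\E{w(B) - N\cdot\cost(Q)} \geq 0$, so there exists an outcome in the support of the random process — that is, a component $Q \in \compx$ together with a basis $B \in \mathcal{B}^K_Q$ — with $w(B) \geq N\cdot\cost(Q)$, i.e. $\cost(Q) \leq w(B)/N$. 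Finally, replacing $B$ by a maximum weight basis $B^Q$ of $M_Q^K$ only increases $w(B^Q) \geq w(B)$, giving $\cost(Q) \leq w(B^Q)/N$ as required.

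I do not expect any real obstacle here: the lemma is essentially a packaging of Theorem~\ref{thm:uniform} via linearity of expectation, and the only points needing a word of care are (a) that $M_Q^K$ is nonempty, so that a maximum weight basis exists — this is exactly the nonemptiness of $\mathcal{B}^K_Q$ asserted in the discussion preceding Theorem~\ref{thm:uniform} — and (b) the weight identity $\sum_{e\in\core}w(e)=\cost(\grphx)$, which is immediate from the definition of $w$ once one knows (from Lemma~\ref{lem:witness}) that witness sets are well-defined, nonempty, and contained in $\core$. If one prefers to avoid the random choice over $\mathcal{B}^K_Q$ entirely, the same conclusion follows directly in expectation from any point $p \in \remPcore$ with $p_e \geq N/\card{\compx}$ for all $e\in\core$, but the phrasing above keeps the argument closest to Theorem~\ref{thm:uniform}.
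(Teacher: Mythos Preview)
Your proof is correct and is exactly the argument the paper has in mind: it states the lemma as an ``easy consequence of Theorem~\ref{thm:uniform} and the fact that $\sum_{e\in\core}w(e)=\cost(\grphx)$,'' and your averaging argument via linearity of expectation (plus the final replacement of $B$ by a maximum-weight basis $B^Q$) spells this out precisely.
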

For a given $Q$, a maximum weight basis of $M_Q^K$ can be found via a greedy approach;
all that is needed is an independence oracle.
This we can obtain immediately from our understanding of the rank function of $M_Q^K$; it can be computed using submodular function minimization  (see \eqref{eq:rank} in the next section).
However, while polynomial time, this is quite slow.
We can instead exploit the result that $M_Q^K$ is a gammoid, giving a much faster independence oracle based on solving a maximum flow problem; this is discussed in Appendix~\ref{appendix:sep}.

We are now ready to describe precisely our deterministic algorithm, given in Algorithm~\ref{alg}.
In the algorithm, at each stage we choose a component $Q$ and contract it (in the usual sense, yielding an instance with a smaller vertex set).
Thus at intermediate stages of the algorithm, $\grphx$ will be a feasible blowup graph of some contraction of the original graph $G$.
We also emphasize that the witness sets $W(e)$, and hence also the weights $w(e)$, depend on the blowup graph in the particular iteration.
\begin{algorithm}[h]
\SetEndCharOfAlgoLine{.}
\SetKwInOut{Input}{Input}
\Input{Graph $G$ with edge costs $c$ and terminal set $R$, feasible blowup graph $\grphx$, and splitting set $\core$.}
\KwResult{A Steiner tree $T$.}
\BlankLine
$T \gets \emptyset$\;
\While{$T$ is not a Steiner tree}{
Find a component $Q \in \compx$ and maximum weight basis $B \in \mathcal{B}_Q^K$ with $\cost(Q) \leq w(B)/N$\;
\textit{Cleanup:} Let $F = \{ e \notin \core \mid W(e) \subseteq B\}$\;
\textit{Update:} 
$T \gets T \cup Q$,~~~~$\grphx \gets (\grphx \remove B \remove F) / Q$,~~~~$\core \gets \core \setminus B$\;
}
\caption{A deterministic algorithm for Steiner tree demonstrating a $\ln (4)$ integrality gap.}\label{alg}
\end{algorithm}

We now define, for any blowup graph $\grphx$ and splitting set $\core$, a potential function $\pot{\core}{\grphx}$ by
\[ \pot{\core}{\grphx} := \sum_{e \in E(\grphx)} \cc{e}\harmonic{\card{\Witness{e}}}, \]
where $H(\ell) := 1 + 1/2 + \cdots + 1/\ell$ is the harmonic function.
\begin{theorem}\label{thm:potbound}
	For any minimal splitting set $\core$ and feasible blowup graph $\grphx$, Algorithm~\ref{alg} yields a solution of cost at most $\pot{K}{\grphx}/N$.
\end{theorem}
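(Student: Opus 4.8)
The plan is a potential argument: I will show that one iteration of Algorithm~\ref{alg} decreases $\pot{\core}{\grphx}$, evaluated at the \emph{current} blowup graph and splitting set, by at least $N\cdot\cost(Q)$, where $Q$ is the component contracted in that iteration. Since $\pot{\core}{\grphx}\ge 0$ for every blowup graph, and since the algorithm halts after at most $\card{R}-1$ iterations --- each contraction merges $\card{Q}-1\ge 1$ terminals --- with output tree of cost $\sum_i\cost(Q_i)$, telescoping this per-iteration inequality yields $N\cost(T)=\sum_i N\cost(Q_i)\le\pot{\core}{\grphx}$ for the initial data, which is the claim. To make the per-iteration step meaningful one first records the invariant, maintained throughout the run, that $\grphx$ stays a feasible blowup graph of the current (contracted) instance and $\core$ stays a minimal splitting set of it; this is exactly what allows Lemma~\ref{lem:contrCostToBasis} to be applied in each iteration before termination, so that the algorithm is well defined.

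Now fix an iteration, with chosen component $Q$, maximum-weight basis $B\in\mathcal{B}_Q^{\core}$, cleanup set $F=\{e\notin\core:\Witness{e}\subseteq B\}$, and updated data $\grphx'=(\grphx\remove B\remove F)/Q$ and $\core'=\core\setminus B$. Because $B$ is selected so that $\cost(Q)\le w(B)/N$, it suffices to show that the potential drops by at least $w(B)$. The structural inputs I will need are: (a) the edge multiset of $\grphx'$ is exactly $E(\grphx)\setminus(B\cup F)$, since contracting $Q$ only identifies vertices; (b) every surviving edge of $\core'$ keeps witness set $\{e\}$; and (c) every surviving \secondary\ edge $e\notin B\cup F$ has new witness set $W'(e)\subseteq\Witness{e}\setminus B$. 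Fact (c) is the crux: removing $\Witness{e}\setminus B$ from $\grphx'$ is the same as deleting the superset $B\cup F\cup\Witness{e}$ of $\Witness{e}$ from $\grphx$ and then contracting $Q$, and a pendant edge stays pendant both when more edges are deleted and when terminals are contracted, so minimality of the witness set gives the inclusion.

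Granting (a)--(c), decompose the change $\pot{\core}{\grphx}-\pot{\core'}{\grphx'}$ edge by edge. The edges of $B$ contribute $\sum_{e\in B}\cc{e}\harmonic{1}=\sum_{e\in B}\cc{e}$; the edges of $F$ contribute $\sum_{e\in F}\cc{e}\harmonic{\card{\Witness{e}}}\ge\sum_{e\in F}\cc{e}$; the surviving edges of $\core'$ contribute $0$; and a surviving \secondary\ edge $e$ contributes $\cc{e}\bigl(\harmonic{\card{\Witness{e}}}-\harmonic{\card{W'(e)}}\bigr)$, which by (c) and the elementary estimate $\harmonic{m}-\harmonic{m-k}=\sum_{j=m-k+1}^{m}\tfrac1j\ge\tfrac km$ is at least $\cc{e}\,\card{\Witness{e}\cap B}/\card{\Witness{e}}$. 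Hence
\[
\pot{\core}{\grphx}-\pot{\core'}{\grphx'} \ge \sum_{e\in B}\cc{e} + \sum_{e\in F}\cc{e} + \sum_{e\notin\core\cup F}\cc{e}\,\frac{\card{\Witness{e}\cap B}}{\card{\Witness{e}}}.
\]
On the other hand, expanding $w(B)=\sum_{e\in B}w(e)$ and exchanging the order of summation,
\[
w(B)=\sum_{e\in B}\cc{e}+\sum_{f\notin\core}\card{\Witness{f}\cap B}\,\frac{\cc{f}}{\card{\Witness{f}}}=\sum_{e\in B}\cc{e}+\sum_{f\in F}\cc{f}+\sum_{f\notin\core\cup F}\card{\Witness{f}\cap B}\,\frac{\cc{f}}{\card{\Witness{f}}},
\]
where I used $\Witness{f}\subseteq B$ for $f\in F$. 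The two right-hand sides are identical, so the potential drops by at least $w(B)\ge N\cost(Q)$, which completes the induction step.

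The telescoping sum, the nonnegativity of $\pot{\core}{\grphx}$, and the harmonic-number estimate are routine. The real obstacle is the structural package behind (a)--(c): checking that after removing $B$ and $F$ and contracting $Q$ one again obtains a feasible blowup graph whose minimal splitting set is $\core'$, that no spurious edges appear, and --- most delicately --- fact (c), i.e.\ that contracting $Q$ cannot ``un-clean'' an edge nor enlarge a witness set. This is where the matroid structure of $\mathcal{B}_Q$ (Theorem~\ref{thm:isMatroid}), the characterization of splitting sets, and the uniqueness of witness sets (Lemma~\ref{lem:witness}) have to be combined with some care.
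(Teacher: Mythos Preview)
Your proposal is correct and follows essentially the same approach as the paper's own proof: a telescoping potential argument, reducing to the per-iteration inequality $\Phi_{K_t}(\grphx_t)-\Phi_{K_{t+1}}(\grphx_{t+1})\ge w_t(B_t)$, proved via the harmonic estimate $H(m)-H(m-k)\ge k/m$ together with the identity $w(B)=\sum_e c(e)\,\lvert W(e)\cap B\rvert/\lvert W(e)\rvert$ obtained by swapping sums. The only notable difference is that the paper asserts the \emph{equality} $W_{t+1}(e)=W_t(e)\setminus B_t$ (with the convention $W_{t+1}(e)=\emptyset$ for removed edges), treating all edges uniformly, whereas you split into the four edge types and use only the inclusion $W'(e)\subseteq W(e)\setminus B$; your weaker statement is all that is needed, and your justification for it is slightly more explicit than the paper's.
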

The proof of this theorem (given in Appendix~\ref{appendix:alg}) essentially boils down to showing that in a single step of the algorithm, the expected cost of the contracted component is no larger than the decrease in the potential function scaled down by $1/N$.
Let $\grphx_t$ and $\core_t$ be the blowup graph and splitting set at iteration $t$ of the algorithm, with $B_t$ the selected removal set.
We are able to show that $\pot{\core_t}{\grphx_t} - \pot{\core_{t+1}}{\grphx_{t+1}} \geq w(B_t)$, from which the theorem immediately follows.

From this, we can use an averaging argument to show the $\ln (4)$ integrality gap bound.
Essentially, if $\core$ is chosen randomly from the matroid of possible minimal splitting sets according to an appropriate distribution, it can be shown that 
\[  \E{\pot{K}{\grphx}} \leq \ln (4) \cdot \cost(\grphx). \]
It is also possible to minimize $\Pot$ as a function of $\core$, via a dynamic program. 
The full proof can be found in the appendix: altogether we obtain, recalling $\cost(\grphx) = N\cdot\cost(x)$, 
\begin{theorem}\label{thm:ln4bound}
	For any solution $x$ of \eqref{eq:kLP}, and choosing $\core$ to minimize $\pot{\core}{\grphx}$, Algorithm~\ref{alg} returns a solution of cost at most $\ln(4) \cdot \cost(x)$.
\end{theorem}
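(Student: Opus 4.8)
The plan is to use Theorem~\ref{thm:potbound} to reduce the statement to a purely combinatorial bound on the potential function, and then establish that bound by the probabilistic method, applied component by component. By Theorem~\ref{thm:potbound}, running Algorithm~\ref{alg} with any minimal splitting set $\core$ produces a Steiner tree of cost at most $\pot{\core}{\grphx}/N$. Since we are free to take $\core$ minimizing $\pot{\core}{\grphx}$ over minimal splitting sets, and since $\cost(\grphx)=N\cdot\cost(x)$, it suffices to prove
\[ \min_{\core}\ \pot{\core}{\grphx}\ \le\ \ln(4)\cdot\cost(\grphx). \]
(The minimizing $\core$ can in addition be found efficiently by a dynamic program over the components, which makes the algorithm in the theorem well-specified and polynomial; but for the bound only the existence of a good $\core$ matters.) To prove existence I would put an instance-dependent probability distribution on the minimal splitting sets of $\grphx$ and bound $\E{\pot{\core}{\grphx}}$ from above.

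Next I would decompose over components. A minimal splitting set $\core$ restricts on each component $C\in\compx$ to a set $\core\cap E(C)$ whose complement $E(C)\setminus\core$ is a spanning forest of $C$ with exactly one terminal in each of its trees; conversely, any such choice, made independently in the components of $\grphx$, yields a minimal splitting set of $\grphx$. Since the witness set $\Witness{e}$ is contained in the component of $e$, both sides split: $\pot{\core}{\grphx}=\sum_{C\in\compx}\sum_{e\in E(C)}\cc{e}\harmonic{\card{\Witness{e}}}$ and $\cost(\grphx)=\sum_{C\in\compx}\cost(C)$. Hence it is enough to choose, independently for each component $C$, a distribution over the minimal splitting sets of $C$ with $\E{\sum_{e\in E(C)}\cc{e}\harmonic{\card{\Witness{e}}}}\le\ln(4)\cdot\cost(C)$; linearity of expectation then yields the required bound. (As $\grphx$ is a disjoint union of copies of the original components, this need only be done once per original component.)

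The heart of the matter is a single component $C$, a tree whose leaves are exactly its terminals $t_1,\dots,t_m$. A minimal splitting set $\core\cap E(C)$ partitions $C$ into clusters $T_1,\dots,T_m$ with $t_i\in T_i$, and the core edges then form a tree $\hat C$ on these clusters. One has $\card{\Witness{g}}=1$ for every core edge $g$, and for a non-core edge $e\in T_i$, $\card{\Witness{e}}$ equals the number of core edges incident to $T_i$ lying on the side of $e$ away from $t_i$ --- in particular $\card{\Witness{e}}=1$ whenever $T_i$ is a leaf of $\hat C$. I would therefore build the random splitting set recursively, repeatedly detaching a cluster from $\hat C$ with a suitable cost-dependent probability, and prove $\E{\sum_{e\in E(C)}\cc{e}\harmonic{\card{\Witness{e}}}}\le\ln(4)\cdot\cost(C)$ by induction on $\card{E(C)}$. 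The arithmetic is governed by the identity $\sum_{\ell\ge1}\harmonic{\ell}/2^{\ell}=\ln(4)$ (equivalently $\sum_{\ell\ge1}\tfrac{1}{\ell\,2^{\ell-1}}=\ln 4$) together with $\harmonic{\ell}-\harmonic{\ell-1}=1/\ell$, which controls how much the potential increases each time some edge's witness set grows by one. The extremal components are highly branched; for star components --- the quasi-bipartite case --- one only loses the smaller factor $\tfrac{m-1+\harmonic{m-1}}{m}\le\tfrac{73}{60}$, which is the source of that sharper bound.

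The difficulty lies entirely in this last step, and in particular in choosing the right inductive invariant. A naive edge-by-edge bound $\E{\harmonic{\card{\Witness{e}}}}\le\ln 4$ is false: the unique non-core edge of an $m$-star has a witness set of size $m-1$ whenever it is non-core, and $m\cdot 2^{-(m-2)}<1$ for $m\ge5$, so no distribution can make every edge's expected harmonic factor at most $\ln 4$. The bound must therefore be amortized against the total cost of the component, trading off cheap edges --- which can safely carry a large harmonic factor --- against expensive ones, which one wants to keep in the core (where the factor is $1$). Getting this trade-off to close at exactly $\ln 4$, which amounts to analyzing the natural linear program / dynamic program on the tree, is the delicate part; everything before it is routine bookkeeping made available by Theorem~\ref{thm:potbound}.
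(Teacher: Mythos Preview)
Your outline up through the component-by-component decomposition is exactly the paper's. The divergence, and the gap, is in the final step.

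Your claimed obstruction to the per-edge bound $\E{H(|W(e)|)}\le\ln 4$ is incorrect. In an $m$-star, choose the single cleanup edge uniformly at random; then each edge $e$ satisfies
\[
\E{H(|W(e)|)}=\frac{m-1}{m}\cdot 1+\frac{1}{m}\cdot H(m-1)=\frac{m-1+H(m-1)}{m}\le\frac{73}{60}<\ln 4,
\]
which is precisely the quasi-bipartite computation you cite at the end. (I do not see where the quantity $m\cdot 2^{-(m-2)}$ enters.) So stars are not a counterexample, and the per-edge bound does not force any cost-dependent amortization.

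More importantly, the paper does prove $\E{H(|W(e)|)}\le\ln 4$ for \emph{every} edge, with a cost-independent distribution, after one preprocessing step you did not consider: replace every Steiner vertex of degree greater than $3$ by a binary tree of zero-cost edges, so that each component becomes a binary tree. Rooting at an arbitrary edge, every internal node now has exactly two outgoing edges; flip a fair coin to declare one of them core and the other cleanup. Then for a non-root edge $e$, the size $|W(e)|$ is one more than the number of consecutive cleanup edges on the path from $e$ toward the root, hence stochastically dominated by $1+\mathrm{Geom}(1/2)$, and
\[
\E{H(|W(e)|)}\le\sum_{i\ge 0}H(i+1)\cdot 2^{-(i+1)}=\ln 4.
\]
Linearity of expectation immediately gives $\E{\Pot}\le\ln(4)\cdot\cost(\grphx)$, with no amortization against edge costs. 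The inserted zero-cost edges do not change $\cost(\grphx)$, and the resulting random splitting set projects back to one for the original component without increasing $\Pot$.

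So the ``delicate part'' you flag---a recursive, cost-weighted construction analyzed by induction---is neither carried out in your proposal nor actually needed. The missing idea is the binary-tree reduction, which converts the problem into the one-line geometric-series identity you already quoted.
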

We emphasize again that while we have described everything in terms of the blowup graph, it is possible to implement Algorithm~\ref{alg} directly in terms of the LP solution, yielding a polynomial time algorithm. Details will be provided in the full version.

\subsection{Quasi-bipartite graphs}
The situation is much simplified in the case of quasi-bipartite graphs. In this case, we may choose $\core$ to consist of all edges except for the cheapest in each component. This clearly minimizes $\Pot$, and it can be shown that
\begin{lemma}\label{lem:quasi}
	Let $\core = E(\grphx) \setminus E_{\text{\it min}}$, where $E_{\text{\it min}}$ consists of a cheapest edge from every component. Then 
	\[ \Pot \leq \tfrac{73}{60}\cdot\cost(\grphx). \]
\end{lemma}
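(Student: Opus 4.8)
The plan is to bound $\pot{\core}{\grphx}$ edge by edge, using the very simple structure of components in a quasi-bipartite graph. In such a graph every component is a tree whose internal (Steiner) vertices are non-terminals; since non-terminals cannot be adjacent, each Steiner vertex is a ``star center'' adjacent only to terminals, and a component with $k$ terminals is either a single edge ($k=2$) or a star centered at one Steiner vertex with $k$ leaves (up to the usual subdivision subtleties, which do not affect the argument). Fix one such component $C$ with $|C|=k$ and edges $e_1,\dots,e_k$ ordered by nondecreasing cost; $e_1$ is the unique edge of $C$ outside $\core$, and $e_2,\dots,e_k \in \core$. First I would compute the witness sets: deleting $e_i$ for $i \ge 2$ leaves a smaller star (or isolated terminal) plus a pendant terminal, and making $e_1$ pendant so it can be cleaned up requires deleting \emph{every other} core edge of $C$, i.e. $\Witness{e_1} = \{e_2,\dots,e_k\}$, so $\card{\Witness{e_1}} = k-1$, while $\card{\Witness{e_i}} = 1$ for $i \ge 2$. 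Hence the contribution of $C$ to the potential is $\cc{e_1}\harmonic{k-1} + \sum_{i=2}^{k}\cc{e_i}$.

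The next step is to bound this contribution by $\tfrac{73}{60}\cost(C) = \tfrac{73}{60}\sum_{i=1}^{k}\cc{e_i}$ for every $k \ge 2$; summing over all components of $\grphx$ then gives the lemma, since $\cost(\grphx) = \sum_C \cost(C)$ and the potential decomposes as a sum over components (a core edge lies in exactly one component, and each secondary edge's cost is fully charged within its own component, as $\sum_{e \in \core}\wt e = \cost(\grphx)$ specializes to a per-component identity here). So it suffices to show, for $\cc{e_1} \le \cc{e_2} \le \cdots \le \cc{e_k}$,
\[
  \cc{e_1}\harmonic{k-1} + \sum_{i=2}^{k}\cc{e_i} \;\le\; \tfrac{73}{60}\sum_{i=1}^{k}\cc{e_i}.
\]
Equivalently, writing $a = \cc{e_1}$ and $b = \sum_{i=2}^k \cc{e_i} \ge (k-1)a$, we need $a\,\harmonic{k-1} + b \le \tfrac{73}{60}(a+b)$, i.e. $a\bigl(\harmonic{k-1} - \tfrac{73}{60}\bigr) \le \tfrac{13}{60}b$. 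If $\harmonic{k-1} \le \tfrac{73}{60}$ the left side is nonpositive and there is nothing to prove; this covers $k-1 \le 2$, i.e. $k \le 3$ (since $\harmonic{1}=1$, $\harmonic{2}=\tfrac32 > \tfrac{73}{60}$ --- wait, $\tfrac{73}{60}\approx 1.217$ and $\harmonic 2 = 1.5$, so actually only $k=2$ is free). For $k \ge 3$ I would use $b \ge (k-1)a$: it then suffices that $\harmonic{k-1} - \tfrac{73}{60} \le \tfrac{13}{60}(k-1)$, i.e. $\harmonic{m} \le \tfrac{73}{60} + \tfrac{13}{60}m$ for all $m \ge 1$. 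This is the worst case exactly when the constraint $b = (k-1)a$ is tight, meaning all edges of $C$ have equal cost; and the inequality $\harmonic m \le \tfrac{73}{60} + \tfrac{13}{60}m$ holds at $m=1$ ($1 \le \tfrac{86}{60}$) and for all larger $m$ since the right side grows by $\tfrac{13}{60}$ per step while $\harmonic{m}$ grows by $1/(m+1) \le 1/3 < \tfrac{13}{60}\cdot\frac{?}{}$... I should double-check: $1/(m+1)$ for $m \ge 1$ is at most $1/2 > 13/60$, so a naive telescoping fails at $m=1$; instead the binding case is $m=2$, giving exactly $\harmonic 2 = \tfrac32 = \tfrac{73}{60}+\tfrac{26}{60} = \tfrac{99}{60}$? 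No --- $\tfrac{99}{60} = 1.65 \ne 1.5$. So in fact the extremal configuration is not ``all edges equal''; the true worst case balances $a$ small against the harmonic growth, and the $\tfrac{73}{60}$ constant comes from optimizing over $k$ and over the split between $a$ and $b$ simultaneously.

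The main obstacle, then, is pinning down the exact extremal instance. I would set it up as a small optimization: maximize $\dfrac{a\harmonic{k-1} + b}{a+b}$ over $a > 0$, $b \ge (k-1)a$, $k \in \naturals_{\ge 2}$. For fixed $k$ the ratio is monotone in $b$ (increasing in $b$ iff $\harmonic{k-1} < 1$, decreasing otherwise), so since $\harmonic{k-1}\ge 1$ the ratio is maximized at the smallest feasible $b$, namely $b = (k-1)a$ (all edges equal), giving ratio $\dfrac{\harmonic{k-1} + (k-1)}{k}$. So I do get the ``all edges equal'' reduction after all, and I then maximize $f(k) := \dfrac{\harmonic{k-1}+k-1}{k}$ over integers $k \ge 2$: $f(2) = 1$, $f(3) = \tfrac{(3/2)+2}{3} = \tfrac{7}{6}$, $f(4) = \tfrac{(11/6)+3}{4} = \tfrac{29}{24}$, $f(5) = \tfrac{(25/12)+4}{5} = \tfrac{73}{60}$, $f(6) = \tfrac{(137/60)+5}{6} = \tfrac{437}{360} < \tfrac{73}{60} = \tfrac{438}{360}$, and $f$ decreases thereafter (one checks $f(k+1) \le f(k)$ for $k \ge 5$, or more cleanly that $f(k) \to 1$ and the finitely many early values are dominated by $f(5)$). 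Thus the maximum of $f$ is $f(5) = \tfrac{73}{60}$, attained by a $5$-terminal star with all edge costs equal, which is precisely the known worst-case quasi-bipartite instance. This completes the proof; the only real content beyond the witness-set computation is this elementary discrete maximization, and I would present the finite check of $f(2),\dots,f(6)$ together with a short monotonicity argument for $k \ge 5$.
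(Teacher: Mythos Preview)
Your proposal is correct and follows essentially the same approach as the paper: decompose $\Pot$ by components, observe that in a quasi-bipartite star the single cleanup edge has witness set of size $k-1$ while each core edge has witness size $1$, reduce to the worst case of equal edge costs, and verify that $\dfrac{H(k-1)+k-1}{k}$ is maximized at $k=5$ with value $\tfrac{73}{60}$. The middle of your write-up meanders through some false starts before you arrive at the clean optimization in the final paragraph; the paper simply states the bound $\sum_i c(e_i)H(|W(e_i)|) \le (k-1+H(k-1))\cdot\cost(C)/k$ and that $1+\tfrac{H(k-1)-1}{k}$ is maximized at $k=5$, so for a final version you should keep only your last paragraph (and perhaps note explicitly, as you implicitly derived, that $f(k+1)\le f(k)$ is equivalent to $H(k-1)\ge 2$, hence holds for all $k\ge 5$).
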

A $73/60 < 1.217$ bound on the integrality gap immediately follows from Theorem~\ref{thm:potbound}.
One of the major drawbacks of relying on \eqref{eq:kLP}, or any of the hypergraphic LPs, is that solving them is computational intensive; in general, to obtain a $1+\epsilon$ approximation, nothing better than $n^{2^{\Omega(1/\epsilon)}}$ time is known.
This can be improved somewhat to $n^{\Omega(1/\epsilon)}$ in quasi-bipartite graphs, but this is still very slow.
We show how to sidestep this issue and obtain a reasonable running time for quasi-bipartite graphs by instead solving the much more tractable bidirected cut relaxation, which has only $O(n^2)$ variables.
Combined with the fact that we do not need to re-solve the LP in each iteration, we obtain a markedly faster algorithm than the one of Byrka et al.~\cite{byrka_2011_steiner}.

More precisely, we show how a solution to the bidirected cut relaxation can be transformed into a solution to \eqref{eq:kLP} with the same cost, via a natural greedy procedure.
One step of the transformation consists of taking, from a star centered around a Steiner vertex, all arcs with incoming flow and one arc with outgoing flow.
This yields one component for \eqref{eq:kLP}; the capacities are then uniformly reduced on these edges and the process is continued.
The details of this are given in Appendix~\ref{sec:equivalence}.
Previously, \cite{chakrabarty_2010_hypergraphic} showed that the bidirected cut relaxation always has the same objective value as the hypergraphic relaxations, suggesting that such a transformation should exist, but the question remained open. 

\section{Deeper into the matroid structure}\label{sec:deeper}

In this section, we discuss in more detail the heart of our arguments; uncovering the matroid structure of edge removals, and showing appropriate uniform removal probabilities after the random contraction of a component.

In what follows, we will often need to refer to the terminal set of a component $C$, so we will again abuse notation and write, e.g., $|C|$ for the number of terminals in $C$.
Define $\slkx : 2^R \rightarrow \mathbb{N}$ by
\begin{equation}\label{eq:h}
	\slkx(S)  = N(|S|-1) - \sum_{C\in \compx}  (|S\cap C|-1)^+. 
\end{equation}
It is immediate from \eqref{eq:kLP} that $\grphx$ is feasible if and only if
\begin{equation}\label{eq:constraints}
	\slkx(S) \geq 0 \quad \forall S \subseteq R, S \neq \emptyset \quad\quad \text{and} \quad\quad \slkx(R) = 0.
\end{equation}
Indeed, $\slkx(S)$ is, up to scaling, simply the \emph{slack} (or if negative, violation) of the corresponding constraint in \eqref{eq:kLP}.
Two important properties of $\slkx$ are the following:
\begin{lemma}\label{lem:hsubm}
For any blowup graph $\grphx$, 
\begin{compactenum}[i)]
\item\label{item:hsubm}	
	$\slkx$ is intersecting submodular, i.e., for any two sets $S_1, S_2 \subseteq \grd$ with $S_1 \cap S_2 \neq \emptyset$, \[ \slkx(S_1 \cup S_2) + \slkx(S_1 \cap S_2) \leq \slkx(S_1) + \slkx(S_2), \quad \text{and} \]
\item\label{item:boundedInc} for any $F \subseteq E(\grphx)$ and $\emptyset \neq S\subseteq R$,
$\slkx(S)\leq \slk_{\grphx \remove F}(S)\leq \slkx(S)+|F|$.
\end{compactenum}
\end{lemma}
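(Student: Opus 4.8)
\emph{Proof plan.}
For part~\ref{item:hsubm}, the key observation is that $\slkx$ is a modular function minus a sum of supermodular ones. Write $\slkx(S) = N(|S|-1) - g(S)$ with $g(S) := \sum_{C\in\compx}(|S\cap C|-1)^+$. Since $|S_1\cup S_2| + |S_1\cap S_2| = |S_1|+|S_2|$, the map $S\mapsto N(|S|-1)$ is modular, so it suffices to show $g$ is supermodular (for intersecting, indeed all, pairs). Decompose $g=\sum_{C\in\compx} g_C$ with $g_C(S)=\phi(|S\cap C|)$, $\phi(n):=(n-1)^+$, and note $\phi$ is convex on $\{0,1,2,\dots\}$ (successive differences $0,1,1,1,\dots$ are nondecreasing). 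For any $S_1,S_2$ put $a_i=|S_i\cap C|$, $a_{\cap}=|(S_1\cap S_2)\cap C|$, $a_{\cup}=|(S_1\cup S_2)\cap C|$; then $a_{\cap}+a_{\cup}=a_1+a_2$ and $a_{\cap}\le a_1,a_2\le a_{\cup}$, so convexity of $\phi$ gives $\phi(a_{\cup})+\phi(a_{\cap})\ge\phi(a_1)+\phi(a_2)$, i.e.\ $g_C$ is supermodular. Summing over $C$ and re-adding the modular term proves~\ref{item:hsubm}.

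For part~\ref{item:boundedInc}, fix $\emptyset\ne S\subseteq R$ and set $g_H(S):=\sum_{C\in\compof{H}}(|S\cap C|-1)^+$, so that $\slk_H(S)=N(|S|-1)-g_H(S)$; the claim is equivalent to $g_{\grphx}(S)-|F|\le g_{\grphx\remove F}(S)\le g_{\grphx}(S)$. I will use the elementary inequalities (for nonnegative integers) $\sum_i (a_i-1)^+ \le \bigl(\sum_i a_i - 1\bigr)^+$ and $(a-1)^+ + (b-1)^+ \ge (a+b-1)^+ - 1$, monotonicity of $n\mapsto(n-1)^+$, and the following structural facts about blowup graphs: the components are pairwise edge-disjoint trees; every terminal is a leaf (hence never an internal vertex) of any component containing it; and every subtree of $\grphx$ whose leaf set equals its terminal set is contained in a unique component of $\grphx$.

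To show $g_{\grphx\remove F}(S)\le g_{\grphx}(S)$ (the lower bound on the slack): by the last structural fact each component $C'$ of $\grphx\remove F$ is contained in a unique component $\kappa(C')$ of $\grphx$, and for a fixed $C\in\compx$ the components $C'$ with $\kappa(C')=C$ are pairwise edge-disjoint subtrees of $C$; since every terminal has degree one in any subtree containing it, these $C'$ have pairwise disjoint terminal sets, all contained in that of $C$. Hence $g_{\grphx\remove F}(S)=\sum_{C\in\compx}\sum_{C':\kappa(C')=C}(|S\cap C'|-1)^+\le\sum_{C\in\compx}(|S\cap C|-1)^+=g_{\grphx}(S)$ by the first elementary inequality and monotonicity.

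To show $g_{\grphx\remove F}(S)\ge g_{\grphx}(S)-|F|$ (the upper bound on the slack), induct on $|F|$, so it suffices to take $F=\{e\}$. If $e$ lies in no component, $\compof{\grphx\remove e}=\compx$ and $g$ is unchanged. Otherwise $e$ lies in a unique component $C_0$; distinct components share only terminal vertices, and a merged component would have such a terminal as an internal vertex, contradicting the structural facts, so no components merge and $\compof{\grphx\remove e}$ is $\compx$ with $C_0$ replaced by the at most two components obtained from the tree $C_0\remove e$ after pruning non-terminal leaves. A short case check on whether the endpoints of $e$ are terminals (if an endpoint is a terminal it is a leaf of $C_0$, which forces $C_0=\{e\}$) shows these pieces have disjoint terminal sets and that replacing $C_0$ by them decreases $\sum_{C}(|S\cap C|-1)^+$ by at most one, using the second elementary inequality; thus $g_{\grphx\remove e}(S)\ge g_{\grphx}(S)-1$, and the induction closes. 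The main obstacle is entirely in part~\ref{item:boundedInc}: the bookkeeping for a single edge deletion --- verifying that no components merge (which relies on the component definition forbidding internal terminals), that the resulting pieces have essentially partitioned terminal sets, and marshalling the $(\cdot-1)^+$ inequalities across the cases. Part~\ref{item:hsubm} is a one-line computation once the convexity of $n\mapsto(n-1)^+$ is isolated.
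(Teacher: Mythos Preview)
Your proof is correct and follows essentially the same approach as the paper. For part~(i), the paper simply asserts that $S\mapsto(|S\cap C|-1)^+$ is intersecting supermodular; you spell this out via convexity of $n\mapsto(n-1)^+$, which is the same content. For part~(ii), the paper also reduces to a single edge deletion and directly computes
\[
\slk_{\grphx-e}(S)-\slkx(S)=(|S\cap C|-1)^+-(|S\cap C_1|-1)^+-(|S\cap C_2|-1)^+\in\{0,1\},
\]
where $C$ is the terminal set of the affected component and $C_1,C_2$ its partition after deleting $e$; your two elementary inequalities on $(\cdot-1)^+$ amount to the same computation.

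One small slip: your parenthetical ``if an endpoint is a terminal it is a leaf of $C_0$, which forces $C_0=\{e\}$'' is only correct when \emph{both} endpoints are terminals. If just one endpoint $u$ is a terminal, then $u$ is indeed a leaf of $C_0$, but $C_0$ can be arbitrarily large; deleting $e$ then yields the singleton $\{u\}$ (contributing $0$ to the sum) together with one remaining subtree. This does not affect your argument, since the terminal sets still partition and your inequality $(a-1)^++(b-1)^+\ge(a+b-1)^+-1$ applies with $b\le 1$. Also, your worry about components merging after an edge deletion is unnecessary: removing edges can only disconnect.
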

\begin{proof}
\ref{item:hsubm})
	This follows immediately from the fact that for any $C \subseteq R$, the function $S \rightarrow (|S\cap C|-1)^+$ is intersecting supermodular.

	\ref{item:boundedInc})
The removal of any additional edge $e\in E(\grphx)$ from $\grphx$ leads to a split of some component $C$ of $\grphx$ into
subcomponents $C_1,C_2$ with $C_1\cap C_2 = \emptyset$,
$C_1\cup C_2=C$. Hence,
\[ \slk_{\grphx - e}(S)-\slkx(S)=(|S\cap C|-1)^+ - (|S\cap C_1|-1)^+ -(|S\cap C_2|-1)^+ \in \{0,1\}, \]
which leads to $\slkx(S)\leq \slk_{\grphx - e}(S) \leq \slkx(S)+1$. Applying this repeatedly yields the claim.
\end{proof}
An interesting consequence, that essentially follows by intersecting submodularity of $\slkx$
and standard uncrossing techniques, is that any basic feasible solution to
\eqref{eq:kLP} has a support of size bounded by $|R|-1$
(see, e.g., \cite{goemans_2006_minimum} for an example of this reasoning).
For an equivalent version of~\eqref{eq:kLP}, this result was already
obtained through a rather involved technique by
Chakrabarty et al.~\cite{chakrabarty_2010_hypergraphic}.

For convenience, define
\[ \slkF(S) := h_{\grphx \remove F}(S)  = N(|S|-1) - \sum_{C\in \compF} (|S\cap C|-1)^+. \]

The following lemma describes feasibility of $(\grphx \contract Q) \remove F$ in a convenient form, and also shows that we need only consider constraints corresponding to subsets containing $Q$. 
\begin{lemma}\label{lem:hstarfeas}
	The blowup graph $(\grphx \contract Q) \remove F$ is feasible if and only if 
	$\slkF(R) = N(|Q|-1)$ and 
	$\slkF(S) \geq N(|Q|-1)$ for all $S \supseteq Q$.
\end{lemma}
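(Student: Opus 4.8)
The plan is to reduce everything to the slack characterization \eqref{eq:constraints} applied to the blowup graph $(\grphx\contract Q)\remove F$. First I would compute its slack function. Since $F\subseteq\grd$ only deletes edges of the $\grphx$-part, $(\grphx\contract Q)\remove F$ consists of $\grphx\remove F$ together with $N$ fresh copies of $Q$; these copies attach to the rest of the graph only at terminals and so remain separate components, whence the component set of $(\grphx\contract Q)\remove F$ is $\compF$ together with $N$ copies of $Q$, and \eqref{eq:h} gives
\[ h_{(\grphx\contract Q)\remove F}(S) \;=\; \slkF(S) - N\,(|S\cap Q|-1)^+ \qquad\text{for all } S\subseteq R. \]
By \eqref{eq:constraints}, $(\grphx\contract Q)\remove F$ is feasible if and only if $\slkF(R)=N\,(|Q|-1)$ (using $|Q|\ge 2$) and $\slkF(S)\ge N\,(|S\cap Q|-1)^+$ for every nonempty $S\subseteq R$. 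The ``only if'' direction is then immediate: the equality is the first claimed condition, and restricting the inequality to sets $S\supseteq Q$ (for which $|S\cap Q|=|Q|$) gives exactly the second.

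For the ``if'' direction, I must recover $\slkF(S)\ge N\,(|S\cap Q|-1)^+$ for \emph{every} nonempty $S$ from the hypotheses $\slkF(R)=N(|Q|-1)$ and $\slkF(S)\ge N(|Q|-1)$ for all $S\supseteq Q$. If $|S\cap Q|\le 1$ the right-hand side is $0$, and since $\grphx$ is feasible, $\slkF(S)\ge\slkx(S)\ge 0$ by the second part of Lemma~\ref{lem:hsubm}. If $|S\cap Q|\ge 2$, then $S\cap Q\ne\emptyset$, so intersecting submodularity of $\slkF$ (which holds by the same argument as the first part of Lemma~\ref{lem:hsubm}) gives
\[ \slkF(S) \;\ge\; \slkF(S\cup Q) + \slkF(S\cap Q) - \slkF(Q). \]
Here $\slkF(S\cup Q)\ge N(|Q|-1)$ since $S\cup Q\supseteq Q$, and $\slkF(Q)-\slkF(S\cap Q)\le N\,|Q\setminus S|$ because $\slkF(A)-\slkF(B)\le N\,|A\setminus B|$ whenever $B\subseteq A$ (each summand $(|\,\cdot\cap C|-1)^+$ is nondecreasing in its first argument, and $|A|-|B|=|A\setminus B|$). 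Substituting yields $\slkF(S)\ge N(|Q|-1)-N|Q\setminus S| = N(|S\cap Q|-1) = N(|S\cap Q|-1)^+$, and combined with $\slkF(R)=N(|Q|-1)$ the displayed characterization shows feasibility.

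I expect two points to need care. The first is the bookkeeping in the opening display, and in particular verifying that adjoining $N$ copies of $Q$ never merges them with existing components of $\grphx\remove F$ --- this is precisely what the blowup-graph conventions are arranged to guarantee. The second, and more conceptual, is the use of feasibility of $\grphx$ in the ``if'' direction: the constraints indexed by sets $S$ disjoint from $Q$ do \emph{not} follow from the hypotheses on supersets of $Q$ alone, and genuinely rely on $\slkx\ge 0$ (an infeasible $\grphx$ whose only violated constraints avoid $Q$ would otherwise be a counterexample). Everything else is routine manipulation of $\slkF$.
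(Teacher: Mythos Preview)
Your proposal is correct and follows the same overall plan as the paper: compute $h_{(\grphx\contract Q)\remove F}(S)=\slkF(S)-N(|S\cap Q|-1)^+$, translate feasibility into the conditions $\slkF(R)=N(|Q|-1)$ and $\slkF(S)\ge N(|S\cap Q|-1)^+$ for all nonempty $S$, and then argue that the constraints with $S\supseteq Q$ suffice.

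The only real difference is in how that last reduction is carried out. The paper shows directly, via inclusion--exclusion on $|S|,|Q|,|S\cup Q|,|S\cap Q|$ together with monotonicity of $(|\cdot\cap C|-1)^+$, that $h_{\grphx'}(S\cup Q)\le h_{\grphx'}(S)$ whenever $S\cap Q\neq\emptyset$; so any violated $S$ yields a violated $S\cup Q$. You instead invoke intersecting submodularity of $\slkF$ and the Lipschitz bound $\slkF(Q)-\slkF(S\cap Q)\le N|Q\setminus S|$ to obtain $\slkF(S)\ge N(|S\cap Q|-1)$. These are two presentations of the same inequality (your submodularity plus Lipschitz chain is exactly what underlies the paper's two displayed lines), so the distinction is cosmetic rather than conceptual. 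Your version has the mild advantage of reusing Lemma~\ref{lem:hsubm} rather than recomputing; the paper's version is slightly more self-contained.
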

\begin{proof}
	Let $\grphx' = (\grphx \contract Q) \remove F$.
	Then $\grphx'$ is feasible iff $\slk_{\grphx'}(S) \geq 0$ for all $S \subseteq R$, $S \neq \emptyset$, with equality for $S=R$.
	But 
	\begin{equation}\label{eq:blowcompare} 
		\slk_{\grphx'}(S) = \slkF(S) - N(|S \cap Q|-1)^+,
	\end{equation}
 and so this can be equivalently stated as $\slkF(S) \geq N(|S\cap Q|-1)^+$ for all $S \neq \emptyset$, and $\slkF(R) = N(|Q|-1)$.

	All that needs to be proved then is that only the constraints for $S \supseteq Q$ need to be considered.
	So suppose $S$ is a violated set: $\slk_{\grphx'}(S) < 0$.
	Then $S \cap Q \neq \emptyset$, otherwise $\slk_{\grphx'}(S) = \slkF(S) \geq \slkx(S) \geq 0$ by feasibility of $\grphx$.
	But for any such $S$, %
\begin{flalign*}
&& N(|S|-1) - N(|S\cap Q|-1)^+ ~&=~
  N(|S\cup Q|-1) - N(|Q|-1)^+& \\
  \text{and clearly} && \qquad \qquad \qquad
\sum_{C\in \compF} (|S\cap C|-1)^+
~&\leq~ \sum_{C\in \compF} (|(S\cup Q)\cap C|-1)^+.&
\end{flalign*} 
Subtracting and using~\eqref{eq:blowcompare}, we obtain that $\slk_{\grphx'}(S\cup Q) \leq \slk_{\grphx'}(S)$.
Since $S$ was a violating set, so is $S\cup Q$.
\end{proof}

Let $\grd$ be any subset of $E(\grphx)$, and define
$r_Q : \grd \rightarrow \naturals$ by
\begin{equation}\label{eq:rank} r_Q(F) = \min_{S \supseteq Q} \slkF(S). 
\end{equation}
\todo{Maybe some explanation about $r_Q$, maybe a little further down, explaining some intuition why it makes sense that this is the rank function.}
We will show:
\begin{proposition}\label{prop:matroid}
	The function $r_Q$ is the rank function of a matroid of rank $N(|Q|-1)$. %
\end{proposition}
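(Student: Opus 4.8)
The plan is to verify that $r_Q$ satisfies the three defining axioms of a matroid rank function on the ground set $\grd$: (R1) $0 \le r_Q(F) \le |F|$; (R2) monotonicity, $F \subseteq F' \Rightarrow r_Q(F) \le r_Q(F')$; and (R3) submodularity, $r_Q(F) + r_Q(F') \ge r_Q(F \cup F') + r_Q(F \cap F')$. Throughout I would work with the explicit formula $r_Q(F) = \min_{S \supseteq Q} \slkF(S)$, together with the identity $\slkF(S) = \slkx(S) + |F \cap \delta(S)|$-type bookkeeping — more precisely, Lemma~\ref{lem:hsubm}\ref{item:boundedInc} in the form $\slkx(S) \le \slkF(S) \le \slkx(S) + |F|$, and the fact (from the proof of Lemma~\ref{lem:hsubm}\ref{item:boundedInc}) that removing one edge from a component splits it and raises $\slk_{\cdot}(S)$ by exactly $0$ or $1$.

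For (R1): the lower bound $r_Q(F) \ge 0$ is immediate since $\slkF(S) \ge \slkx(S) \ge 0$ for $S \supseteq Q \supseteq \emptyset$ by feasibility of $\grphx$ (note $\slkx(Q) \ge 0$ in particular). The upper bound $r_Q(F) \le |F|$ follows by taking $S = R$: $r_Q(F) \le \slkF(R) \le \slkx(R) + |F| = |F|$, using $\slkx(R) = 0$ from feasibility. This argument with $S = R$ also pins down the claimed rank: $r_Q(\grd) = \min_{S \supseteq Q}\slkx(S)$ over... wait — actually the cleanest route to "rank $= N(|Q|-1)$" is to take $F = \grd$, when $\grphx \remove \grd$ has no edges at all, so every component is a single terminal, giving $\slk_{\grd}(S) = N(|S|-1)$; minimizing over $S \supseteq Q$ (and $S \ne \emptyset$ since $Q \ne \emptyset$) the minimum is attained at $S = Q$, yielding $r_Q(\grd) = N(|Q|-1)$. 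Monotonicity (R2) is clear: if $F \subseteq F'$ then $\slk_{\grphx \remove F'}(S) \ge \slk_{\grphx \remove F}(S)$ for every $S$ (removing more edges only splits components further, never merges), so the pointwise-larger function has a larger minimum over $S \supseteq Q$.

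The main work, and the expected obstacle, is submodularity (R3). Here I would use the standard uncrossing argument enabled by Lemma~\ref{lem:hsubm}\ref{item:hsubm}. Let $S_1 \supseteq Q$ attain the minimum defining $r_Q(F)$ and $S_2 \supseteq Q$ attain the minimum for $r_Q(F')$. Since $S_1 \cap S_2 \supseteq Q \ne \emptyset$, intersecting submodularity of the relevant slack functions applies. The delicate point is that $\slkF$ and $\slkF'$ are defined with respect to different edge sets, so I cannot uncross naively with a single function; instead I would express things via $\slk_{\grphx - (F\cap F')}$ and account for the edges in $F \setminus F'$ and $F' \setminus F$ separately, using that each such edge contributes $0$ or $1$ to the slack of a given set and that these contributions behave monotonically under union/intersection of the sets $S_i$ (an edge that "helps" $S_1$ or $S_2$ also helps $S_1 \cup S_2$, etc.). Concretely, I expect the inequality
\[
\slkF(S_1 \cup S_2) + \slkF'(S_1 \cap S_2) \;\le\; \slkF(S_1) + \slkF'(S_2)
\]
(or its variant with the two function-subscripts swapped on the left) to hold by combining intersecting submodularity of $\slkx$ with the edge-by-edge split accounting; granting this, and noting $S_1 \cup S_2 \supseteq Q$ feeds into the min for $r_Q(F)$... actually for the cross-terms one wants $r_Q(F\cup F')$ and $r_Q(F \cap F')$, so the correct pairing is to show $\slk_{\grphx-(F\cup F')}(S_1\cup S_2) + \slk_{\grphx-(F\cap F')}(S_1\cap S_2) \le \slkF(S_1)+\slkF'(S_2)$; then $r_Q(F\cup F') + r_Q(F\cap F') \le$ LHS $\le r_Q(F) + r_Q(F')$. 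Verifying this one inequality carefully — tracking which deleted edges split which components relative to the sets $S_1, S_2, S_1\cup S_2, S_1\cap S_2$ — is the crux; everything else is bookkeeping. An alternative, possibly cleaner, route worth trying is to recognize $r_Q$ as obtained from the intersecting-submodular function $S \mapsto \slkF(S)$ restricted to sets containing $Q$ (equivalently, work in the contracted ground set where $Q$ is a single terminal, where $\slk$ becomes fully submodular), and then invoke the standard fact that truncations/contractions of polymatroids via such min-formulas yield matroid rank functions; but I would still need the edge-count dependence on $F$ handled, so I anticipate the hands-on uncrossing to be unavoidable.
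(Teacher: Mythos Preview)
Your treatment of (R1), (R2), and the computation $r_Q(E(\grphx)) = N(|Q|-1)$ is correct and coincides with the paper's argument essentially line for line.

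The gap is in (R3). The inequality you aim for,
\[
h_{\grphx-(F\cup F')}(S_1\cup S_2) \;+\; h_{\grphx-(F\cap F')}(S_1\cap S_2) \;\le\; h_{\grphx-F}(S_1) \;+\; h_{\grphx-F'}(S_2),
\]
is \emph{false} even when $S_1,S_2$ are minimizers. Take $N=1$ and a single star component on terminals $\{a,b,c,d\}$ with edges $e_a,e_b,e_c,e_d$ to a common Steiner centre; let $Q=\{c,d\}$, $F=\{e_a\}$, $F'=\{e_b\}$, and pick the (legitimate) minimizers $S_1=\{b,c,d\}$, $S_2=\{a,c,d\}$. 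Then the right-hand side is $0+0$, while $S_1\cup S_2=\{a,b,c,d\}$, $S_1\cap S_2=\{c,d\}$ give left-hand side $2+0$. The point is that $S_1\cup S_2$ can be a terrible witness for $r_Q(F\cup F')$: once both $e_a$ and $e_b$ are deleted, the terminals $a,b$ are isolated and any set containing them has inflated slack. No amount of edge-by-edge bookkeeping will repair this pairing of sets with edge-sets; submodularity holds here only because one is free to choose \emph{other} witnesses for $r_Q(F\cup F')$ and $r_Q(F\cap F')$.

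The paper sidesteps this by proving the diminishing-returns form $r_Q(F_1+e)-r_Q(F_1)\ge r_Q(F_2+e)-r_Q(F_2)$ for $F_1\subseteq F_2$, reducing to $|F_2\setminus F_1|=1$. Two observations drive the proof: (i) the increment $h_{\grphx-(F+e)}(S)-h_{\grphx-F}(S)\in\{0,1\}$ is nonincreasing in $F$ and nondecreasing in $S$; and (ii) because $F_1$ and $F_2$ differ by one edge, intersecting submodularity of $h_{\grphx-F_i}$ forces the \emph{minimal} minimizer $S_2^*$ for $F_2$ to lie inside the \emph{maximal} minimizer $S_1^*$ for $F_1$. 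Chaining (i) through the containment $S_2^*\subseteq S_1^*$ gives the result. Your closing alternative does not help either: intersecting submodularity of $S\mapsto h_{\grphx-F}(S)$ for each fixed $F$ says nothing directly about submodularity of $F\mapsto \min_{S\supseteq Q} h_{\grphx-F}(S)$.
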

Once we have this, it is straightforward to show that this matroid precisely describes the minimal edge removals:
\begin{theorem}\label{thm:isMatroidAgain}
	The set of bases of the matroid defined by $r_Q$ is precisely $\minrems{Q}$.
\end{theorem}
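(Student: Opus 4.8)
The plan is to use Proposition~\ref{prop:matroid} as a black box and simply reconcile two descriptions of the same collection of edge sets: the bases of the matroid $M_Q$ with rank function $r_Q$, and the family $\minrems{Q} = \mathcal{B}_Q$ of minimal feasible removal sets. First I would recall that a subset $F \subseteq \grd$ is independent in $M_Q$ iff $r_Q(F) = |F|$, and that $F$ is a basis iff it is independent of size equal to the rank $N(|Q|-1)$. By Lemma~\ref{lem:hstarfeas}, $(\grphx \contract Q) \remove F$ is feasible iff $\slkF(R) = N(|Q|-1)$ and $\slkF(S) \geq N(|Q|-1)$ for all $S \supseteq Q$; equivalently, using the definition $r_Q(F) = \min_{S \supseteq Q} \slkF(S)$ and noting $R \supseteq Q$, feasibility says exactly that $r_Q(F) = N(|Q|-1)$ (the minimum is attained at $S = R$, where $\slkF$ cannot exceed $N(|Q|-1)$ since removing $F$ from $\grphx$ — which is feasible, so $\slkx(R) = 0$ — gives $\slkF(R) = |F|$, and we need $|F| \le N(|Q|-1)$ as part of feasibility; I should state this cleanly). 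So the feasible removal sets are precisely $\{F : r_Q(F) = N(|Q|-1)\}$.

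Next I would show that among feasible removal sets, the minimal ones are exactly the bases. One direction: if $F$ is a basis of $M_Q$, then $r_Q(F) = N(|Q|-1)$, so $(\grphx \contract Q)\remove F$ is feasible; and for any proper subset $F' \subsetneq F$ we have $r_Q(F') \le |F'| < |F| = N(|Q|-1)$ (the first inequality because rank functions are subcardinal), so $F'$ is not a feasible removal set — hence $F$ is minimal, i.e. $F \in \minrems{Q}$. Conversely, if $F \in \minrems{Q}$, then $r_Q(F) = N(|Q|-1)$ by feasibility; I claim $F$ is independent in $M_Q$, i.e. $|F| = r_Q(F)$. If not, $|F| > r_Q(F) = N(|Q|-1)$, and then since the rank of $M_Q$ is $N(|Q|-1)$, $F$ contains a basis $B \subsetneq F$ of $M_Q$, which by the first direction is itself a feasible removal set — contradicting minimality of $F$. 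Hence $|F| = N(|Q|-1) = r_Q(F)$, so $F$ is a maximal independent set, i.e. a basis. This establishes the two collections coincide.

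The only genuinely delicate point — and the step I expect to be the main obstacle — is verifying that $r_Q(F) = \min_{S \supseteq Q}\slkF(S)$ really equals $N(|Q|-1)$ exactly when the two conditions of Lemma~\ref{lem:hstarfeas} hold, in particular pinning down why the minimum over $S \supseteq Q$ can never exceed $N(|Q|-1)$ for a feasible removal set and equals it precisely. This amounts to observing $\slkF(R) = \slkx(R) + |F| = |F|$ (by Lemma~\ref{lem:hsubm}\ref{item:boundedInc} applied with $S = R$, since each edge removal splits a component and each split increases $\slk(R)$ by exactly $1$ as $R$ meets every component in at least two terminals), together with the fact (already in Proposition~\ref{prop:matroid}) that $r_Q$ has rank $N(|Q|-1)$, i.e. $r_Q(\grd) = N(|Q|-1)$, so $r_Q(F) \le N(|Q|-1)$ always. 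Given those, the equivalence ``$r_Q(F) = N(|Q|-1) \iff (\grphx\contract Q)\remove F$ feasible'' drops out, and the basis/minimal-set correspondence follows from the generic matroid facts above. I would present the argument in that order: (1) feasibility $\Leftrightarrow r_Q(F)$ maximal; (2) bases $\subseteq \minrems{Q}$ via subcardinality; (3) $\minrems{Q} \subseteq$ bases via the rank bound and the basis-contained-in-$F$ argument.
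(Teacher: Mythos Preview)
Your overall architecture matches the paper's proof: show every basis is a feasible removal set, show every minimal feasible removal set contains (hence equals) a basis. However, your ``delicate point'' paragraph contains a genuine error that propagates into a false intermediate claim.

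You assert $\slkF(R) = \slkx(R) + |F| = |F|$, justifying it by ``each split increases $\slk(R)$ by exactly $1$ as $R$ meets every component in at least two terminals.'' This is not true for arbitrary $F\subseteq E(\grphx)$. Lemma~\ref{lem:hsubm}\ref{item:boundedInc} only gives the inequality $\slkF(R)\le |F|$; equality fails once removing an edge creates a piece with no terminals. For instance, in a path component $t_1\,v_1\,v_2\,t_2$ with $F=\{t_1v_1,\,v_2t_2\}$, the middle piece $\{v_1,v_2\}$ has no terminals and $\slkF(R)=1<2=|F|$. Consequently your claimed equivalence ``$(\grphx\contract Q)\remove F$ feasible $\iff r_Q(F)=N(|Q|-1)$'' is also false: taking $F=E(\grphx)$ gives $r_Q(F)=N(|Q|-1)$ (full rank) but $\slkF(R)=N(|R|-1)>N(|Q|-1)$ whenever $|R|>|Q|$, so the removal is infeasible.

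Fortunately your proof does not actually need the full equivalence, only the two correct one-sided facts: (a) feasible $\Rightarrow r_Q(F)=N(|Q|-1)$ (immediate from Lemma~\ref{lem:hstarfeas} and $r_Q(F)\le N(|Q|-1)$); and (b) if $F$ is a \emph{basis} then it is feasible. For (b), use only the inequality $\slkF(R)\le |F|$ together with $|F|=N(|Q|-1)$ for a basis and $\slkF(R)\ge r_Q(F)=N(|Q|-1)$ to force equality at $S=R$. This is exactly how the paper argues. With (a) and (b) in hand, your steps (2) and (3) go through unchanged: proper subsets of a basis have $r_Q(F')\le |F'|<N(|Q|-1)$ hence are infeasible by the contrapositive of (a), and any minimal feasible $F$ with $|F|>N(|Q|-1)$ would strictly contain a basis, contradicting minimality via (b).
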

\begin{proof}
	Let $\bases{Q}$ be the set of bases of the matroid defined by $r_Q$, and 
	consider any $B \in \bases{Q}$. 
	By the definition of $r_Q$, we have that 
	\[ \slkB(S) \geq r_Q(B) = N(|Q|-1) \qquad \text{ for any }S \supseteq Q. \]
	Moreover, by Lemma~\ref{lem:hsubm}~(\ref{item:boundedInc}),
	\[ \slkB(R) \leq \slkx(R) + |B| = N(|Q|-1); \]
	the final equality follows since $|B|=r_Q(B) = N(|Q|-1)$ and $\slkx(R) = 0$ by feasibility of $\grphx$. 
	Thus by Lemma~\ref{lem:hstarfeas}, $(\grphx \contract Q) \remove B$ is feasible.

	Conversely, consider any $B \in \minrems{Q}$. 
	By feasibility and Lemma~\ref{lem:hstarfeas} again, $\slkB(S) \geq N(|Q|-1)$ for all $S \supseteq Q$, with equality for $S = R$.
 Thus $r_Q(B) = N(|Q|-1)$, and so there is some $B' \subseteq B$ with $B' \in \bases{Q}$.
	But then $B'$ is also a feasible removal set by the above, and so by minimality $B'=B$.
\end{proof}

\begin{proofof}{Proof of Proposition~\ref{prop:matroid}}
	First, observe from \eqref{eq:h} applied to the empty blowup graph that 
	\[ r_Q(E(\grphx)) = \min_{S \supseteq Q} N(|S|-1) = N(|Q|-1). \]
We must show that $r_Q$ is increasing, submodular, and satisfies $r_Q(F) \leq |F|$ for all $F \subseteq \grd$.
The fact that $r_Q$ is increasing follows immediately from the definitions of $r_Q$ and $\slkF$; removing a larger set can only increase the slack.
Considering some fixed $S \supseteq Q$, we have by Lemma~\ref{lem:hsubm}~(\ref{item:boundedInc}) that
$\slk_{\grphx - F}(S) \leq \slkx(S) + |F|$.
Thus $r_Q(F) \leq r_Q(\emptyset) + |F| = |F|$ since $r_Q(\emptyset) = 0$ by feasibility of $\grphx$.

Now we come to the main part of the proof, showing that $r_Q$ is submodular.
We must show that for any $F_1 \subseteq F_2 \subseteq \grd$ and $e \notin F_2$, 
	\begin{equation} r_Q(F_1 + e) - r_Q(F_1) \geq r_Q(F_2+e) - r_Q(F_2).
	\end{equation}
	It is clearly sufficient to show this for $F_1$ and $F_2$ differing by a single edge.
	Consider any $S \supseteq Q$ and $i \in \{1,2\}$. 
	The difference 
	\[ \slkless{F_i+e}(S) - \slkles{F_i}(S) = \sum_{C \in \compof{\grphx \remove F_i}} (|C\cap S|-1)^+\;\; - \sum_{C \in \compof{\grphx \remove (F_i+e)}} (|C\cap S|-1)^+ \]
	is one or zero, and it is one precisely if $e$ splits up some component $C \in \compof{\grphx \remove F_i}$ into two components $C_1, C_2$ that both intersect $S$.
	If this is the case for some component in $\compof{\grphx \remove F_2}$, then $e$ will also split up some component in $\compof{\grphx \remove F_1}$ into two pieces both intersecting $S$, since
	$\grphx \remove F_2$ is a subgraph of $\grphx \remove F_1$.
Thus for any $S \supseteq Q$,
\begin{equation}\label{eq:diff}
	\slkless{F_2+e}(S) - \slkles{F_2}(S) \leq \slkless{F_1+e}(S) - \slkles{F_1}(S).
\end{equation}
It also follows that for any $S, S' \subseteq R$ with $Q \subseteq S \subseteq S'$,
\begin{equation}\label{eq:biggerS}
 \slkless{F_1+e}(S) - \slkles{F_1}(S) \leq \slkless{F_1+e}(S') - \slkles{F_1}(S').
\end{equation}

Let $\minimizers{i}$ be the set of terminal subsets containing $Q$ that minimize $\slkles{F_i}(S)$, over all $S \supseteq Q$.
Since $\slkles{F_1}$ is intersecting submodular by Lemma~\ref{lem:hsubm}, there is a unique \emph{maximal} set
$S_1^* \in \minimizers{1}$, meaning $S_1^* \supseteq S$ for all $S \in \minimizers{1}$.
Similarly, there is a unique \emph{minimal} set $S_2^*\in \minimizers{2}$; so $S_2^* \subseteq S$ for all $S \in \minimizers{2}$.
We first show $S_2^*\subseteq S_1^*$.
Notice that for $S\supseteq Q$ with $S\notin \minimizers{1}$ we have
$\slkles{F_2}(S) \geq \slkles{F_1}(S) \geq \slkles{F_1}(S_1^*)+1$,
where the first inequality follows
by Lemma~\ref{lem:hsubm}~\eqref{item:boundedInc}.
Furthermore, $\slkles{F_2}(S_1^*)\leq \slkles{F_1}(S_1^*)+1$,
again by Lemma~\ref{lem:hsubm}~\eqref{item:boundedInc}.
Hence $\slkles{F_2}(S_1^*) \leq \slkles{F_2}(S)\; \forall%
S\supseteq Q, S\notin \minimizers{1}$, and thus $\minimizers{1}$
must contain some minimizers of $\slkles{F_2}$, i.e., 
$\minimizers{1}\cap \minimizers{2}\neq \emptyset$.
Since $S_2^*$ is the minimal set in $\minimizers{2}$ and $S_1^*$
is the maximal set in $\minimizers{1}$ we obtain $S_2^*\subseteq S_1^*$.

We finally have 
\begin{align*}
	r_Q(F_2+e) - r_Q(F_2) &= \slkless{F_2+e}(S^*_2) - \slkles{F_2}(S^*_2)\\
	&\leq \slkless{F_1+e}(S^*_2) - \slkles{F_1}(S^*_2) \qquad \text{by \eqref{eq:diff}}\\
	&\leq \slkless{F_1+e}(S^*_1) - \slkles{F_1}(S^*_1) \qquad \text{by \eqref{eq:biggerS}, since $S_2^* \subseteq S_1^*$}\\
	&= r_Q(F_1+e) - r_Q(F_1). \qedhere
\end{align*}
\end{proofof}
\todo{I think we could discuss the separation here (give the construction), and then say that this leads on to showing that the matroid $M_Q$ is actually a gammoid.}

\subsubsection*{Proof of Theorem~\ref{thm:uniform}}

We first show that for any component $Q$,
\begin{equation*}\label{eq:lastToShow}
r_Q(\core) = N(|Q|-1).
\end{equation*}
This in turn implies that $\mathcal{B}_Q=\{B\in\mathcal{B}_Q \mid B\subseteq K\}$
is nonempty, since the rank of $M_Q$ is
$N(|Q|-1)$ by Proposition~\ref{prop:matroid}, and so $\remP \neq \emptyset$.
Notice that for $S\subseteq R, S\neq \emptyset$, we have $h_K(S)=N(|S|-1)$, since
in $\grphx-K$ all components contain precisely one terminal. Hence,
\begin{align*}
r_Q(K) = \min_{S\supseteq Q}h_K(S) = N(|Q|-1).
\end{align*}

	As already discussed in Section~\ref{sec:overview}, the polytope $\remP$ is simply a weighted Minkowski sum of the base polytopes $B(M_Q)$ for $Q \in \compx$.
	It is well known that the Minkowski sum of matroid polytopes is a polymatroid, and moreover, the rank function of the sum is simply the sum of the rank functions of the summands~\cite{mcdiarmid_1975_rados}. Thus, $\remP$ is the base polytope of a polymatroid with rank function
	\begin{equation}\label{eq:remrank} 
		r = \frac{1}{\card{\compx}}\sum_{Q \in \compx}r_Q. 
	\end{equation}
	To show that the point $p$ given by $p_e = N/\card{\compx}$ for all $e \in \core$ is in $\remP$, we need to show that $r(F) \geq \card{F}\cdot N/\card{\compx}$ for every $F \subseteq \core$.
	Expanding out~\eqref{eq:remrank} and the definition of $r_Q$, and writing $S_Q$ for the subset $S \supseteq Q$ that attains the minimum in~\eqref{eq:rank}, we obtain
	\[ r(F) = \frac{1}{\card{\compx}}\sum_{Q \in \compx} \slkF(S_Q). \]
	We now observe that because $\core$ is a splitting set, $\slkF(R) = \card{F}$. 
	For imagine removing the edges of $F$ from $\grphx$ one by one; $\slk_{\grphx \remove F}(R) - \slkx(R)$ just counts the number of times where a component is split by the deleted edge in this process. But by the nature of minimal splitting sets, this must happen at \emph{every} step---no pendant edges are formed at any stage. Hence $\slkF(R) - \slkx(R) = \card{F}$; moreover, $\slkx(R) = 0$ by feasibility, so indeed $\slkF(R) = \card{F}$.
\todo{Explain this better.}
	Thus to finish the proof, it suffices to show
	\begin{claim}\label{claim:missing}
		$\sum_{Q \in \compx} \slkF(S_Q) \geq N \cdot \slkF(R)$.
	\end{claim}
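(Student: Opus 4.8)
The plan is to exploit intersecting submodularity of $\slkF$ together with the fact that each $S_Q$ is a minimizer of $\slkF$ among sets containing $Q$, and that the sets $\{S_Q\}_{Q\in\compx}$ collectively ``cover'' $R$ enough times. First I would record the basic inequality we will use repeatedly: since $\slkF$ is intersecting submodular (Lemma~\ref{lem:hsubm}, inherited by $\slkF = h_{\grphx\remove F}$), for any two subsets $S,T\subseteq R$ with $S\cap T\neq\emptyset$ we have $\slkF(S\cup T) + \slkF(S\cap T) \le \slkF(S) + \slkF(T)$. The key structural observation is that each $S_Q$ contains $Q$, and every terminal of $\grphx$ lies in some component; more to the point, for each component $Q\in\compx$ the set $S_Q\supseteq Q$, so picking any terminal $v$, the sets $S_Q$ over components $Q\ni v$ all contain $v$ and hence pairwise intersect. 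I would like to uncross the whole family $\{S_Q\}$ up to $R$.

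The cleanest route is an averaging/uncrossing argument organized as follows. Order the components $Q_1,\dots,Q_m$ (where $m=\card{\compx}$) and build up partial unions $U_j := S_{Q_1}\cup\cdots\cup S_{Q_j}$, with $U_0=\emptyset$. I claim $U_m = R$: indeed, $\grphx$ is connected after identifying terminals only through components, and since every terminal belongs to some component $Q_i$ we get $v\in Q_i\subseteq S_{Q_i}\subseteq U_m$. Now at each step $j$ I want to compare $\slkF(U_{j-1}) + \slkF(S_{Q_j})$ against $\slkF(U_j) + \slkF(U_{j-1}\cap S_{Q_j})$. When $U_{j-1}\cap S_{Q_j}\neq\emptyset$, intersecting submodularity gives $\slkF(U_j) + \slkF(U_{j-1}\cap S_{Q_j}) \le \slkF(U_{j-1}) + \slkF(S_{Q_j})$; when the intersection is empty I must argue directly, using $\slkF(A\cup B) = \slkF(A)+\slkF(B)+N$ for disjoint $A,B$ (no edges in $F$ cross a non-edge, precisely because $\core$ is a splitting set — this is the same bookkeeping as in the paragraph preceding the claim). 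In either case, I would want to deduce, after re-indexing so that intersections stay nonempty (which can be arranged by reordering, since the intersection graph on $\{S_Q : Q\ni v\}$ is connected for each $v$ and the components glue together along shared terminals), a telescoping bound. The crucial point: each $\slkF(U_{j-1}\cap S_{Q_j})$ is itself $\ge \slkF(S_{Q_j\cap(\cdots)})$-type quantity bounded below by $0$ is too weak — rather I should use that $U_{j-1}\cap S_{Q_j}$ still contains $Q_j$ is false in general; instead I'd bound $\slkF(U_{j-1}\cap S_{Q_j})\ge 0$ only when that set is nonempty, and otherwise absorb the $+N$ term, which is exactly what makes the inequality go the right way.

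Carrying the telescoping through: summing the step inequalities gives $\sum_{j=1}^m \slkF(S_{Q_j}) \ge \sum_{j=1}^m \slkF(U_j) - \sum_{j=1}^m \slkF(U_{j-1}) + (\text{nonneg.\ correction terms}) = \slkF(U_m) - \slkF(U_0) + \cdots = \slkF(R)$, which is weaker than what we need by a factor of $N$. To recover the factor $N$, I would instead exploit that the minimality of $S_Q$ forces $\slkF(S_Q) \le \slkF(R)$ is the wrong direction — so the real engine must be that $S_Q$ is a \emph{minimizer}, hence $\slkF(S_Q)\le \slkF(S)$ for all $S\supseteq Q$, and then the quantity $\sum_Q \slkF(S_Q)$ compares favorably not to $\slkF(R)$ directly but to a sum over a laminar family obtained by uncrossing all minimizers, where the submodular-function-minimizer structure (as in Lemma~\ref{lem:hsubm}, the existence of maximal/minimal minimizers) forces each edge of $F$ to be "charged" by $N$ many of the $S_Q$. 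Concretely, I expect the right argument is: for each edge $e\in F$ and each component $Q$, membership of $e$ in the ``boundary'' counted by $\slkF(S_Q)$ happens for enough $Q$; summing the $\card{\compx}$ many minimization constraints $\slkF(S_Q)\ge$ [value of $r_Q$], and using $\slkF(R)=\card F$ with the polymatroid rank identity $\sum_Q r_Q(F) = \card{\compx}\cdot r(F)$, reduces the claim to $r(F) \ge \card{F} N/\card{\compx}$ — but that is circular since it is what we are proving. The honest main obstacle, then, is precisely this: converting the "splitting set" global identity $\slkF(R) = \card F$ into $m$ local lower bounds $\slkF(S_Q)\ge N(|Q|-1) + (\text{something summing to } \card F)$, and I would attack it by a careful uncrossing of the family $\{S_Q\}$ into a laminar family on which $\slkF$ can be evaluated componentwise, tracking how the $+N$ terms from disjoint unions accumulate to exactly $N\cdot\slkF(R) = N\card F$. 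I expect this uncrossing-with-disjointness-bookkeeping step to be the crux and the place where the splitting-set hypothesis is used essentially.
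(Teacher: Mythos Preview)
Your proposal does not close, and the gap is not a detail but the heart of the matter. You yourself observe that straight uncrossing of the family $\{S_Q\}$ telescopes only to $\sum_Q \slkF(S_Q)\ge \slkF(R)$, a factor of $N$ short. Your attempt to recover that factor via the ``$+N$ on disjoint unions'' identity is incorrect: for disjoint nonempty $A,B\subseteq R$ one computes
\[
\slkF(A\cup B)=\slkF(A)+\slkF(B)+N-\bigl|\{C\in\Gamma(\grphx-F): C\cap A\neq\emptyset,\ C\cap B\neq\emptyset\}\bigr|,
\]
and there is no reason the last term vanishes; components of $\grphx-F$ can and typically do span several terminals (the splitting-set hypothesis is about $K$, not about $F\subseteq K$). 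Even if the identity held, the number of disjoint merges in a linear uncrossing is at most $\card{\compx}-1$, which bears no relation to $N\cdot\card F = N\cdot\slkF(R)$. Your subsequent remarks circle back to the polymatroid inequality you are trying to prove, as you note.

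What is missing is a mechanism that (i) lets you pass from $S_Q$ down to $Q$ rather than up to $R$, and (ii) brings in the feasibility of $\grphx$ itself (not $\grphx-F$). The paper does this by constructing a \emph{monotone} minorant $f=\sum_i\lambda_i f_{\mathcal{P}^i}$ of $\slkF^+$ with $f(R)=\slkF(R)$ (Theorem~\ref{thm:submLowerBound}), so that $\slkF(S_Q)\ge f(S_Q)\ge f(Q)$; then it invokes the partition inequalities $\sum_{Q\in\compx} f_{\mathcal P}(Q)\ge N(|\mathcal P|-1)=N f_{\mathcal P}(R)$, which are valid for \eqref{eq:kLP} and supply exactly the factor $N$. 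Note that $\slkF$ itself is not monotone, so you cannot simply write $\slkF(S_Q)\ge\slkF(Q)$; the construction of the monotone partition-function minorant is the essential step your uncrossing scheme has no substitute for.
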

To prove Claim~\ref{claim:missing}, we replace the function $\slkF$
on the left-hand side of the inequality by a
function $f$ that lower bounds $\slkF$ and is well structured.
More precisely, $f$ is chosen to be a conic combination of a special type
of intersecting submodular functions which we call
\emph{partition functions}:
for any partition
$\mathcal{P}=\{P_1,\dots, P_n\}$ of $R$, the
corresponding partition function $f_{\mathcal{P}}$ is given by 
\begin{equation*}
f_{\mathcal{P}}(S) =
  \left(|\{j\in [n] \mid P_j\cap S \neq \emptyset\}|-1\right)^+%
   \quad \forall S\subseteq R.
\end{equation*}
The following theorem (whose proof can be found in Appendix~\ref{appendix:sub})
guarantees the existence of the function $f$ that we need to
prove Claim~\ref{claim:missing}.

\begin{theorem}\label{thm:submLowerBound}
Let $h:2^U\rightarrow \mathbb{R}_+$ any nonnegative
intersecting submodular function with $h(\{v\}) = 0$ for all $v \in U$. 
Then there is a monotone intersecting submodular
function $f$ of the form
\begin{equation*}
f = \sum_{i=1}^k \lambda_i f_{\mathcal{P}^i},
\end{equation*}
for some $k \in \naturals$, where $\lambda_i> 0$ and $\mathcal{P}^i$ is a partition of $U$ for each $1 \leq i \leq k$, satisfying:
\begin{compactenum}[i)]
\item\label{item:lowerBounda} $f(S) \leq h(S)$ for all $S\subseteq U$, and
\item\label{item:matchRa} $f(U)=h(U)$.
\end{compactenum}
\end{theorem}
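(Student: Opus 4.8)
The plan is to prove Theorem~\ref{thm:submLowerBound} by induction on $|U|$, reducing the ground set in each step by two operations: first subtracting a suitable multiple of the singleton‑partition function, and then contracting a partition of $U$ into sets on which $h$ vanishes. Before the induction I would record the routine facts: for every partition $\mathcal P$ the function $f_{\mathcal P}$ is monotone, nonnegative, intersecting submodular, and zero on $\emptyset$ and on singletons; hence any conic combination $f=\sum_i\lambda_i f_{\mathcal P^i}$ automatically has all the structural properties claimed, and only the two inequalities (i) $f\le h$ and (ii) $f(U)=h(U)$ require work. I would then isolate the singleton partition $\mathcal P_0=\{\{v\}:v\in U\}$ and write $g(S):=f_{\mathcal P_0}(S)=(|S|-1)^+$, observing the point on which everything hinges: $g$ is intersecting \emph{modular}, since for $S_1\cap S_2\neq\emptyset$ all four sets $S_1,S_2,S_1\cup S_2,S_1\cap S_2$ are nonempty and $|S_1\cup S_2|+|S_1\cap S_2|=|S_1|+|S_2|$, so $g(S_1\cup S_2)+g(S_1\cap S_2)=g(S_1)+g(S_2)$. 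Consequently $h-\lambda g$ is again intersecting submodular for every $\lambda\ge 0$.

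For the induction step, assume $|U|\ge 2$ (the case $h(U)=0$, in particular $|U|\le1$, is handled by $f\equiv 0$). Set $\lambda^*:=\min\{\,h(S)/(|S|-1):S\subseteq U,\ |S|\ge 2\,\}\ge 0$, let $S^*$ attain this minimum, and put $\hat h:=h-\lambda^* g$. By the modularity observation $\hat h$ is nonnegative, intersecting submodular, and zero on singletons, and moreover $\hat h(S^*)=0$ with $|S^*|\ge 2$, so $\hat h$ has a nontrivial zero set. Since the zero sets of a nonnegative intersecting submodular function are closed under intersecting unions, the maximal zero sets of $\hat h$ are pairwise disjoint and (as each singleton is one) cover $U$; call this partition $\mathcal Q$. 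If $\hat h(U)=0$, then $h(U)=\lambda^* g(U)$ and $f:=\lambda^* f_{\mathcal P_0}$ already satisfies (i) and (ii); otherwise $2\le|\mathcal Q|\le|U|-1$ (the block containing $S^*$ has size at least two). Contract along $\mathcal Q$: define $\bar h$ on the ground set $\bar U=\mathcal Q$ by $\bar h(T)=\hat h\!\left(\bigcup_{B\in T}B\right)$. Using that the blocks of $\mathcal Q$ are pairwise disjoint, so that $\big(\bigcup_{B\in T_1}B\big)\cap\big(\bigcup_{B\in T_2}B\big)=\bigcup_{B\in T_1\cap T_2}B$, one checks that $\bar h$ is again nonnegative and intersecting submodular; it is zero on singletons of $\bar U$ because each block of $\mathcal Q$ is a zero set of $\hat h$; and $|\bar U|<|U|$. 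The inductive hypothesis then yields $\bar f=\sum_i\bar\lambda_i f_{\bar{\mathcal P}^i}\le\bar h$ with $\bar f(\bar U)=\bar h(\bar U)=\hat h(U)$.

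It remains to lift $\bar f$ back to $U$. Each partition $\bar{\mathcal P}^i$ of $\mathcal Q$ coarsens to a partition $\tilde{\mathcal P}^i$ of $U$ (replace each block of $\bar{\mathcal P}^i$ by the union of the corresponding $\mathcal Q$-blocks), and a block of $\tilde{\mathcal P}^i$ meets $S$ precisely when the corresponding block of $\bar{\mathcal P}^i$ meets $\bar S:=\{B\in\mathcal Q:B\cap S\neq\emptyset\}$, so $f_{\tilde{\mathcal P}^i}(S)=f_{\bar{\mathcal P}^i}(\bar S)$. Hence $\tilde f:=\sum_i\bar\lambda_i f_{\tilde{\mathcal P}^i}$ satisfies $\tilde f(S)=\bar f(\bar S)\le\bar h(\bar S)=\hat h\!\left(\bigcup_{B\in\bar S}B\right)\le\hat h(S)$, where the last inequality holds because adjoining to $S$ the remaining elements of the zero-blocks $B\in\bar S$ one block at a time cannot increase $\hat h$ (each such $B$ meets the current set and has $\hat h(B)=0$); and $\tilde f(U)=\hat h(U)$. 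Finally put $f:=\tilde f+\lambda^* f_{\mathcal P_0}$, so that $f\le\hat h+\lambda^* g=h$ and $f(U)=\hat h(U)+\lambda^*(|U|-1)=h(U)$, and discard any zero coefficients (padding with a harmless trivial block $f_{\{U\}}\equiv 0$ if a positive number $k\ge 1$ of terms is required). This closes the induction.

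The step I expect to be the crux is dealing with the fact that $h$ need not be monotone: the contraction argument is valid only because we contract along \emph{zero sets}, which is exactly what makes the extra elements swept up in passing from $S$ to $\bigcup_{B\in\bar S}B$ cost nothing. Producing a nontrivial zero set in the first place is precisely the role of the preliminary subtraction of $\lambda^* g$, and the reason that subtraction is legitimate — that it keeps $h-\lambda^*g$ intersecting submodular — rests on the (easily missed) fact that $g=(|\cdot|-1)^+$ is intersecting modular. Once these two ideas are in place the rest is bookkeeping.
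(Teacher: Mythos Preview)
Your argument is correct and follows essentially the same route as the paper's proof of Theorem~\ref{thm:strongSubmLowerBound}: greedily subtract a maximal multiple of the partition function for the current partition of zero sets (using that $f_{\mathcal P}$ is intersecting modular on $\mathcal F_{\mathcal P}$, which is exactly your observation about $g=(|\cdot|-1)^+$), coarsen to the new maximal zero sets, and verify the lower bound on arbitrary $S$ by uncrossing $S$ with the zero blocks it meets (your ``adjoining zero blocks cannot increase $\hat h$'' step is precisely the paper's inequality~\eqref{eq:coarsening}). The only difference is presentational: you phrase the coarsening as a contraction $U\mapsto\mathcal Q$ and run an induction on $|U|$, while the paper keeps the original ground set and works iteratively on the shrinking lattice families $\mathcal F_{\mathcal P^i}$; these are the same operation viewed two ways, and the resulting decomposition (with coarsening partitions and $k\le|U|-1$) is identical.
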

Consider the function $\slkF^+$ defined by $\slkF^+(S) = \max\{\slkF(S), 0\}$.
Then $\slkF^+$ differs from $\slkF$ only on the empty set, since $\slkF(S) \geq 0$ for all $S \neq \emptyset$.
Thus $\slkF^+$ is still intersecting submodular, and also nonnegative.
Let $f=\sum_{i=1}^k \lambda_i f_{\mathcal{P}^i}$ be the function
obtained by applying Theorem~\ref{thm:submLowerBound} to $\slkF^+$.
We then have
\begin{equation}\label{eq:beforePartArg}
\begin{aligned}
\sum_{Q\in \compx} \slkF(S_Q)
&\geq \sum_{Q\in \compx} f(S_Q)
\geq \sum_{Q\in \compx} f(Q)\\
&=\sum_{Q\in \Gamma(\mathcal{X})} \sum_{i=1}^k \lambda_i%
   f_{\mathcal{P}_i}(Q)
= \sum_{i=1}^k \lambda_i \sum_{Q\in \Gamma(\mathcal{X})}%
   f_{\mathcal{P}_i}(Q),
\end{aligned}
\end{equation}
where the first inequality holds since $\slkF(S) = \slkF^+(S) \geq f(S)$ for all $S \neq \emptyset$, and
the second inequality holds since $f$ is monotone and $Q\subseteq S_Q$.

As observed by Chakrabarty et al.~\cite{chakrabarty_2010_hypergraphic},
any solution $x$ to~\eqref{eq:kLP} satisfies the following partition
constraints for any partition $\mathcal{P}$ of $R$:
\begin{equation*}
\sum_{C\in \comp} x_C f_{\mathcal{P}}(C) \geq |\mathcal{P}|-1,
\end{equation*}
where $|\mathcal{P}|$ is the number of sets in
partition $\mathcal{P}$. In our blown-up setting this
translates into
\begin{equation*}
\sum_{Q\in \compx} f_{\mathcal{P}}(Q) \geq N(|\mathcal{P}|-1).
\end{equation*}
Combining this observation with~\eqref{eq:beforePartArg} and
using $|\mathcal{P}_i|-1 = f_{\mathcal{P}_i}(R)$,
Claim~\ref{claim:missing} follows since
\begin{align*}
\sum_{Q\in \compx}  \slkF(S_Q) 
&\geq \sum_{i=1}^k \lambda_i \sum_{Q\in \Gamma(\mathcal{X})}%
   f_{\mathcal{P}_i}(Q)
\geq \sum_{i=1}^k \lambda_i N (|\mathcal{P}_i|-1)\\
&= N \cdot \sum_{i=1}^k \lambda_i f_{\mathcal{P}_i}(R)
= N \cdot f(R)
= N \cdot \slkF(R),
\end{align*}
where the last equality follows from property~\eqref{item:matchRa}
of Theorem~\ref{thm:submLowerBound}.

\clearpage

\begin{figure}
\begin{center}
\subfigure[fractional solution $x$]{ \includegraphics[page=1]{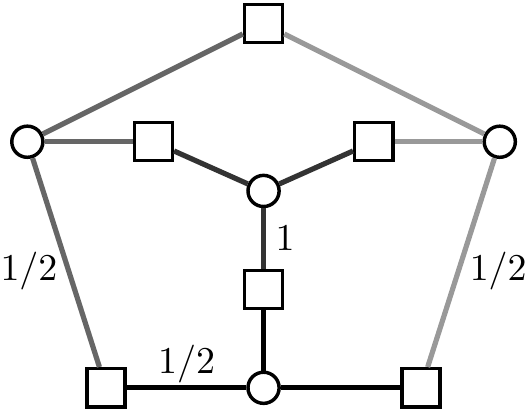}}
\quad \subfigure[blowup graph $\grphx$]{ \includegraphics[page=2]{picture-sources-pics.pdf}} 
\caption{In $(a)$: fractional solution $x$ (components drawn in different gray scales and labelled with their capacity $x_C$). In $(b)$: blowup graph $\grphx$ for $N=2$.  \label{fig:blowup}}
\end{center}
\end{figure}
\begin{figure}
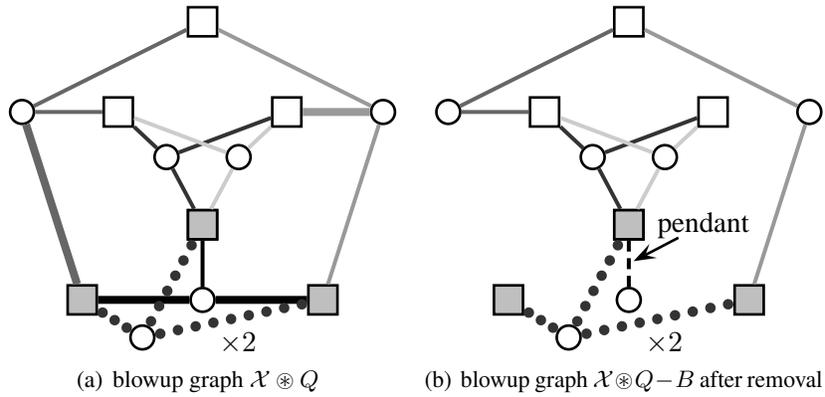

\begin{center}
\subfigure[blowup graph $\grphx \contract Q$]{ \includegraphics[page=3]{picture-sources-pics.pdf}}\quad 
\subfigure[blowup graph $\grphx \contract Q - B$ after removal]{ \includegraphics[page=4]{picture-sources-pics.pdf}}
\caption{%
In $(a)$: $\grphx \contract Q$, edges in $B \subseteq E(\grphx)$ in bold, copies of $Q$ are dotted, terminals in $Q$ are filled gray. In $(b)$: feasible blowup graph $\grphx \contract Q - B$ (which is not minimal due to the pendant edge that may also be removed). \label{fig:contraction}}
\end{center}
\end{figure}

\begin{figure}
\begin{center}
 \includegraphics[page=10]{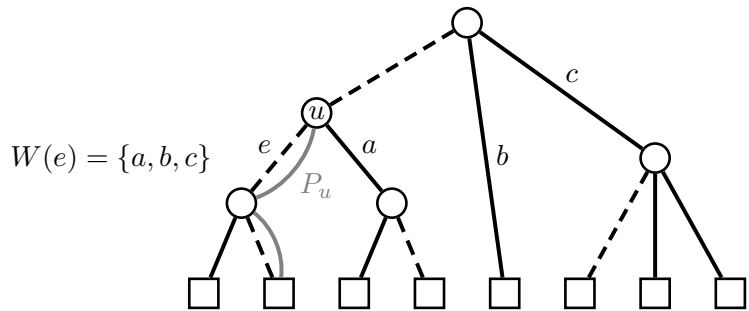}
\caption{Illustration of the definition of $W(e)$. Depicted is some component $C$ with core edges $K$ (solid) and cleanup edges (dashed). 
 \label{fig:witness} }
\end{center}
\end{figure}

\clearpage

\newpage
\appendix
\section{Separation and gammoid structure}\label{appendix:sep}
In this section, we investigate the separation problem for \eqref{eq:kLP}. 
Although it is not necessary, for convenience we will work in the blown up formulation; thus, for a given $\grphx$, our goal is to determine whether \eqref{eq:constraints} is satisfied (see Section~\ref{sec:deeper} for details of this and the definition of $\slkx$). 
In fact, we will do more; for any $Q \subseteq R$, $Q \neq \emptyset$, we will find the most violated set over all $S \supseteq Q$. Given this, we can answer the separation question by checking that \(\min_{S \supseteq \{ v \} } \slkx(S)\) is zero for each choice of $v$ (note that $\slkx(\emptyset) = -N$, and so we must exclude this trivial set from consideration). For each choice of $v$, one max-flow calculation will be required.

The construction is inspired by one for the forest polytope~\cite{picard1982,padberg1983} (see also~\cite[\S51.4]{schrijver_2003_combinatorial}). While what follows is not precisely a generalization (in the case where all components have size $2$, the resulting construction is slightly different), it is similar in spirit.
In the \emph{directed} component-based relaxation, separation via an equivalent flow-based formulation is completely straightforward.
However this does not imply such a formulation for the undirected version.

Since $\slkx$ is an intersecting submodular function, it follows already that $\min_{S \supseteq Q} \slkx(S)$ can be computed in polynomial time~\cite{schrijver_2003_combinatorial}, using submodular function minimization as a black box. 
However, the combinatorial algorithm we demonstrate here, which reduces the separation problem to a max-flow calculation, gives some additional insights (as well as being more efficient).

Let $\grphx$ be the blowup graph of some solution $x$.
First, let 
\[ y_v = |\{C \in \compx: v \in C\}| - N \quad \text{for } v \in R. \]
If $y_v$ is negative for any $v$, it is easily seen that $\grphx$ is not feasible (it corresponds to $x(\delta(v)) < 1$). 
So from now on, we assume $y_v \geq 0$ for all $v \in R$.
Construct a directed multigraph $D=(W,A)$ (we will write $D_\grphx$ if we wish to be explicit on the choice of $\grphx$) as follows. 
Begin with the multigraph $\grphx$, and for each component $C \in \compx$, pick an arbitrary node $r_C$ as the root.
Adjoin a source node $s$ and sink node $t$.
Now orient all edges of $E(C)$ away from $r_C$ for each component, and adjoin the arcs $sr_C$ for each $C \in \compx$, and $vt$ for each $v \in R$.
We assign capacities $z$ to the arcs; $z(vt) = y_v$ for all $v \in R$, and $z(a) = 1$ for all other arcs.

\begin{theorem}\label{thm:separation}
	For any nonempty $Q \subseteq R$, the value of the maximum $s$-$(Q \cup \{t\})$-flow in $D$ is equal to $y(R) + N + \min_{S \supseteq Q} \slkx(S)$.
	More specifically, if $U^*$ is a minimum $(Q \cup \{t\})$-$s$ cut in $D$ with $s \notin U^*$, then $S^* = U^* \cap R$ minimizes $\slkx(S)$ over $S \supseteq Q$, and 
\[  z(\delta^-(U^*)) =  y(R) + N+ \slkx(S^*). \]
\end{theorem}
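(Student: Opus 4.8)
The plan is to compute the value of a minimum $(Q\cup\{t\})$-$s$ cut directly and show it equals $y(R)+N+\slk_{\grphx}(S^*)$ for the appropriate $S^*$. Let $U\subseteq W\setminus\{s\}$ with $Q\cup\{t\}\subseteq U$, and set $S=U\cap R$. I first observe that we may assume $U$ is ``closed'' under the arc orientations within components in a convenient way: since each component $C$ is oriented as an arborescence out of $r_C$, if the cut is to have finite capacity we never want to cut more arcs of $C$ than necessary, so the optimal $U$ restricted to the vertices of a single component $C$ is determined by $S\cap C$ — namely, $U$ should contain exactly those vertices of $C$ lying on the path from $r_C$ toward some terminal in $S\cap C$ (equivalently, $V(C)\setminus U$ is a subtree hanging off, and the arcs of $C$ crossing into $U$ number exactly $(|S\cap C|-1)^+$ when $r_C\notin U$, and $1$ when $r_C\in U$ but $S\cap C=\emptyset$ — this bookkeeping is the part to do carefully). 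The key arithmetic identity I would establish is that, for the optimal choice of $U$ consistent with a given $S=U\cap R$,
\[
 z(\delta^-(U)) \;=\; \underbrace{N - (\text{number of } C \text{ with } S\cap C\neq\emptyset)}_{\text{from arcs } sr_C,\ \text{when } r_C\notin U} \;+\; \sum_{C:\, S\cap C\neq\emptyset}(|S\cap C|-1)^+ \;+\; \sum_{v\in R\setminus S} y_v,
\]
after checking that it is never beneficial to put $r_C\in U$ for a component with $S\cap C=\emptyset$ (that would cost $1$ for the arc $sr_C$ with no compensating saving) nor to exclude from $U$ a node whose inclusion is forced by a terminal of $S$ below it.

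Next I rewrite the right-hand side using $y_v = |\{C\in\compx: v\in C\}| - N$, so that $\sum_{v\in R\setminus S} y_v = \sum_{v\in R\setminus S}|\{C: v\in C\}| - N(|R|-|S|)$, and $N - |\{C: S\cap C\neq\emptyset\}| = \sum_{C}\big(1 - [S\cap C\neq\emptyset]\big) - (N'-N)$ where $N'=|\compx|$... — more cleanly, I would just combine all three terms and the definition $\slk_{\grphx}(S) = N(|S|-1) - \sum_{C}(|S\cap C|-1)^+$ together with the total count $\sum_C |C| = \sum_{v} |\{C: v\in C\}| = y(R) + N|R|$, and verify by a direct (if slightly tedious) manipulation that
\[
 z(\delta^-(U)) \;=\; y(R) + N + \slk_{\grphx}(S).
\]
This makes the dependence on $U$ collapse entirely to a dependence on $S=U\cap R$ (once $U$ is chosen optimally given $S$), and since $Q\subseteq S$ is the only constraint, minimizing over cuts is exactly minimizing $\slk_{\grphx}(S)$ over $S\supseteq Q$. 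The theorem's two displayed claims then follow from max-flow min-cut: the max-flow value equals $\min_{U} z(\delta^-(U)) = y(R)+N+\min_{S\supseteq Q}\slk_{\grphx}(S)$, and for the optimal $U^*$, $S^*=U^*\cap R$ attains the minimum.

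The main obstacle I anticipate is the boundary bookkeeping inside a single component: precisely arguing which vertices of $C$ an optimal $U$ must or must not contain as a function of $S\cap C$, handling the root $r_C$ (both the case $r_C\in S$, and $r_C\notin U$ vs.\ $r_C\in U$), and confirming that the count of arcs of $E(C)$ crossing $\delta^-(U)$ is exactly $(|S\cap C|-1)^+$ in the relevant case. This is where the analogy with the Picard–Queyranne forest construction is only approximate, since here components can have more than two terminals and are rooted arborescences rather than single edges. Once that lemma is nailed down, the remaining steps are a direct substitution using the definitions of $y$, $\slk_{\grphx}$, and the handshake identity $\sum_C|C| = y(R)+N|R|$, followed by an invocation of the max-flow min-cut theorem; I do not expect any difficulty there beyond careful algebra.
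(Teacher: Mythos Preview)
Your overall strategy matches the paper's exactly: for each $S \supseteq Q$ determine the optimal cut $U$ with $U \cap R = S$, show its capacity equals $y(R) + N + \slkx(S)$, and invoke max-flow min-cut. The gap is in the per-component bookkeeping, which you correctly flag as the obstacle but then get wrong in the sketch. There are $|\compx|$ arcs of the form $sr_C$, not $N$, so the term ``$N - |\{C : S\cap C\neq\emptyset\}|$'' has no interpretation; and for the optimal $U$ the number of arcs of $A(C) := \dirE(C)\cup\{sr_C\}$ entering $U$ is not $(|S\cap C|-1)^+$ but simply $1$ when $S\cap C\neq\emptyset$ and $0$ otherwise.

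The paper makes this clean by choosing, for given $S$, the cut
\[
\nu(S) := S \cup \{t\}\ \cup \bigcup_{C:\,C\cap S\neq\emptyset}(V(C)\setminus R),
\]
i.e.\ it throws \emph{all} Steiner vertices of every touching component into $U$, rather than just those on paths to $S\cap C$. Because terminals are leaves, exactly one arc of $A(C)$ enters $\nu(S)$ when $S\cap C\neq\emptyset$ (either $sr_C$, or the unique arc leaving the terminal $r_C$ when $r_C\notin S$). This gives directly
\[
z(\delta^-(\nu(S))) = |\{C : C\cap S\neq\emptyset\}| + y(R\setminus S),
\]
and the desired identity then follows from the handshake $\sum_C|C\cap S| = y(S)+N|S|$. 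The $(|S\cap C|-1)^+$ terms you are looking for appear only in this algebraic step, not in the cut geometry. Your ``paths from $r_C$'' choice of $U$ is also minimum (with the same value $1$ per touching component), but the arc count you assert for it is incorrect, and pursuing that description makes the case analysis around $r_C$ more delicate than necessary.
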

\begin{proof}
Use $\dirE(C)$ to denote the arcs in $D$ corresponding to $E(C)$ in $\grphx$, and let $A(C) = \dirE(C) \cup \{sr_C\}$.
For any $S \subseteq R$, let $\nu(S) \subseteq W$ be defined by 
\[ \nu(S) = S \cup \{t\} \; \cup \bigcup_{\stackrel{C \in \compx:}{C \cap S \neq \emptyset}} (V(C) \setminus R). \]
\begin{claim}
	For any nonempty $S \subseteq R$, $z(\delta^-(\nu(S))) = \slkx(S) + y(R) + N$. Moreover, for any $U \subset W$ with $s \notin U$, $t \in U$ and $U \cap R = S$, we have $z(\delta^-(U)) \geq \slkx(S) + y(R) + N$.
\end{claim}
\begin{proof}
Consider some component $C \in \compx$. 
If $S \cap C = \emptyset$, then clearly $\delta^-(\nu(S)) \cap A(C) = \emptyset$. 
So suppose $S \cap C \neq \emptyset$. 
If $r_C \in \nu(S)$, then clearly $\delta^-(\nu(S)) \cap \dirE(C) = \emptyset$, and $sr_C \in \delta^-(\nu(S))$. 
On the other hand, if $r_C \notin \nu(S)$ (implying in particular that $r_C$ is a terminal), then $sr_C \notin \delta^-(\nu(S))$ and $|\delta^-(\nu(S)) \cap \dirE(C)| = 1$, since all terminals are leaves of the components they belong to.
In either case, $z(\delta^-(\nu(S)) \cap A(C)) = 1$. 

For any $v \in R$, $sv \in \delta^-(\nu(S))$ if and only if $v \notin S$. Putting this all together, 
\[  z(\delta^-(\nu(S))) =  |\{C \in \compx: C \cap S \neq \emptyset\}| + \sum_{v \notin S} y_v.\]
Now taking \eqref{eq:h} and adding and subtracting $y(S)$, we have
\begin{align*}
\slkx(S) &= N(|S|-1) -\sum_{C \in \compx} (|C \cap S|-1)^+  + \left(\sum_{C \in \compx} |C \cap S| \;-\; N|S|\right) - y(S)\\
&= |\{C \in \compx: C \cap S \neq \emptyset\}|  - N - y(S)\\
&= z(\delta^-(\nu(S)) - N - y(R).
\end{align*}
Now consider any $U$ with $t \in U$, $s \notin U$ and $U \cap R = S$. We again clearly have $sv \in \delta^-(U)$ for all $v \in R \setminus U$, and again $\delta^-(U) \cap A(C) \neq \emptyset$ if $S \cap C \neq \emptyset$. So $z(\delta^-(U)) = \slkx(S) + y(R) + N$.
\end{proof}
By this claim, $z(\delta^-(\nu(S^*)) \leq z(\delta^-(U^*))$; since $U^*$ is a minimum cut, we must have equality.
Then again by the claim, 
\[ z(\delta^-(\nu(S^*))) = \slkx(S^*) + y(R) + N.  \qedhere \] 
\end{proof}

\medskip

We now show how this leads to a description of the matroid $M_Q$ as a gammoid.
Recall the definition of a gammoid: a directed graph $H$ is given, along with two subsets $X,Y \subseteq V(H)$. 
The groundset of the gammoid is $X$, and a set $I \subseteq X$ is independent if there are vertex-disjoint paths from $I$ to some subset of $Y$.
We say in this case that this defines the gammoid \emph{from} $X$ \emph{to} $Y$ in $H$.
It is convenient to observe that by transforming the digraph $H$ appropriately, we can replace vertex-disjoint in the above definition with arc-disjoint, and still characterize gammoids.

We need to slightly tweak the digraph $D$ defined above.
For each $f \in \grd$, there is a corresponding arc $a$ in $D$. Split the arc by adding an additional node $v_f$, producing a ``front'' arc $\afront_f$ with tail $v_f$ and a ``back'' arc $\aback_f$ with head $v_f$. 
We may also remove the node $s$ and all its adjacent arcs.
Call the resulting modified digraph $D'$.

Define the sets
\[ X = \{ v_f \mid f \in \grd \}; \qquad X' =  \bigcup_{C \in \compx} r_C \qquad \text{and}\qquad Y = Q \cup \{t\}. \]
Let $\bgammoid_Q$ be the gammoid defined on $D$ from $X' \cup X$ to $Y$, requiring arc-disjointness rather than vertex-disjointness.
Then define $\gammoid_Q = \bgammoid_Q /X'$; this contraction is also a gammoid.\todo{citation needed?}
By the one-to-one correspondence between $X$ and $E(\grphx)$, we may consider this is a matroid over $\grd$.

\begin{theorem}\label{thm:gammoid}
	For any component $Q$, $\gammoid_Q = M_Q$.
\end{theorem}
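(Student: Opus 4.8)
The plan is to show the two matroids $\gammoid_Q$ and $M_Q$ agree by comparing their rank functions on an arbitrary subset $F \subseteq \grd$. Recall that $r_Q(F) = \min_{S \supseteq Q}\slkF(S)$ is the rank function of $M_Q$ (Proposition~\ref{prop:matroid}); we must establish that the rank of $F$ in $\gammoid_Q$ equals this same value. The rank of $F$ in the gammoid $\gammoid_Q = \bgammoid_Q / X'$ is, by the definition of matroid contraction, the maximum number of arc-disjoint paths from $X' \cup \{v_f : f \in F\}$ to $Y = Q \cup \{t\}$ in $D$, minus the maximum number of arc-disjoint paths from $X'$ alone to $Y$. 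So the theorem reduces to an identity between a difference of two max-flow values and $\min_{S \supseteq Q}\slkF(S)$.

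The key tool is Theorem~\ref{thm:separation}, which already tells us that max-flow values in these digraphs compute exactly the quantities $y(R) + N + \min_{S\supseteq Q}\slkx(S)$. First I would reconcile bookkeeping: in $D'$ each original arc of $\grd$ is split by a node $v_f$, so allowing a path to \emph{start} at $v_f$ is the same (after uncontracting the split) as making that arc ``free'' — equivalently, deleting the back arc $\aback_f$, which is exactly the operation of removing edge $f$ from $\grphx$. Thus routing paths from $X' \cup \{v_f : f \in F\}$ corresponds to a max-flow computation on the digraph $D_{\grphx \remove F}$ (with $s$ re-added and capacitated so that it feeds all roots $r_C$, playing the role of $X'$). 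By Theorem~\ref{thm:separation} applied to $\grphx \remove F$, this max-flow equals $y_{\bar F}(R) + N + \min_{S \supseteq Q}\slkF(S)$, where $y_{\bar F}$ is the corresponding degree surplus in $\grphx \remove F$. Applying the same reasoning with $F = \emptyset$ gives that the max-flow from $X'$ alone equals $y(R) + N + \min_{S \supseteq Q}\slkx(S) = y(R) + N$, using $\slkx(R) = 0$ and feasibility. Subtracting, and checking that $y_{\bar F}(R) - y(R) = \card{F}$ cancels appropriately against the rank contributions (this is the same splitting-set-free accounting as in Lemma~\ref{lem:hsubm}~(\ref{item:boundedInc}): removing an edge either splits a component or creates a pendant, and each removed edge increments $\slkF(R)$ by its effect), one obtains that the rank of $F$ in $\gammoid_Q$ is exactly $\min_{S\supseteq Q}\slkF(S) = r_Q(F)$.

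The main obstacle I anticipate is the precise correspondence between ``starting a path at the split node $v_f$'' and ``deleting arc $\aback_f$,'' together with getting the capacity bookkeeping on the sink arcs $vt$ right as $F$ changes the degrees $y_v$. One has to verify that a family of arc-disjoint paths from $X' \cup \{v_f : f\in F\}$ to $Y$ can be rerouted, without loss of generality, so that the paths starting at the $v_f$'s traverse only the ``front'' portions $\afront_f$ and never reuse any $\aback_f$ for $f \in F$ — this is where the contraction by $X'$ and the structure of $D'$ (each component oriented as an arborescence away from its root) must be used carefully. Once that structural claim is in place, the rank identity follows from Theorem~\ref{thm:separation} and the observation that two matroids with the same rank function are equal. \qed
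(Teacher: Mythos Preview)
Your approach is the same as the paper's: compare rank functions, use the contraction formula $\rho(F)=\brank(X'\cup\{v_f:f\in F\})-\brank(X')$ for the gammoid, argue that allowing paths to start at $v_f$ is equivalent to deleting the back arcs $\aback_f$ (hence to forming $D_{\grphx\remove F}$), and then invoke Theorem~\ref{thm:separation} twice. The rerouting argument you anticipate is exactly what the paper does: if some path uses $\aback_f$ with $v_f\in U$, then no other path starts at $v_f$, so truncate that path to begin at $v_f$; repeating this eliminates all arcs in $A_U$.

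There is one genuine bookkeeping error. You claim that $y_{\bar F}(R)-y(R)=\card{F}$ and must be cancelled against something. In fact $y_{\bar F}(R)=y(R)$: for any terminal $v$, removing an edge $f$ from a component $C\ni v$ leaves $v$ in exactly one of the two resulting pieces, so the number of components containing $v$ is unchanged, and hence $y^{\bar F}_v=y_v$ for every $v$. With this, the subtraction is clean:
\[
\rho(F)=\bigl(y(R)+N+r_Q(F)\bigr)-\bigl(y(R)+N+r_Q(\emptyset)\bigr)=r_Q(F),
\]
using $r_Q(\emptyset)=\min_{S\supseteq Q}\slkx(S)=0$ by feasibility of $\grphx$. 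No further cancellation involving $\card{F}$ or $\slkF(R)$ is needed; your appeal to Lemma~\ref{lem:hsubm}~(\ref{item:boundedInc}) at this point is a red herring.
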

\begin{proof}

The rank of a set $U \subseteq X$ in $\gammoid_Q$ is $\rank(U) = \brank(U \cup X') - \brank(X')$, where $\brank$ is the rank function of $\bgammoid_Q$.
Notice that the maximum number of arc-disjoint paths from $X'$ to $Y$ is precisely the max-flow from $s$ to $Q \cup \{t\}$ in $D$.
Thus by Theorem~\ref{thm:separation}, and the definition of $r_Q$, $\brank(X') = r_Q(\emptyset) + y(R) + N$.

Now $\brank(U \cup X')$ is the maximum number of arc-disjoint paths from $U \cup X'$ to $Y$.
But imagine what would happen to $\brank(U \cup X')$ if the arcs $A_U = \{ \aback_f \mid v_f \in U\}$ were removed from $D'$.
Take $P_1, \ldots, P_{\ell}$ to be any maximum collection of arc-disjoint paths from $U \cup X'$ to $Y$ in $D'$.
For some $v_f \in U$,  if some path $P_i$ uses arc $\aback_f$, then certainly no other path emanates from $v_f$, and so we can simply remove the initial segment of $P_i$ before $v_f$ to obtain another maximum collection of disjoint paths that do not use $\aback_f$. 
Repeating this process, we obtain paths $P_1', \ldots, P_{\ell}'$ that do not use any arcs in $A_U$.
But taking $D' - A_U$, and contracting all of $U$ to form the source, yields precisely $D_{\grphx \remove F}$, the digraph for the separation construction corresponding to $\grphx \remove F$.
Thus again by Theorem~\ref{thm:separation}, $\brank(U \cup X') = r_Q(U) + y(R) + N$.
Thus $\rank(U) = r_Q(U)$, and so $\gammoid_Q = M_Q$.
\end{proof}

\section{Proofs for Section~\ref{sec:alg}}\label{appendix:alg}

Let us fix a component $C \in \compx$ and a splitting set $K$. 
By the definition of $K$ (as the complement of a spanning tree in  the graph $\tilde{C}$ obtained by
contracting terminals), every Steiner node $u \in V(C) \setminus C$ has a unique 
path $P_u \subseteq E(C) \setminus K$ of cleanup edges to a terminal that we term $r_u \in C$ (see again Figure~\ref{fig:witness}).
\begin{lemma}\label{lem:witness} 
	For any splitting set $\core \subseteq \grd$, component $C \in \compx$ and edge $e \in E(C) \setminus \core$, let
\[ W(e) = \{ uv \in \core \cap E(C) \mid e \in P_u \}. \]
Then $W(e)$ is the unique minimal subset of $E(C) \cap \core$ whose removal makes $e$ a pendant edge.
\end{lemma}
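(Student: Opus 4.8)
The plan is to unwind the definitions carefully. Recall that $\core$ is a splitting set, meaning $E(\grphx)\setminus\core$ is a spanning tree once all terminals of $\grphx$ are identified; restricted to a single component $C$, this means $E(C)\setminus\core$ (the cleanup edges of $C$) forms a spanning forest of $C$ in which each tree contains exactly one terminal. Hence for every Steiner node $u\in V(C)\setminus C$ there is a well-defined terminal $r_u\in C$ and a unique path $P_u\subseteq E(C)\setminus\core$ from $u$ to $r_u$, as stated just before the lemma. The edge $e\in E(C)\setminus\core$ lies on exactly those paths $P_u$ for which $u$ is separated from $r_u$ by $e$; removing $e$ splits the tree-of-cleanup-edges containing $e$ into two pieces, one still containing the terminal, the other consisting entirely of Steiner nodes.

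First I would characterize when removing a set $S\subseteq E(C)\cap\core$ makes $e$ pendant. An edge $e=xy$ of $C$ becomes pendant in $C - S$ exactly when one of its endpoints, say $y$, has degree $1$ in $C-S$ and is a Steiner node (a pendant edge of the blowup graph is by definition one incident to a Steiner leaf). Now note that $e$ itself is not in $\core$, so it is never removed; and since the cleanup edges of $C$ containing $e$ form a subtree with a single terminal, after removing $S$ the endpoint $y$ of $e$ is incident only to cleanup edges plus possibly core edges of $\core\setminus S$. So $e$ is pendant iff all core edges of $C$ incident to the "Steiner side" of $e$ have been removed, and moreover iteratively this Steiner side contains no terminal. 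The clean way to phrase this: let $U_e$ denote the set of Steiner nodes $u$ such that the (unique) $u$-to-terminal path in $E(C)\setminus\core$ passes through $e$ — equivalently $e\in P_u$. Then $e$ becomes a pendant edge after removing $W\subseteq E(C)\cap\core$ if and only if every core edge of $C$ with at least one endpoint in $U_e$ is in $W$; and $W(e)=\{uv\in\core\cap E(C)\mid e\in P_u\}$ is precisely the set of such core edges (one needs $e\in P_u$ for $u$ an endpoint, using that $r_u$ always lies on the terminal side).

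The key steps, in order: (1) describe the two-piece split of the cleanup subtree of $C$ caused by deleting $e$, identifying the Steiner-only piece; (2) show that $e$ is pendant in $C-W$ iff the Steiner-only piece has been disconnected from the rest of $C$ — i.e. all core edges leaving it are in $W$ — by an induction on pruning degree-one Steiner leaves, observing the process never strands a terminal; (3) verify that "core edges leaving the Steiner-only piece" coincides with $\{uv\in\core\cap E(C): e\in P_u\}$, which is exactly the claimed $W(e)$; (4) conclude minimality and uniqueness: any valid removal set must contain every such core edge (otherwise some core edge keeps the Steiner piece attached and $e$ retains degree $\geq 2$), so $W(e)$ is contained in every valid set, hence is the unique minimal one. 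A small point to handle in step (3): an edge $uv\in\core\cap E(C)$ with $e\in P_u$ — we should check both that such an edge genuinely has an endpoint on the Steiner side and that conversely every core edge crossing the cut arises this way, which follows from the tree structure of the cleanup edges (each endpoint of a core edge routes to its terminal through a unique cleanup path).

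The main obstacle I expect is step (2): making the "iteratively pruning pendant Steiner leaves until $e$ becomes pendant" argument rigorous, in particular checking that removing exactly the core edges incident to $U_e$ (and nothing else) is both sufficient and necessary, and that no terminal ever ends up on the wrong side. This is essentially a bookkeeping argument on the tree structure of $E(C)\setminus\core$, but it requires care to phrase cleanly; everything else is a direct translation of definitions.
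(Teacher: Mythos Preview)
Your approach is correct and shares its core idea with the paper's proof: identify the Steiner-only piece $U_e = \{u : e \in P_u\}$ obtained by deleting $e$ from the cleanup forest, and argue that $e$ becomes pendant precisely when this piece is severed from all terminals, i.e.\ when every core edge incident to $U_e$ has been removed. The paper's argument is the same in content but shorter in execution, because it works directly with the characterization ``$e$ is pendant iff $e$ lies on no terminal--terminal path'' rather than via your step~(2). For necessity, if some $uv \in W(e)$ is not removed (say $u\in U_e$), the paper simply exhibits the explicit path $P_u \cup \{uv\} \cup P_v$ between the distinct terminals $r_u$ and $r_v$, which contains $e$. For sufficiency, writing $e=ab$ with $a\in U_e$, any path from $a$ to a terminal in $C\setminus W(e)$ that avoids $e$ must leave $U_e$, and its first core edge has an endpoint in $U_e$ and hence lies in $W(e)$---contradiction. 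This two-line argument replaces your step~(2) entirely.

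One small correction: your sentence ``$e$ becomes pendant in $C-S$ exactly when one of its endpoints has degree~$1$ in $C-S$ and is a Steiner node'' is not the notion of pendant used here; pendant means the edge lies on no terminal--terminal path (equivalently, it is removable after iterated Steiner-leaf pruning). You effectively use the correct notion later, but stating it up front makes step~(2) immediate and dissolves the ``main obstacle'' you anticipated.
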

\begin{proof}
%
Let $\bar{W} \subseteq E(C) \cap K$ be any subset of splitting
edges. If $uv\in W(e)\setminus \bar{W}$  then $e$ remains on a path,
namely $P_u \cup uv \cup P_v \subseteq E(C)\setminus\bar{W}$ between
the terminals, implying that $e$ is not pendant. Thus, any subset
$\bar{W}$ which makes $e$ pendant must contain $W(e)$. 

On the other hand, we claim that $e$ is pendant in $E(C)\setminus
W(e)$. To see this, let $e=ab$ with $e\in P_a$. For $e$ to be pendant,
there would need to be a path $P$ that does not contain $e$ from $a$
to a terminal. But the first edge in $K$ on $P$ must be in $W(e)$,
contradicting the fact that $P\subseteq E(C)\setminus W(e)$.  
\end{proof}

\medskip

\newtheorem*{theorempotbound}{Theorem~\ref{thm:potbound}}

\begin{theorempotbound}
	For any splitting set $K$ and feasible blowup graph $\grphx$, Algorithm~\ref{alg} yields a solution of cost at most $\pot{K}{\grphx}/N$.
\end{theorempotbound}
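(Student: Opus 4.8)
The plan is to track the potential $\pot{\core_t}{\grphx_t}$ across the iterations of Algorithm~\ref{alg} and show it is nonincreasing, with the total drop bounding the cost of the tree produced. Concretely, let $\grphx_t$, $\core_t$ be the blowup graph and splitting set at the start of iteration $t$, let $Q_t$ be the contracted component with selected maximum-weight basis $B_t \in \mathcal{B}_{Q_t}^{\core_t}$ and cleanup set $F_t = \{e \notin \core_t \mid W(e) \subseteq B_t\}$, and let $\grphx_{t+1} = (\grphx_t \remove B_t \remove F_t)/Q_t$, $\core_{t+1} = \core_t \setminus B_t$. The algorithm adds $Q_t$ to $T$, so the final cost is $\sum_t \cost(Q_t)$. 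By the selection rule, $\cost(Q_t) \le w(B_t)/N$ where $w$ is the weight function relative to $(\grphx_t,\core_t)$. Hence it suffices to prove the single-step inequality
\[ \pot{\core_t}{\grphx_t} - \pot{\core_{t+1}}{\grphx_{t+1}} \ \ge\ w(B_t), \]
because summing telescopes: $\sum_t w(B_t) \le \pot{\core_0}{\grphx_0} - \pot{\core_{\text{final}}}{\grphx_{\text{final}}} \le \pot{K}{\grphx}$ (the final potential is nonnegative, in fact $0$ since the final blowup graph has no edges once everything is contracted), giving $N \cdot \cost(T) = N\sum_t \cost(Q_t) \le \sum_t w(B_t) \le \pot{K}{\grphx}$.

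For the single-step inequality I would first observe that contracting $Q_t$ does not change the potential: contraction only identifies terminals and removes the (now redundant) copies of $Q_t$-edges that carry no cost contribution in the sense that matters — more carefully, one checks that contracting $Q$ leaves each surviving edge $e$ with the same witness set and cost, so $\pot{\core}{\grphx/Q} = \pot{\core}{\grphx}$ when restricted to surviving edges. Thus the whole change comes from removing $B_t \cup F_t$. So I must show
\[ \sum_{e \in B_t \cup F_t} \cc{e}\,\harmonic{\card{W_t(e)}} \ +\ \Delta \ \ge\ w(B_t), \]
where $\Delta \ge 0$ accounts for the fact that removing these edges can only \emph{decrease} the witness-set sizes of the remaining edges (since $\core$ shrinks, $W(e)$ can only shrink, and $H$ is increasing), so removal of $B_t\cup F_t$ contributes an extra nonnegative term to the potential drop. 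Now expand $w(B_t)$: by definition $w(B_t) = \sum_{e \in B_t}\cc{e} + \sum_{e \in B_t}\sum_{f \notin \core_t : e \in W(f)} \cc{f}/\card{W(f)}$. The key combinatorial fact is that for a cleanup edge $f$, $W(f) \subseteq B_t$ iff $f \in F_t$; and for $f \notin F_t$, $W(f) \cap B_t$ is a proper subset of $W(f)$. So in $\sum_{e\in B_t}\sum_{f: e \in W(f)} \cc{f}/\card{W(f)}$, a cleanup edge $f$ contributes $\cc f \cdot \card{W(f) \cap B_t}/\card{W(f)}$, which is $\cc f$ exactly when $f \in F_t$ and strictly less otherwise.

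Putting the pieces together: the core edges $e \in B_t$ contribute $\sum_{e\in B_t}\cc e \harmonic{\card{W_t(e)}} = \sum_{e \in B_t}\cc e$ since $W_t(e)=\{e\}$ for core edges, matching the first part of $w(B_t)$. For the cleanup part, each $f \in F_t$ has $\harmonic{\card{W_t(f)}} \ge 1 = $ its coefficient in $w(B_t)$, so those are covered by the $F_t$-terms in the potential drop. The remaining cleanup edges $f \notin F_t$ with $W(f) \cap B_t \ne \emptyset$ contribute $\cc f \card{W(f)\cap B_t}/\card{W(f)}$ to $w(B_t)$; these I would match against the $\Delta$ term — namely, for such an $f$ that survives, $\card{W_{t+1}(f)} = \card{W_t(f) \setminus B_t} = \card{W_t(f)} - \card{W_t(f)\cap B_t}$, and the resulting decrease in $\harmonic{\cdot}$ times $\cc f$ is $\cc f(\harmonic{\card{W_t(f)}} - \harmonic{\card{W_t(f)}-\card{W_t(f)\cap B_t}})$, which by convexity-type bounds on harmonic differences is at least $\cc f \cdot \card{W_t(f)\cap B_t}/\card{W_t(f)}$. \textbf{The main obstacle} I anticipate is precisely this last harmonic-number estimate and the careful bookkeeping of how $W(f)$ and $\card{W(f)}$ evolve when both $B_t$ (from $\core$) and $F_t$ (cleanup edges) are deleted simultaneously — I need that $\harmonic{\ell} - \harmonic{\ell - k} \ge k/\ell$ for $0 \le k \le \ell$, which holds since $\harmonic\ell - \harmonic{\ell-k} = \sum_{j=\ell-k+1}^{\ell} 1/j \ge k/\ell$, and that the witness-set accounting is consistent across the contraction. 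Once these are nailed down, summing all contributions gives exactly $w(B_t)$, completing the single-step bound and hence the theorem.
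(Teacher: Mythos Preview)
Your proposal is correct and takes essentially the same approach as the paper. The paper proves the identical single-step inequality $\pot{K_t}{\grphx_t}-\pot{K_{t+1}}{\grphx_{t+1}}\ge w_t(B_t)$ via exactly the harmonic estimate you isolate, $H(\ell)-H(\ell-k)\ge k/\ell$; the only cosmetic difference is that the paper treats all edges uniformly by setting $W_{t+1}(e)=W_t(e)\setminus B_t$ (with $W_{t+1}(e)=\emptyset$ for removed edges) and then shows both sides equal $\sum_{e} c(e)\,|W_t(e)\cap B_t|/|W_t(e)|$, whereas you unfold this into the three cases $e\in B_t$, $e\in F_t$, and surviving cleanup edges.
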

\begin{proof}
We prove the theorem by showing that the decrease in the potential at any 
iteration is lower bounded by the weight of the edges we remove. More formally, 
consider a given iteration $t$ with current blowup graph $\grphx_t$, 
splitting set $K_t$, and weights $w_t$. let
$Q_t$ be the component to contract and $B_t\in \mathcal{B}_Q^{K_t}$ the edges
to be removed from $\grphx_t$ in this iteration.
At the end of iteration $t$ a new blowup graph
$\grphx_{t+1}$ is obtained with splitting set $K_{t+1}=K_t\setminus B_t$.
We will show
\begin{equation}\label{eq:toShowPot}
\pot{K_t}{\grphx_t}-\pot{K_{t+1}}{\grphx_{t+1}}\geq w_t(B_t).
\end{equation}
This in turn implies the theorem since the potential function at any
iteration, and in particular at the end of the algorithm, is nonnegative.
Therefore, the total weight of all core edges being removed throughout
the algorithm is upper bounded by the potential value of the initial
blowup graph, i.e., $\sum_t w_t(B_t) \leq \pot{K}{\grphx}$.
Furthermore, since at every iteration, $Q_t$ and $B_t$
are chosen such that $\cost(Q_t)\leq w_t(B_t)/N$, we obtain that
the cost of all contracted components---which is the cost of the Steiner
tree our algorithm returns---can be upper bounded by
$\sum_t \cost(Q_t)\leq \frac{1}{N}\sum_t w_t(B_t) \leq \pot{K}{\grphx}/N$,
as desired.
Hence, it remains to prove~\eqref{eq:toShowPot}.

For any edge $e\in \grphx_t$, we denote by $W_t(e)$ its witness
set at the beginning of iteration $t$.
For simplicity, we define $W_{t+1}$ on all of $E(\grphx_t)$, defining
$W_{t+1}(e)=\emptyset$ for $e\in E(\grphx_t)\setminus E(\grphx_{t+1})$.
By definition of the witness sets, we have
\begin{equation}\label{eq:witnesschange}
	W_{t+1}(e)=W_t(e)\setminus B_t \qquad \text{for any } e\in E(\grphx_t). 
\end{equation}
Expanding the left-hand side of~\eqref{eq:toShowPot}, we obtain
\begin{align}
\pot{K_t}{\grphx_t}-\pot{K_{t+1}}{\grphx_{t+1}}
  &= \sum_{e\in E(\grphx_t)} c(e) \big(H(|W_t(e)|)-H(|W_{t+1}(e)|)\big)\notag\\
  &= \sum_{e\in E(\grphx_t)} c(e) \sum_{k=|W_{t+1}(e)|+1}^{|W_{t}(e)|} \frac{1}{k} \notag\\
  &\geq \sum_{e\in E(\grphx_t)} c(e) \cdot \frac{\card{W_t(e)} - \card{W_{t+1}(e)}}{\card{W_t(e)}} \notag\\
  &= \sum_{e\in E(\grphx_t)} c(e) \cdot \frac{|W_t(e)\cap B_t|}{|W_t(e)|}\label{eq:potLHS} \qquad \text{ by~\eqref{eq:witnesschange}}.
\end{align}
Furthermore, by expanding the right-hand side of~\eqref{eq:toShowPot}
using the definition of the weights $w_t$, we obtain
\begin{align}
w_t(B_t) &= \sum_{f\in B_t}\; \sum_{\substack{e\in E(\grphx_t)\\e\in W_t(f)}}%
\frac{c(e)}{|W_t(e)|}\notag\\
&= \sum_{e\in E(\grphx_t)}\; \sum_{f\in B_t\cap W_t(e)}\frac{c(e)}{|W_t(e)|}\notag \\
&= \sum_{e\in E(\grphx_t)}c(e)\frac{|W_t(e)\cap B_t|}{|W_t(e)|}.\label{eq:potRHS}
\end{align}
Inequality~\eqref{eq:toShowPot} finally follows by 
combining~\eqref{eq:potLHS} with~\eqref{eq:potRHS}.
\end{proof}

\medskip

In the following, we show that $K$ can always be chosen s.t. 
$\Pot \leq \ln(4) \cdot \cost(\grphx)$, following the proof of \cite{byrka_2010_improved}. For the sake of a simpler exposition, 
we replace every Steiner node in $\grphx$ of degree higher than 3, 
with a binary tree consisting of cost zero edges in order to obtain nodes that
have degree exactly 3. Suppose we find a suitable pair $(K,F)$ of splitting and cleanup
edges in this auxiliary graph. Then every Steiner node $u$ in the 
original graph has potentially several paths $P_1,\ldots,P_q \subseteq F$ of cleanup edges to terminals. We keep the one path 
minimizing $c(P_i)$ and discard the first edge
of all other paths. This does not increase $\Pot$. Applying this iteratively, we end up with a feasible pair of cleanup edges and splitting edges.

From now on, we assume that every component $C \in \compx$ is a binary tree. 
We pick an arbitrary edge $e_C \in E(C)$ as \emph{root edge}. From any interior
node $u \in V(C) \setminus C$, there are two outgoing edges (these are the edges 
that do not lie on the path from $u$ to the root edge). We randomly pick 
one of these edges as cleanup edge and the other one as splitting edge.
In other words, every interior node $u$ has a unique path of cleanup edges to some terminal and hence, $K$ is a legal splitting set.
Moreover, for every non-root edge $e$ one has $\Pr{e \in K} = \frac{1}{2}$.  
\begin{lemma}
	If $\grd$ is chosen randomly according to the above distribution, 
\[ \E{\Pot} \leq \ln(4) \cdot \cost(\grphx). \]
\end{lemma}
\begin{proof}
Fix a component $C$ and an edge $e \in E(C)$. It suffices to show that
$\E{H( \card{W(e)})} \leq \ln(4)$. 
The root edge is always a splitting edge, thus $|W(e_C)| = 1$. So, let $e$
be a non-root edge and let $v_0,v_1,\ldots,v_{k+1}$ be the path from $e$ to the 
root edge, i.e. $v_0v_1=e$ and $v_kv_{k+1}=e_C$. Let 
\[ X := \max\{ i \mid v_0v_1,v_1v_2,\ldots,v_{i-1}v_{i} \in E(C) \setminus K\} \]
be the number of consecutive cleanup edges on this path, starting 
from $e$ (and $X=0$ if already $v_0v_1 \in \core$). Then $\Pr{X = i} = (\frac{1}{2})^{i+1}$ for $i<k$ and $\Pr{X = k} = (\frac{1}{2})^{k}$.
Furthermore $|W(e)| = X+1$ if $X<k$ and $|W(e)| = k$ otherwise. 
We calculate
\begin{align*}
 \E{H( |W(e)| )} &\leq \sum_{i=0}^{k-1} \Pr{X=i} \cdot H(i+1)+\Pr{X=k} \cdot H(k)\\
 &\leq \sum_{i=0}^{\infty} H(i+1) \cdot \left(\frac{1}{2}\right)^{i+1}\\
 &= \ln(4). 
 \end{align*}

\end{proof}

The above argument can be derandomized by the method of conditional
expectations, and this leads to a proof of
Theorem~\ref{thm:ln4bound}. Another option is to observe that the best
choice of $\core$ can be found in polynomial time, via a dynamic
program as is indicated below.
Combined with the above lemma, this implies Theorem~\ref{thm:ln4bound}.
\begin{lemma}
	A splitting set $K$ minimizing $\Pot$ can be found in polynomial time.
\end{lemma}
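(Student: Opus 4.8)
The plan is to exploit the fact that the potential $\pot{\core}{\grphx}=\sum_{C\in\compx}\sum_{e\in E(C)}\cc{e}\harmonic{\card{\Witness{e}}}$ decomposes as a sum over components, so it suffices to minimize, for each component $C\in\compx$ independently, the quantity $\sum_{e\in E(C)}\cc{e}\harmonic{\card{W(e)}}$ over all choices of splitting set inside $C$ (recall a splitting set restricted to $C$ is exactly the complement of a spanning tree of the graph obtained from $C$ by identifying all its terminals, equivalently: a choice, for each Steiner node, of which incident ``down'' edges become cleanup edges so that each Steiner node has a unique cleanup path to a terminal). After the preprocessing already used in the $\ln(4)$ argument we may assume each $C$ is a binary tree with a fixed root edge $e_C$; then a splitting set amounts to choosing, at each internal node $u$, exactly one of its two child edges to be a cleanup edge and the other to be a core edge, and $W(e)$ for a cleanup-reachable edge $e$ is determined by the maximal run of cleanup edges leading from $e$ up toward the root.

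The key step is a dynamic program over the (rooted, binary) tree structure of $C$. Root $C$ at $e_C$ and process internal nodes from the leaves up. The relevant state at an edge $e$ is the length $\ell$ of the run of consecutive cleanup edges currently hanging below $e$ that would be ``charged through'' $e$ if $e$ itself is made a cleanup edge — equivalently, the size $\card{W(\cdot)}$ that descendant cleanup edges will see once the choice at $e$'s endpoints is fixed. Concretely, define $D_e(\ell)$ to be the minimum possible value of $\sum_{f}\cc{f}\harmonic{\card{W(f)}}$ summed over all edges $f$ strictly below $e$ in $C$, subject to the constraint that $e$ is a cleanup edge and the cleanup run through $e$ (counted from the nearest core edge below, down to $e$) has length exactly $\ell$; and let $D_e^{\times}$ be the analogous minimum when $e$ is a core edge (so $\card{W(e)}$-type contributions of descendants are ``capped'' at $e$). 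At an internal node with children edges $e_1,e_2$, combining two subtree solutions is straightforward: if $e$ above the node is core, each child subtree is solved independently with its own run reset; if $e$ is cleanup with run length $\ell$, then at the node exactly one child edge is core and one is cleanup with run length $\ell-1$, and one simply takes the best over the two choices of which child is which. The root edge $e_C$ is forced to be a core edge, giving the value for $C$. Since $\ell$ ranges over at most $\card{E(C)}$ values and the tree has $\mathrm{poly}$ many nodes, the DP runs in polynomial time; summing the per-component optima yields the globally optimal splitting set, which we reconstruct in the usual way by tracing back the DP choices (and, if the Steiner-node splitting preprocessing was applied, mapping the binary-tree solution back, which as noted does not increase $\Pot$).

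The main obstacle — really the only subtlety — is getting the state definition of the DP exactly right so that the quantities $\card{W(e)}$ are correctly and locally computable: $\card{W(e)}$ for a cleanup edge $e$ depends on the entire maximal cleanup run containing $e$ on the way toward the root, which spans several nodes, so the run length must be threaded through the recursion as part of the state rather than computed after the fact. One must also be careful that the harmonic contribution $\cc{e}\harmonic{\card{W(e)}}$ of a cleanup edge $e$ is counted in the state of the node just above $e$ (where the run length becomes known), and that core edges $e\in\core$ contribute $\cc{e}\harmonic{1}=\cc{e}$ and terminate the run. Once the bookkeeping is pinned down, correctness of the DP follows by a routine induction on subtree size, and the polynomial running time is immediate.
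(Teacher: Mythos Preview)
Your approach has the right architecture---decompose by component and run a DP on each tree---and in spirit it matches the paper's proof. However, there is a genuine gap: your claim that ``a splitting set amounts to choosing, at each internal node $u$, exactly one of its two child edges to be a cleanup edge'' is false, and the DP you build on top of it only explores a strict subset of splitting sets.

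The cleanup edges of a splitting set form a spanning tree of the terminal-contracted component, and nothing forces a Steiner node's unique cleanup path to a terminal to start by going \emph{down} (toward a child in your rooting). Concretely, take a component with Steiner nodes $u,v_1,v_2$, terminals $t_0,\dots,t_4$, and edges $u\text{--}t_0$, $u\text{--}v_1$, $u\text{--}v_2$, $v_i\text{--}t_{2i-1}$, $v_i\text{--}t_{2i}$ for $i=1,2$. The set $K=\{v_1t_1,v_1t_2,v_2t_3,v_2t_4\}$ is a valid splitting set (its complement $\{ut_0,uv_1,uv_2\}$ is a spanning tree after contracting the terminals), yet $u$ has \emph{all three} incident edges in the cleanup set, so no choice of root edge makes this configuration fit your ``one cleanup child per node'' scheme. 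If the costs of $ut_0,uv_1,uv_2$ are zero and the four leaf edges have positive cost, this $K$ gives potential $\sum_i c(v_\cdot t_i)$, whereas any configuration of your restricted form forces at least one leaf edge $e$ to be cleanup with $|W(e)|\geq 2$, and hence has strictly larger potential. So your DP would miss the optimum.

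The paper's DP avoids this by keeping, for each partial subtree $T$ rooted at a Steiner node $r$, \emph{two} tables: one for the case that $r$'s cleanup path to a terminal lies inside $T$ (indexed by the number $\alpha$ of core edges outside $T$ reachable from $r$ via cleanup), and one for the case that it lies outside $T$ (indexed by the number $\beta$ of core edges inside $T$ reachable from $r$ via cleanup). This two-case bookkeeping is exactly what is needed to allow cleanup paths that go ``up'' across the boundary of a subtree, and it is what your single ``run length toward the root'' state cannot capture. Relatedly, your description of $|W(e)|$ as the length of a cleanup run toward the root is only valid in the special rooted configurations used for the $\ln(4)$ expectation bound; in general $|W(e)|$ counts all core edges branching off the full cleanup path through $e$, in both directions.
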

\begin{proof}

Since the potential function can be decomposed into terms
corresponding to each component, and a splitting set $K$
consists of the union of splitting sets in each component,
it suffices to consider each component separately.
Hence, let $C$ be any fixed component with vertices $V(C)$
and edges $E(C)$; our goal is to
find a splitting set $K$ for $C$ that minimizes
$\sum_{e\in E(C)} c(e) H(|W(e)|)$.

As usual when applying dynamic programming to problems on trees,
we start by computing tables (to be specified soon) for subtrees 
consisting of a single terminal, and successively combine those tables
until a table for the full tree is obtained, revealing
the optimal splitting set.
To specify the order in which we create tables for larger subtrees
from smaller ones, we direct the edge of the tree $C$ away from an
arbitrarily chosen node in $V(C)$.
We consider the following type of subtrees that we call \emph{partial
trees}. For any vertex $r\in V(C)$ and subset
$U\subseteq \delta^+(r)$ of arcs leaving $r$, the
\emph{partial tree $T_U$ with root $r$} 
is the induced subgraph of $C$ consisting of $r$ and all vertices that 
can be reached from $r$ with paths starting with one of the arcs in $U$.
To simplify notation we also use $T_U$ to refer to the edge set of the partial tree.
Furthermore, let $\overline{T}_U=E(C)\setminus T_U$, and let $R_{T_U}\subseteq R$ denote the terminals contained in the partial tree $T_U$.

To better understand what information should be stored for 
a partial tree $T$, we first briefly discuss how the choice of splitting set $K$ within $T$ impacts the witness sets in $\overline{T}$, 
and vice versa.
We will refer to the choice of core and cleanup edges (i.e., the choice of $K$) within some subset of edges as a \emph{configuration} for that subset.
We distinguish two ways that the root $r$ of $T$ can be connected
to a terminal through cleanup edges: \emph{case (A)} through a path within
the partial tree $T$, and \emph{case (B)} through a path outside of $T$.
Correspondingly, we call a configuration for $T$ a \emph{type (A) configuration} if case (A) holds for the root of $T$, and a \emph{type (B) configuration} otherwise.
Notice that in a type (A) configuration, \emph{every} node within $T$ is connected to a terminal in $R_T$ by cleanup edges.
For a partial tree $T$ we will store two tables, one corresponding
to case (A) and one to case (B).

Consider case (A) and let $P$ be the path of cleanup edges
connecting a terminal in $R_T$ to $r$.
Notice that in this case
$W(e)\subseteq \overline{T}\; \forall e\in \overline{T}$.
Hence, the configuration for $T$ does
not have any impact on the contribution of the edges of $\overline{T}$
to the function $\sum_{e\in E(C)}c(e)H(|W(e)|)$.
However, the witness sets of the edges in $P$ depend on the
configuration for $\overline{T}$, namely every
core edge that can be reached within $\overline{T}$ from $r$ by
following cleanup edges is part of the witness set of any
$e\in P$. Hence, the only information about the configuration for $\overline{T}$
that matters in finding an optimal configuration within $T$
is the number $\alpha$ of core edges in $\overline{T}$ that can be reached
from $r$ through cleanup edges.
Thus for case (A) we want to store a table for $T$ which contains, for
each value of $\alpha\in\{0,\dots, |\overline{T}|\}$, a 
corresponding type (A) configuration that minimizes
$\sum_{e\in T} c_e H(|W(e)|)$.
Here, $|W(e)|$ can be computed
without knowing the precise configuration in $\overline{T}$
(apart from $\alpha$) since
\begin{equation*}
|W(e)| =
\begin{cases}
|W(e)\cap T|         & \text{if } e\in T\setminus P,\\
|W(e)\cap T| +\alpha & \text{if } e\in P.
\end{cases}
\end{equation*}

Now consider case (B) and let $P$ be the path in $\overline{T}$
connecting a terminal to $r$.
In this case the situation is reversed and $W(e)\subseteq T$
for any $e\in T$.
Hence, the configuration for $\overline{T}$
does not have any impact on the contribution of the edges of $T$
to the function $\sum_{e\in E(C)}c(e)H(|W(e)|)$.
However this time, the witness sets of edges on $P$ depends on the
configuration for $T$, namely every
core edges that can be reached within $T$ from $r$ by
following cleanup edges is part of the witness set of any
$e\in P$. Hence, the only information that has to be stored
for $T$ in case (B), in order to describe how the configuration
within $T$ impacts the configuration outside of $T$,
is the number $\beta$ of core edges in $T$ that can be reached
from $r$ through cleanup edges.
Hence, for case (B) we want to store a table for $T$ which contains, for
each value of $\beta\in\{0,\dots, |T|\}$, a corresponding 
type (B) configuration that minimizes
$\sum_{e\in T} c(e) H(|W(e)|)$.

Clearly, if we can compute the (A) table for the full
component $C$, then we are done, since the globally best configuration
is the one minimizing the potential function over all values
of $\alpha$.
Computing type (A) and (B) tables for partial trees corresponding
to single terminals is trivial: table (A) contains one entry corresponding
to $\alpha=0$ of value zero, and table (B) is empty.
There are two constellation we exploit to compute tables for larger partial
trees based on the tables of smaller ones.

The first constellation is the following. Assume that we have tables (A)
and (B) for two partial trees $T_{U_1}$ and $T_{U_2}$ with $U_1\cap U_2=\emptyset$,
and both having root $r$. Then we can compute the two tables
for $T_{U_1\cup U_2}$ from the tables of $T_{U_1}$ and $T_{U_2}$.
This can be done by considering all legal combinations (meaning pairs of configurations that
can be completed to a splitting set)
of one table entry corresponding to $T_{U_1}$ and one corresponding to $T_{U_2}$,
keeping the best ones. Since the size of each table is polynomially bounded
in the input, this can be done efficiently.
We skip the somewhat tedious details for combining those tables which are
based on standard arguments.

In the second constellation, we consider a vertex $r$ and one of its
out-neighbors $v$, i.e., there is an arc directed from $r$ to $v$,
such that both tables for $T_{\delta^+(v)}$ have already been
computed. We can then compute the two tables for $T_{\{rv\}}$ by
considering all legal combinations of an entry of
one of the tables of $T_{\delta^+(v)}$ and the two
possibilities of $rv$ being a core edge or a cleanup edge.

It is easy to observe that starting from the terminals and leveraging
the above two update rules, one can construct both tables for
the full component $C$ efficiently.

\end{proof}

\medskip
For the following Lemma, we assume that the graph $G$ is quasi-bipartite. 
\newtheorem*{lemmaquasi}{Lemma~\ref{lem:quasi}}
\begin{lemmaquasi}
	Let $\core = E(\grphx) \setminus E_{\text{\it min}}$, where $E_{\text{\it min}}$ consists of a cheapest edge from every component. Then 
	\[ \Pot \leq \tfrac{73}{60}\cdot\cost(\grphx). \]
\end{lemmaquasi}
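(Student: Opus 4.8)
The plan is to use the very rigid structure of components in a quasi-bipartite instance to reduce the claimed inequality to a purely numerical statement about harmonic numbers, which then follows by an elementary one-variable optimization.

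First I would establish that in a quasi-bipartite instance every component $C \in \compx$ is a \emph{star} centered at a single Steiner vertex: the internal vertices of the tree $C$ induce a connected subtree (deleting all leaves of a tree leaves it connected), and since Steiner vertices form an independent set this subtree can contain no edge, hence is a single vertex $u$; thus $C$ consists of $k := |C|$ edges joining $u$ to the $k$ terminals of $C$. (If $G$ has an edge between two terminals, a single such edge is also a component, but it contributes $c(e)H(0)=0$ to $\Pot$ and may be disregarded.) Fix such a star $C$ and order its edges $e_1,\dots,e_k$ with $c(e_1)\le\cdots\le c(e_k)$. Then $E_{\text{\it min}}\cap E(C)=\{e_1\}$, so $e_1$ is the unique cleanup edge of $C$ and $e_2,\dots,e_k$ are \primary\ edges. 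Applying Lemma~\ref{lem:witness}, the unique Steiner vertex $u$ of $C$ reaches a terminal through cleanup edges only via $P_u=\{e_1\}$, so $W(e_1)=\{e_2,\dots,e_k\}$ while $W(e_i)=\{e_i\}$ for $i\ge 2$. Hence the contribution of $C$ to $\Pot$ is
\[ c(e_1)\,H(k-1)+\sum_{i=2}^{k}c(e_i)\;=\;c(e_1)\bigl(H(k-1)-1\bigr)+\cost(C). \]

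Next, since $c(e_1)=\min_i c(e_i)\le \tfrac1k\cost(C)$ and $H(k-1)-1\ge 0$ for $k\ge 2$, the contribution of $C$ is at most $\rho(k)\,\cost(C)$, where $\rho(k):=\tfrac1k\bigl(H(k-1)+k-1\bigr)$. Summing over all components of $\grphx$, the lemma reduces to showing $\rho(k)\le \tfrac{73}{60}$ for every integer $k\ge 2$. Writing this as $g(k):=\tfrac{13}{60}k+1-H(k-1)\ge 0$, I would observe that $g(k+1)-g(k)=\tfrac{13}{60}-\tfrac1k$, which is negative for $k\le 4$ and nonnegative for $k\ge 5$; hence $g$ is minimized over $\{2,3,\dots\}$ at $k=5$, where $g(5)=\tfrac{13}{12}+1-H(4)=\tfrac{25}{12}-\tfrac{25}{12}=0$. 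This gives $g(k)\ge 0$ for all $k\ge 2$ (with equality only at $k=5$, which is precisely where the constant $\tfrac{73}{60}$ comes from), completing the proof.

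The argument is short; the only step requiring genuine care is the identification of the witness sets inside a star via Lemma~\ref{lem:witness} — in particular, seeing that $|W(e_1)|=k-1$ for the cleanup edge and $|W(e_i)|=1$ otherwise — after which everything reduces to the elementary optimization of $\rho(k)$ above. A minor but essential point is to remember that a \emph{cheapest} edge, rather than an arbitrary one, must play the role of the cleanup edge in each star, which is exactly what the choice $\core = E(\grphx)\setminus E_{\text{\it min}}$ ensures.
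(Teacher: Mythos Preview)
Your proof is correct and follows essentially the same route as the paper's: identify each component as a star, observe that the cheapest edge is the sole cleanup edge with witness set of size $k-1$ while all other edges have witness sets of size $1$, bound the resulting contribution by $\bigl(1+\tfrac{H(k-1)-1}{k}\bigr)\cost(C)$, and check that this ratio is maximized at $k=5$ giving $\tfrac{73}{60}$. Your treatment is simply more explicit than the paper's (which asserts the maximizer without the discrete first-difference argument you supply); the only wrinkle is the parenthetical about terminal--terminal edges, where strictly speaking $\core$ fails to be a splitting set, but this is trivially patched by subdividing such edges and the paper glosses over it as well.
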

\begin{proof}
Consider a component $C$, which now is a star with edges $e_1,\ldots,e_k$. 
Assume $e_k$ minimizes the cost, then the splitting edges in $C$ are 
$K \cap C = \{ e_1,\ldots,e_{k-1}\}$. First of all, $K$ is
obviously a legal splitting set. Secondly $|W(e_i)| = 1$ for $i\in\{ 1,\ldots,k-1\}$ and $|W(e_k)| = k-1$.
Thus 
\[
\sum_{i=1}^k c(e) \cdot H(|W(e_i)| ) \leq  (k-1 + H(k-1)) \cdot \frac{\cost(C)}{k} \leq \frac{73}{60} \cdot \cost(C),
\] 
using that $1+\frac{H(k-1)-1}{k}$ is maximized for $k=5$. 
The claim follows by summing over all components $C \in \compx$.
\end{proof}

\section{A lower-bound property of nonnegative intersecting submodular functions~\ref{thm:submLowerBound}}\label{appendix:sub}

The main goal of this section is to prove 
Theorem~\ref{thm:submLowerBound}. Before presenting the core
part of the proof we discuss some basic properties of partition
functions, and make some general observations concerning
the statement of Theorem~\ref{thm:submLowerBound} which are
useful to understanding its proof.

Let $U$ be a finite set. We recall that $\mathcal{F}\subseteq 2^U$
is called a \emph{lattice family} if it is closed under unions
and intersections.
A function $\mathcal{F} \rightarrow \R$ is \emph{submodular on $\mathcal{F}$} if $f(A\cup B) + f(A \cap B) \leq f(A) + f(B)$ for all $A,B \in \mathcal{F}$; \emph{supermodular on $\mathcal{F}$}, \emph{intersecting supermodular on $\mathcal{F}$} etc., are defined similarly in the obvious way.

Any partition $\mathcal{P}=\{P_1,\dots, P_n\}$ of $U$ induces
naturally a lattice family $\mathcal{F}_{\mathcal{P}}\subseteq 2^U$
which consists of all possible unions of sets in $\mathcal{P}$.
Consider the coverage function
$\alpha(S)=|\{j\in [n] \mid P_j\cap S\neq \emptyset\}|$, which is clearly submodular.
Notice that we can write $f_{\mathcal{P}}(S) = (\alpha(S)-1)^+$, and in particular 
$f_{\mathcal{P}}(S) = \alpha(S)-1$ for all $S \neq \emptyset$.
Thus $f_{\mathcal{P}}$ is intersecting submodular:
for any $A,B \subseteq U$ with $A \cap B \neq \emptyset$,
\begin{align*}
f_{\mathcal{P}}(A) + f_{\mathcal{P}}(B) &= (\alpha(A)-1) + (\alpha(B)-1)\\
  &\geq \alpha(A\cup B) -1 + \alpha(A\cap B) -1\\
&= f_{\mathcal{P}}(A\cup B) + f_{\mathcal{P}}(A\cap B).
\end{align*}
Furthermore, it is easy to see that 
$f_{\mathcal{P}}$ is intersecting supermodular
on $\mathcal{F}_{\mathcal{P}}$. Hence $f_{\mathcal{P}}$
is \emph{intersecting modular} on $\mathcal{P}$, i.e.,
$f_{\mathcal{P}}(A)+f_{\mathcal{P}}(B)%
=f_{\mathcal{P}}(A\cup B)+f_{\mathcal{P}}(A\cap B)$ for
any intersecting sets $A,B\in \mathcal{F}_{\mathcal{P}}$.

By the above observation, the
function $f$ claimed by Theorem~\ref{thm:submLowerBound}
is by construction intersecting submodular since all $f_{\mathcal{P}^i}$
are intersecting submodular. Similarly, $f$ is monotone
due to the monotonicity of $f_{\mathcal{P}^i}$.
We prove the following slightly stronger version of
Theorem~\ref{thm:submLowerBound}.

\begin{theorem}\label{thm:strongSubmLowerBound}
Let $h:2^U\rightarrow \mathbb{R}_+$ any nonnegative
intersecting submodular function, such that all maximal
sets $S\subseteq U$ with $h(S)=0$ form a partition
$\mathcal{P}^1$ of $U$.
Then there is an intersecting submodular function $f$ of the form
\begin{equation*}
f = \sum_{i=1}^k \lambda_i f_{\mathcal{P}^i},
\end{equation*}
where $k\leq |U|-1$, $\lambda_i> 0 \;\forall i\in [k]$, 
$\mathcal{P}^1,\dots \mathcal{P}^k$ are partitions of $U$ that
become coarser with increasing index, and $f$
satisfies:
\begin{compactenum}[i)]
\item\label{item:lowerBound} $f(S) \leq h(S) \quad \forall S\subseteq U$,
\item\label{item:matchR} $f(U)=h(U)$.
\end{compactenum}
Furthermore, the partitions $\mathcal{P}^i$ together with the coefficients
$\lambda_i$, and hence $f$, can be constructed efficiently.
\end{theorem}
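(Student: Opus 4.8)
The plan is to construct the partitions $\mathcal{P}^1, \dots, \mathcal{P}^k$ and coefficients $\lambda_i$ greedily from the finest to the coarsest, peeling off one layer of $h$ at a time. We start with $\mathcal{P}^1$ given by the hypothesis: its blocks are the maximal zero sets of $h$. The key structural fact to verify first is that $\mathcal{P}^1$ is indeed a genuine partition --- i.e. the maximal zero sets are disjoint; this follows from intersecting submodularity, since if two maximal zero sets met, their union would be a larger zero set (using $0 \le h(S_1\cup S_2) + h(S_1\cap S_2) \le h(S_1)+h(S_2) = 0$ and nonnegativity), contradicting maximality. (In the weaker Theorem~\ref{thm:submLowerBound} where $h(\{v\})=0$ for all $v$, the singletons are zero sets so $\mathcal{P}^1$ covers $U$; this is where that hypothesis is used.)

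Next, the inductive step. Having built a partial sum $f^{(j)} = \sum_{i=1}^{j}\lambda_i f_{\mathcal{P}^i}$ with $f^{(j)} \le h$ everywhere, consider the residual $g := h - f^{(j)}$. I would argue that $g$ is again a nonnegative intersecting submodular function (submodularity is preserved since each $f_{\mathcal{P}^i}$ is intersecting submodular and, crucially, intersecting \emph{modular} on the coarser lattice $\mathcal{F}_{\mathcal{P}^j}$, so subtracting it does not destroy intersecting submodularity on the relevant family), and that its maximal zero sets form a partition $\mathcal{P}^{j+1}$ coarser than $\mathcal{P}^j$ --- the "coarser" part coming from the fact that $\mathcal{P}^j$'s blocks are zero sets of $g$ by construction (we will have chosen $\lambda_j$ to make $g$ vanish on all blocks of $\mathcal{P}^j$). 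If $g \equiv 0$ we stop; in particular once $\mathcal{P}^{j+1}$ is the trivial one-block partition $\{U\}$, a final coefficient $\lambda_{k}$ makes the residual vanish, giving property~\eqref{item:matchR} $f(U)=h(U)$. I then choose $\lambda_{j+1}$ as large as possible while keeping $h - f^{(j)} - \lambda_{j+1} f_{\mathcal{P}^{j+1}} \ge 0$ on every set: explicitly $\lambda_{j+1} = \min_{S} g(S)/f_{\mathcal{P}^{j+1}}(S)$ over sets $S$ with $f_{\mathcal{P}^{j+1}}(S)>0$. Since each iteration strictly coarsens the partition, and a partition of $U$ has at most $|U|$ blocks, the process terminates in at most $|U|-1$ steps, giving $k \le |U|-1$.

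For the bound $f \le h$ (property~\eqref{item:lowerBound}) this is maintained as an invariant by the choice of $\lambda_i$, and monotonicity and intersecting submodularity of $f$ follow from the corresponding properties of each $f_{\mathcal{P}^i}$, already observed in the discussion preceding the theorem. Efficiency: each $\lambda_{j+1}$ is computed by one submodular-function minimization (minimizing $g(S)$ over $S$ intersecting a fixed pair of blocks, say, to locate where $g$ is forced to zero), and identifying the maximal zero sets of an intersecting submodular function is again a polynomial number of submodular minimizations; with $O(|U|)$ rounds the whole construction is polynomial.

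The main obstacle I expect is the residual-closure step: showing that subtracting $\lambda f_{\mathcal{P}}$ from an intersecting submodular $h$ again yields an intersecting submodular function whose maximal zero sets partition $U$, and that this partition is a \emph{coarsening} of $\mathcal{P}$. The delicate point is that $f_{\mathcal{P}}$ is only intersecting \emph{modular} on $\mathcal{F}_{\mathcal{P}}$ and merely intersecting submodular in general, so one must check the submodular inequality for $g$ case-by-case according to how the two sets $A,B$ (with $A\cap B\neq\emptyset$) meet the blocks of $\mathcal{P}$ --- in particular verifying that whenever $A,B$ are unions of blocks the subtracted term contributes an equality, and otherwise that the slack in $f_{\mathcal{P}}$'s inequality only helps. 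Getting this bookkeeping exactly right, together with the termination/coarsening argument, is the crux; the rest is routine.
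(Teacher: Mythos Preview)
Your greedy coarsening skeleton is exactly the right idea and matches the paper's construction, but there is a genuine gap in the place you flag as ``the main obstacle,'' and the resolution you sketch there is in fact wrong. You write that when $A,B$ are not unions of blocks, ``the slack in $f_{\mathcal{P}}$'s inequality only helps.'' It does the opposite. For $g=h-\lambda f_{\mathcal{P}}$ to be intersecting submodular you need
\[
\bigl(h(A)+h(B)-h(A\cup B)-h(A\cap B)\bigr)\ \ge\ \lambda\bigl(f_{\mathcal{P}}(A)+f_{\mathcal{P}}(B)-f_{\mathcal{P}}(A\cup B)-f_{\mathcal{P}}(A\cap B)\bigr),
\]
and since $f_{\mathcal{P}}$ is intersecting \emph{sub}modular on $2^U$, the right-hand side is nonnegative --- so any strict slack of $f_{\mathcal{P}}$ makes the requirement \emph{harder}, not easier. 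In general the residual is not intersecting submodular on $2^U$, so your plan to take maximal zero sets of $g$ over all of $2^U$ and get a partition does not go through.

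The paper sidesteps this precisely by never leaving the lattice: at step $i$ one only considers sets in $\mathcal{F}_{\mathcal{P}^i}$, where $f_{\mathcal{P}^i}$ is intersecting \emph{modular}, so $h^{i+1}=h^i-\lambda_i f_{\mathcal{P}^i}$ stays intersecting submodular on $\mathcal{F}_{\mathcal{P}^i}$ and its maximal tight sets (within $\mathcal{F}_{\mathcal{P}^i}$) form the next partition. But this means $\lambda_i$ is chosen only to keep $h^{i+1}\ge 0$ on $\mathcal{F}_{\mathcal{P}^i}$, and hence $f\le h$ is \emph{not} an invariant on $2^U$ as you claim. Property~(\ref{item:lowerBound}) on arbitrary $S\subseteq U$ then requires a separate argument: the paper proves it by backward induction on $j$, showing $h^{j}(S)\ge \sum_{i\ge j}\lambda_i f_{\mathcal{P}^i}(S)$ for all $S\in\mathcal{F}_{\mathcal{P}^{j-1}}$. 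The key step replaces $S$ by its $\mathcal{P}^j$-closure $S'=\bigcup\{P\in\mathcal{P}^j:P\cap S\neq\emptyset\}$, using that the blocks $P$ are tight for $h^j$ together with subadditivity/uncrossing to get $h^j(S)\ge h^j(S')$, and monotonicity of the $f_{\mathcal{P}^i}$ to pass from $S$ to $S'$ on the right. This uncrossing step is the actual crux of the proof, and it is missing from your outline.
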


Notice that the condition in Theorem~\ref{thm:strongSubmLowerBound}
stating that the maximal tight sets of $h$ form
a partition of $U$ is equivalent to the property that the family of all
tight sets of $h$ covers $U$, due to the following uncrossing argument.
If the tight sets of $h$ cover $U$ then so
do the maximal tight sets; furthermore, %
for any two intersecting tight sets $A,B\subseteq U$, 
\begin{equation*}
0= h(A)+h(B) \geq h(A\cup B) + h(A\cap B) \geq 0,
\end{equation*}
by submodularity and nonnegativity of $h$; hence $A\cup B$ is also tight.
Hence, this condition is indeed weaker than the one used 
in Theorem~\ref{thm:submLowerBound}, which states that all singletons
must be tight.

\begin{proof}[Proof of Theorem~\ref{thm:strongSubmLowerBound}]
The partitions $\mathcal{P}^1,\dots, \mathcal{P}^k$ and coefficients
$\lambda_1,\dots, \lambda_k$ defining $f$ are obtained as follows.

\begin{center}
\fbox{
\begin{minipage}{0.9\linewidth}
\begin{compactenum}
\item Let $i=1$, $h^1=h$, and $\mathcal{P}^1$ be the maximal
tight sets with respect to $h$.
\item \textbf{While} $h^i(U)>0$:
\begin{compactenum}
\item\label{item:fixlambda} Let $\lambda_i\in \mathbb{R}_+$ be the
maximum value such that
\[ h^i(S)-\lambda_i f_{\mathcal{P}^i}(S) \geq 0
  \quad \forall S\in \mathcal{F}_{\mathcal{P}^i}.\]
\item\label{item:newelems} $h^{i+1} \gets h^i-\lambda_i f_{\mathcal{P}^i}$;
 let $\mathcal{P}^{i+1}\subseteq \mathcal{F}_{\mathcal{P}^i}$ be the
maximal tight sets with respect to $h^{i+1}$.
\item $i \gets i+1$.
\end{compactenum}
\end{compactenum}
\end{minipage}
}
\end{center}

We start by observing that each function $h^i$ encountered
during the algorithm is intersecting submodular over
$\mathcal{F}_{\mathcal{P}^{i-1}}$ (by convention we set 
$\mathcal{P}^0=2^U$), and that
$\mathcal{P}^i$ indeed forms a partition of $U$.
This can easily be verified through an inductive argument.
By assumption $h^1$ is intersecting submodular over $U$,
and $\mathcal{P}^0$ is a partition of $U$.
The intersecting submodularity of
$h^{i+1}=h^i-\lambda_i f_{\mathcal{P}^i}$ over
$\mathcal{F}_{\mathcal{P}^i}$ follows by the intersecting
submodularity of $h^i$ over $\mathcal{F}_{\mathcal{P}^i}$
and the intersecting supermodularity of
$f_{\mathcal{P}^i}$ over $\mathcal{F}_{\mathcal{P}^i}$.
Since $h^{i+1}$ is intersecting submodular
over $\mathcal{F}_{\mathcal{P}^i}$, the maximal tight
sets $\mathcal{P}^{i+1}$ of $h^{i+1}$ in
$\mathcal{F}_{\mathcal{P}^i}$ thus again form a partition of $U$.

The suggested procedure can indeed be
implemented efficiently.
At any iteration $i$ and for any fixed $\lambda>0$,
finding the set $S\in\mathcal{F}_{\mathcal{P}^i}$ minimizing
$h^i(S)-\lambda f_{\mathcal{P}^i}$ is a submodular function
minimization problem. Hence, in step~\eqref{item:fixlambda},
$\lambda_i$ can be found
by using e.g.~binary search, or by applying the parametric
search technique of Megiddo~\cite{megiddo_1983_applying}.

Furthermore, since $f_{\mathcal{P}_i}(S)=0$
for all the sets $S\in \mathcal{F}_{\mathcal{P}^i}$ that
are tight with respect to $h^i$---which are precisely the sets
in $\mathcal{P}^i$---we have $\lambda_i>0$ in each iteration.
By choosing $\lambda_i$ to be maximum in step~\eqref{item:fixlambda},
there is at least one set $S\in \mathcal{F}_{\mathcal{P}^i}$
that is tight with respect to $h^{i+1}$ but not $h^i$. Hence,
$|\mathcal{P}^1|>|\mathcal{P}^2|> \dots$, and the procedures
will terminate.
Let $k$ be the index of the last $\lambda$ that was
set in step~\eqref{item:fixlambda}. Hence, $h^{k+1}(U)=0$,
and $\mathcal{P}^{k+1}=\{U\}$.
Since we start with $|\mathcal{P}^0|\leq |U|$ and the partitions
coarsen at each step, this implies $k\leq |U|-1$.
Additionally, point~\eqref{item:lowerBound} of
Theorem~\ref{thm:submLowerBound} clearly holds by the
termination criterion of the while-loop.

Hence, it remains to prove point~\eqref{item:matchR},
which we prove by showing the following claim through
induction from $j=k+1$ to $j=1$, where $j=1$ corresponds
to the statement~\eqref{item:matchR}:
\begin{equation}\label{eq:indProof}
h^j(S) - \sum_{i=j}^k \lambda_i f_{\mathcal{P}^i}(S) \geq 0
\quad \forall S\in \mathcal{F}_{\mathcal{P}^{j-1}}.
\end{equation}
For $j=k+1$,~\eqref{eq:indProof} clearly holds, since
$h^{k+1}(S)=h^{k}(S)-\lambda_{k} f_{\mathcal{P}^{k}}(S) \geq 0$
$\forall S\in \mathcal{F}_{\mathcal{P}^{k}}$, by
choice of $\lambda_k$. Now let $j\in \{1,\dots, k\}$ and assume
that~\eqref{eq:indProof} holds for all values above $j$.
Let $S\in \mathcal{F}_{\mathcal{P}^{j-1}}$, and we define
$S'\in \mathcal{F}_{\mathcal{P}}^j$ to be the minimal set
in $\mathcal{F}_{\mathcal{P}}^j$ that contains $S$, i.e.,
\begin{equation*}
S' := \bigcup_{\substack{P\in \mathcal{P}^j,\\
P\cap S \neq \emptyset}} P.
\end{equation*}
Notice that
\begin{align}\label{eq:coarsening}
h^{j}(S) =
h^{j}(S) +
\sum_{\substack{P\in \mathcal{P}^j,\\ P\cap S\neq \emptyset}}
h^{j}(P)\geq h^{j}(S')
\end{align}

where the equality holds since all sets in $\mathcal{P}^j$
are tight with respect to $h^j$ by construction, and the
inequality follows by standard uncrossing arguments:
for any set $P\in \mathcal{P}^j, P\cap S\neq \emptyset$,
we have $h^{j}(S)+h^{j}(P)\geq h^i(S\cup P)$
by intersecting submodularity and nonnegativity of $h^i$,
and thus the two terms $h^i(S_{i})$ and
$h^i(P)$ can be replaced by $h^i(S_{i-1}\cup P)$ and
this procedure can be repeated.
In other words, we simply exploit that any nonnegative intersecting
submodular function has the subadditivity property for any
family of sets that are connected when seen as hyperedges
on the given ground set.

The inductive step of the proof of~\eqref{eq:indProof}
finally follows by
\begin{align*}
h^j(S)-\sum_{i=j}^k \lambda_i f_{\mathcal{P}^i}(S)
 &\geq h^{j}(S') - \sum_{i=j}^k \lambda_i f_{\mathcal{P}^i}(S')
 = h^{j+1}(S') - \sum_{i=j+1}^k \lambda_i f_{\mathcal{P}^i}(S')
 \geq 0,
\end{align*}
where the first inequality follows from~\eqref{eq:coarsening}
and the monotonicity of $\sum_{i=1}^k \lambda_i f_{\mathcal{P}^i}(S)$,
and the last one by the inductive hypothesis.
\end{proof}

\section{Equivalence of the hypergraphic and bidirected cut relaxations in quasi-bipartite
  graphs\label{sec:equivalence}}

Let $G=(V,E)$ be a \emph{quasi-bipartite} graph, where
Steiner vertices are not connected by edges (i.e., we have edges only between 
terminals and Steiner vertices or between terminals and terminals). 
 Let $\vec{E}$ be the bidirection of $E$, 
 i.e., for any $\{u,v\} \in E$, $\vec{E}$ contains arcs $(u,v)$ and $(v,u)$.

The \emph{bidirected cut relaxation} with \emph{root} terminal $r \in R$ is  
\begin{equation}\tag{{\textsc{bcr}($r$)}}\label{eq:BCR}
\begin{aligned}
 \min \sum_{e \in \vec{E}} c_e x_e &    \\  
  x(\delta^+(S)) &\displaystyle\geq 1 \quad \forall S \subseteq V \backslash \{ r\}: S \cap R \neq \emptyset \\
  x_e &\geq 0 \quad \forall e \in \vec{E}  
\end{aligned}
\end{equation}

In words: we need to reserve enough capacity in order to support a  %
unit flow from every terminal to the current root $r$.
It was proven in \cite{chakrabarty_2010_hypergraphic} that in 
quasi-bipartite graphs, the value of \eqref{eq:BCR} coincides with
the optimum value of \eqref{eq:kLP}. This was done by lifting an
optimum dual solution for the partition-based relaxation (which is equivalent 
to \eqref{eq:kLP} even in general graphs~\cite{chakrabarty_2010_hypergraphic}) to a dual solution of 
\eqref{eq:BCR}. %
However, the authors of \cite{chakrabarty_2010_hypergraphic} posed as an open question: for a given optimum bidirected cut solution, can a
corresponding primal solution to \eqref{eq:kLP} be directly
extracted without solving \eqref{eq:kLP}? We answer this 
question affirmatively.
 
To avoid an unnecessary case analysis, we split direct edges between terminals 
by inserting a dummy Steiner vertex (we split the edge cost arbitrarily 
among the two parts).
Let $x$ be an optimum solution to \eqref{eq:BCR};
then the \emph{natural decomposition} is as follows. %
For a star with center $u \in V\backslash R$, take an arc $(u,s)$ 
with positive outgoing flow and all arcs $H = \{ (t,u) \mid x(t,u) > 0; \; t\neq s \}$
carrying incoming flow. Let $\epsilon$ be the minimum capacity on 
any of these arcs. Then transfer this capacity into a component $\{ s\} \cup \{ t \mid (t,u) \in H\}$. 
Iterate this process until all capacity has been transferred. 
\todo{This should be expanded slightly.}
The main result of this section is:
\begin{theorem} \label{thm:NaturalDecomposition}
Let $x$ be an optimum solution for \eqref{eq:BCR}. Then the natural decomposition 
yields a feasible optimum solution for \eqref{eq:kLP} with the same objective value.
\end{theorem}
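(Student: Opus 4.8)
The plan is to run the natural decomposition, tracking the pair $(x',y)$ consisting of the residual bidirected solution and the partial hypergraphic solution built so far, starting from $(x,0)$ and (as we shall argue) ending at $(0,y)$. It is convenient to first replace $x$ by a well‑structured optimal solution. After splitting terminal–terminal edges as described (which changes neither \eqref{eq:BCR} nor \eqref{eq:kLP}), take acyclic unit $t$–$r$ flows $f^t$ for $t\neq r$ realizing $x$, and reroute ("funnel") them so that at every vertex $v\neq r$ all flows passing through $v$ leave along one common arc $e_v$; hitchhiking on an already‑paid route is free, so this does not increase cost. Put $x^\ast_e=\max_t f^t_e$; then $x^\ast\le x$, so $x^\ast$ is still optimal, and it satisfies (i) $x^\ast(\delta^+(r))=0$ and $x^\ast(\delta^+(t))=1$ for every $t\neq r$ (so $x^\ast(t,u)\in\{0,1\}$); (ii) at each Steiner vertex $u$ the whole out‑flow sits on $e_u$ and $x^\ast(\delta^+(u))\le x^\ast(\delta^-(u))$, both vanishing simultaneously; (iii) no Steiner vertex all of whose positive‑flow arcs touch a single terminal carries any flow (an acyclic flow cannot traverse such a vertex). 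Each of (i)–(iii) is a routine exchange argument.

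Next, the cost identity and termination. One step that contracts a component $C=\{s,t_1,\dots,t_k\}$ at Steiner center $u$ removes $\epsilon$ units from $(u,s)$ and from each $(t_i,u)$ and transfers a component of weight $\epsilon$ and cost $\epsilon\bigl(c(u,s)+\sum_i c(u,t_i)\bigr)$; since a bidirected arc inherits the symmetric cost of its edge, $\cost(x')+\cost(y)$ is invariant, hence equals $\cost_{\eqref{eq:BCR}}(x)$ throughout. Each step zeroes at least one arc, so there are at most $\lvert\vec{E}\rvert$ steps. The point requiring care is that a valid step always exists while $x'\neq 0$: this is exactly where the precise rule for picking the outgoing arc $s$ (and, when $u$ has a unique in‑neighbour, grouping all of $u$'s out‑arcs into a single star) matters. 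I would maintain the invariant that in $x'$ every Steiner vertex still has balanced support in the sense of (ii)–(iii), so that any residual positive arc exposes a processable Steiner vertex, and that a vertex gets fully drained because its out‑flow starts pinned to a single arc of value exactly $1$. I expect this bookkeeping to be the more delicate part of the argument.

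Feasibility of $y$ has an easy half and a hard half. The total flow entering Steiner vertices, $\sum_u x'(\delta^-(u))$, drops by exactly $\epsilon k=\epsilon(|C|-1)$ at each step (the removed arc $(u,s)$ leaves a Steiner vertex), so at termination $\sum_C y_C(|C|-1)=\sum_u x^\ast(\delta^-(u))=\sum_{t\in R}x^\ast(\delta^+(t))=|R|-1$ by (i) and quasi‑bipartiteness, giving the equality constraint. For $\sum_C y_C(|S\cap C|-1)^+\le|S|-1$, I would use the identity $\sum_{C\ni t}y_C=x^\ast(\delta^-(t))+x^\ast(\delta^+(t))$ (a component containing $t$ corresponds to a step removing an arc incident to $t$, with $t$ as out‑leaf or in‑leaf, and every arc at $t$ is removed since $x'$ ends at $0$). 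Substituting and using (i) reduces the inequality to $\sum_{C:\,C\cap S\neq\emptyset}y_C\ge x^\ast(\delta^-(S))+1$ when $r\notin S$, and to $\ge x^\ast(\delta^-(S))$ when $r\in S$. The term $x^\ast(\delta^-(S))$ is recovered exactly from the steps whose out‑leaf lies in $S$ (whose out‑arc is precisely an arc into $S$), which settles the case $r\in S$; for $r\notin S$ the extra unit must be charged to steps whose out‑leaf lies outside $S$, and this is where the bidirected‑cut constraint $x^\ast(\delta^+(S))\ge1$ gets used, traced through the decomposition. I expect this last charging to be the second main obstacle.

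Finally, putting the pieces together: the decomposition terminates with $x'=0$, $y\ge0$ is feasible for \eqref{eq:kLP}, and $\cost(y)=\cost_{\eqref{eq:BCR}}(x)$. Since $x$ is optimal for \eqref{eq:BCR} and, by the result of Chakrabarty, Könemann and Pritchard \cite{chakrabarty_2010_hypergraphic}, \eqref{eq:BCR} and \eqref{eq:kLP} have equal optimum on quasi‑bipartite graphs, $\cost(y)=\mathrm{OPT}(\eqref{eq:kLP})$, so $y$ is an optimal solution of \eqref{eq:kLP}.
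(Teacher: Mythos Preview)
Your proposal has two acknowledged gaps, and neither is a minor detail: both are precisely the places where the real content of the proof lives.

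First, the invariant/termination argument. After your funneling preprocessing, each Steiner vertex $u$ has its out-flow concentrated on a single arc $(u,s_u)$, but the out-flow at $u$ is not ``exactly~$1$'' in general---that holds only for terminals. In one step at $u$ you remove $\epsilon$ from the single out-arc but $\epsilon$ from \emph{each} in-arc, so the out-flow can hit zero while positive in-flow remains; property~(ii) is then destroyed and you are stuck at a vertex with no valid step. You flag this as ``the more delicate part'' but give no mechanism to restore the invariant, and I do not see one that works with your setup.

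Second, the feasibility charging. Your reduction to $\sum_{C:\,C\cap S\neq\emptyset} y_C\ge x^\ast(\delta^-(S))+1$ for $r\notin S$ is correct, and the first $x^\ast(\delta^-(S))$ does come from steps with out-leaf in $S$. But a single step with out-leaf outside $S$ can decrement \emph{several} arcs of $\delta^+(S)$ at once, so ``$x^\ast(\delta^+(S))\ge 1$'' does not immediately yield one extra unit of $y$-mass. You do not supply the charging, and it is not routine.

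The paper's proof avoids both obstacles by a completely different route: it never verifies the \eqref{eq:kLP} constraints directly. Instead, for each step it \emph{adds a fresh copy $\bar u$} of the Steiner vertex and transfers $\epsilon$ capacity to it, proving that the resulting vector remains a feasible (hence optimal) \eqref{eq:BCR} solution. The key tool is \emph{root relocation}: if transferring $\epsilon$ violated some tight cut $S\not\ni r$, one relocates the root to the cheapest neighbour $r'\in N(u)\cap S$, and an easy optimality argument (any outgoing flow on $(u,s)$ with $c(u,s)>c(u,r')$ would contradict optimality of $x^{(r')}$) produces a contradiction. Iterating, one reaches a \eqref{eq:BCR} solution in which every Steiner copy has flow on a single out-arc and equal flow on all its arcs; this is literally a feasible solution to the \emph{directed} component relaxation, whose projection is feasible for \eqref{eq:kLP} by~\cite{PV03}. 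In particular, the paper does not invoke~\cite{chakrabarty_2010_hypergraphic} for optimality---it gets it for free since cost is preserved and \eqref{eq:kLP} always lower-bounds \eqref{eq:BCR}.
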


Imagine that we want to ``relocate'' the
root from $r$ to another terminal $r'$. 
We can do this by considering the unit flow from
$r'$ to $r$, and reversing all capacity corresponding to this flow.
This provides a feasible
solution for ${\textsc{bcr}}(r')$ that we term $x^{(r')}$, which again has the same cost (see \cite{GM93} for a proof). 
Note that for any $\{ u,v\} \in E$, the sum $x^{(r)}(u,v) + x^{(r)}(v,u)$ is independent of $r$.
For a Steiner vertex $u \in V \backslash R$, let $N(u) := \{ v \mid \{ u,v\} \in E\}$ be the set of neighbours in the
star with center $u$. %
It suffices to show Theorem~\ref{thm:NaturalDecomposition} for basic solutions, since the
decomposition of a convex combination of capacity vectors equals the 
convex combination of natural decompositions. 
By standard arguments, we may assume that the edge costs are distinct for all edges in the
same star. 

\begin{lemma}\label{lem:BCRincomingflow}
In a star with center $u$ and $r \in N(u)$ one has  $x^{(r)}(r,u) = 0$  and $x^{(r)}(u,s) = 0$ 
for each $s \in N(u)$ with $c(u,s) > c(u,r)$.
\end{lemma}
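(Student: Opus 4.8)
The statement concerns a basic optimal solution $x^{(r)}$ to $\textsc{bcr}(r)$, restricted to the star around a Steiner vertex $u$, under the assumption that the edge costs $c(u,s)$ are distinct over $s\in N(u)$. There are two claims: (i) the arc $(r,u)$ carries no flow, and (ii) for any neighbour $s$ with $c(u,s)>c(u,r)$, the outgoing arc $(u,s)$ carries no flow. My plan is to prove both by a local rerouting (exchange) argument: if either held with strictly positive flow, I would produce a feasible solution of strictly smaller cost, contradicting optimality (and basicness, to rule out ties — but distinct costs already do that).

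For (i): suppose $x^{(r)}(r,u)=\epsilon>0$. The key observation, which I would establish first, is a flow-conservation-type identity at the Steiner vertex $u$: in any feasible solution to the bidirected cut relaxation, one may assume the net flow into $u$ is balanced by net flow out (Steiner vertices are never ``sinks'' in an optimal solution — any excess incoming capacity at a degree-bounded Steiner node can be cancelled). Concretely, I expect $x^{(r)}(\delta^-(u)) = x^{(r)}(\delta^+(u))$ at optimality. Granting this, the flow of $\epsilon$ on $(r,u)$ must leave $u$ on some arc $(u,t)$ with $t\neq r$; but then the two arcs $(r,u)$ and $(u,t)$ can be replaced by $\epsilon$ units on the single arc $(r,t)$ — wait, $(r,t)$ need not exist in a star. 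Instead, the correct move is: reducing $x^{(r)}(r,u)$ and the corresponding $x^{(r)}(u,t)$ each by $\epsilon$ cannot destroy feasibility, because every cut constraint $x(\delta^+(S))\ge 1$ with $r\notin S$ that used the arc $(r,u)$ has $r\notin S, u\in S$, hence also contains (by the conservation identity, after rerouting) enough capacity on other arcs leaving $S$; one formalizes this by checking that the path $r\to u\to t$ can be ``short-circuited'' using the reverse representation. The cleanest route is actually to invoke the root-relocation fact from the excerpt: $x^{(r)}(u,v)+x^{(r)}(v,u)$ is independent of $r$, so it suffices to analyze the star combinatorially.

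The main obstacle is getting the local rerouting argument to genuinely preserve global feasibility — a naive ``cancel an incoming and an outgoing arc'' move can break a cut constraint separating the two endpoints of the rerouted flow. I would handle this by arguing at the level of the unit flows themselves: $x^{(r)}$ supports, for each terminal $q$, a unit $q$-$r$ flow, and feasibility is equivalent to the existence of such flows simultaneously (fractionally) within the capacities. Rerouting at the star around $u$ means replacing, in each such flow that enters $u$ from $r$ and leaves toward $t$, the subpath $r\to u\to t$ by going directly — but since $q\to\cdots\to r\to u\to t\to\cdots\to r$ would revisit $r$, such a subpath never occurs in a simple $q$-$r$ flow path, so in fact $x^{(r)}(r,u)>0$ can be assumed to carry only ``wasted'' circulation, which can be removed outright, lowering the cost by $\epsilon\, c(r,u)>0$. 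This contradiction gives (i). For (ii): with (i) in hand, if $c(u,s)>c(u,r)$ and $x^{(r)}(u,s)=\delta>0$, then $s$ sends its unit flow to $r$ not via $u$... more precisely, the $\delta$ units leaving $u$ toward $s$ must have entered $u$ from some neighbour $t\neq s$ with $c(u,t)\le c(u,s)$... then swapping: route through $r$ instead using arc $(u,r)$ in place of $(u,s)$ (legal since $(r,u)$ is unused by (i), so $(u,r)$ has slack in its total-capacity budget) saves $\delta\,(c(u,s)-c(u,r))>0$, contradiction. I would write this second part as a direct consequence of (i) plus a one-line exchange inequality using the distinctness of costs.
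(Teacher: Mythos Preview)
Your two local operations---delete the flow on $(r,u)$, and redirect flow from $(u,s)$ to $(u,r)$---are exactly what the paper does. What you are missing is the one-line reason each operation preserves feasibility, and this omission leads you into unnecessary (and in part (ii), incorrect) justifications.

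The key observation is that every constraint of \eqref{eq:BCR} is indexed by a set $S$ with $r\notin S$. Hence: (i) the arc $(r,u)$, having tail $r$, lies in $\delta^+(S)$ for \emph{no} constraint set $S$, so deleting its capacity cannot violate any constraint; (ii) the arc $(u,r)$ lies in $\delta^+(S)$ for \emph{every} constraint set $S$ containing $u$, and in particular for every $S$ in which $(u,s)\in\delta^+(S)$, so shifting capacity from $(u,s)$ to $(u,r)$ can only increase each left-hand side. Both parts are then immediate from optimality and distinctness of the costs. There is no need for flow conservation at $u$, flow decompositions, or the root-relocation identity.

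Your proposed justification for (ii)---that $(u,r)$ has ``slack in its total-capacity budget'' because $(r,u)$ is unused---is not correct: in this formulation the variables $x_{(u,r)}$ and $x_{(r,u)}$ are independent and unbounded above, so there is no such budget. The redirection works for the structural reason above, not because of any capacity slack.
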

\begin{proof} 
The flow on arc $(r,u)$ can be removed and the flow on $(u,s)$ arc can be redirected to
$(u,r)$. Both operations would leave the solution feasible and decrease the cost, contradicting optimality. 
\end{proof}
See Figure~\ref{fig:PositiveFlowInBCRStar} for an illustration of the claim.
\begin{figure}[H]
\begin{center}
\includegraphics[page=5]{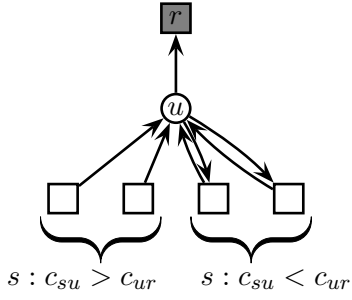}
\caption{Arcs in the optimum solution $x^{(r)}$ that may carry positive flow. \label{fig:PositiveFlowInBCRStar}}
\end{center}
\end{figure}

Next, we consider one iteration of the natural decomposition for a star with center $u$. 
For this reason, insert an extra Steiner vertex $\bar{u}$ into the graph, 
which has an edge $\{\bar{u},s\}$ with $s \in R$ iff there is an edge $\{ u,s\} \in E$ with $x^{(r)}(s,u)>0$.
For $e=(u,s)$, we abbreviate $\bar{e} = (\bar{u},s)$ (see Figure~\ref{fig:TransferingCapacity}).
We define  $c(\bar{e}) := c(e)$ and  $x^{(r)}(e) = 0$ for all $e \in \delta(\bar{u})$.
Note that $x^{(r)}$ is still an
optimum solution.

\begin{lemma} \label{lem:SplittingBCR}
Let $r := \textrm{argmin} \{ c(u,r) \mid r \in N(u) \}$,  $H := \{ (u,r) \} \cup \{ (s,u) \mid s \in N(u) \backslash \{ r\}\}$ and $\varepsilon := \min\{ x^{(r)}(e) \mid e \in H\}$. %
Starting from  $x^{(r)}$, transfer capacity of $\varepsilon$ from all arcs $e \in H$ to $\bar{e}$ and %
term the new capacity reservation $\bar{x}^{(r)}$. 
Then the new capacity vector $\bar{x}^{(r)}$ is a feasible optimum solution 
for \eqref{eq:BCR}.
\end{lemma}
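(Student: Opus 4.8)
The plan is to reduce the statement to a feasibility check and then verify feasibility cut by cut. \emph{Reduction.} The objective value is preserved for free: the transfer moves $\varepsilon$ units of capacity off each arc $e\in H$ and onto the arc $\bar e$, and $c(\bar e)=c(e)$ by construction, so $\sum_e c(e)\bar x^{(r)}(e)=\sum_e c(e)x^{(r)}(e)$ equals the optimum value of~\eqref{eq:BCR}. Hence it suffices to prove that $\bar x^{(r)}$ is feasible, i.e.\ that $\bar x^{(r)}(\delta^+(S))\ge 1$ for every $S\subseteq V\setminus\{r\}$ with $S\cap R\ne\emptyset$. (We regard $\bar u$ as adjacent to $r$ as well, this being the leaf of the new star that receives $u$'s outgoing flow; if $\varepsilon=0$ there is nothing to prove, so assume $\varepsilon>0$.)

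\emph{Set-up.} By Lemma~\ref{lem:BCRincomingflow}, since $r$ is the unique cheapest neighbour of $u$, the only arcs incident to $u$ carrying positive capacity in $x^{(r)}$ are the out-arc $(u,r)$ and the in-arcs $(s,u)$ with $s\in N(u)\setminus\{r\}$; write $b=x^{(r)}(u,r)$ and $a_s=x^{(r)}(s,u)$, so $\varepsilon=\min(b,\min_s a_s)$, and recall $x^{(r)}$ vanishes on all arcs at $\bar u$. Thus $\bar x^{(r)}$ and $x^{(r)}$ differ only on the arcs $(s,u)$ (down by $\varepsilon$), $(s,\bar u)$ (up by $\varepsilon$), $(u,r)$ (down by $\varepsilon$), and $(\bar u,r)$ (up by $\varepsilon$), and for a fixed $S$ the difference $\bar x^{(r)}(\delta^+(S))-x^{(r)}(\delta^+(S))$ is determined purely by which of $u$, $\bar u$, and the neighbours $s$ lie in $S$ (recall $r\notin S$).

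\emph{Case analysis on the positions of $u$ and $\bar u$.} If $u$ and $\bar u$ lie on the same side of the cut, then whenever an affected arc $e$ is in $\delta^+(S)$ its partner $\bar e$ is too and carries the shifted $\varepsilon$, and the affected arcs outside $\delta^+(S)$ are irrelevant; hence $\bar x^{(r)}(\delta^+(S))=x^{(r)}(\delta^+(S))\ge 1$. If exactly one of $u,\bar u$ is in $S$, then $\bar x^{(r)}(\delta^+(S))-x^{(r)}(\delta^+(S))=c\varepsilon$ for a small integer $c$ whose value depends only on the number of neighbours $s$ of $u$ lying in $S$; when $c\ge 0$ feasibility of $x^{(r)}$ finishes it. When $c<0$ I would compare $S$ with the set $S'$ obtained by moving the stray vertex ($u$ or $\bar u$) across the cut, which lands $S'$ in the same-side case so that $\bar x^{(r)}(\delta^+(S'))=x^{(r)}(\delta^+(S'))\ge 1$, and relate the two cut values, which differ by an explicit multiple of $\varepsilon$. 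In the subcase $u\in S$, $\bar u\notin S$ with no neighbour $s$ of $u$ in $S$, the arc $(u,r)$ contributes to $\delta^+(S)$ but not to $\delta^+(S\setminus\{u\})$, giving the extra slack $x^{(r)}(\delta^+(S))\ge x^{(r)}(\delta^+(S\setminus\{u\}))+b\ge 1+\varepsilon$, which absorbs the loss of $\varepsilon$.

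\emph{Main obstacle.} The genuinely delicate configuration is $\bar u\in S$, $u\notin S$, with $p\ge 2$ neighbours of $u$ inside $S$: here $\delta^+(S)$ loses $(p-1)\varepsilon$ of capacity when passing from $x^{(r)}$ to $\bar x^{(r)}$, so to conclude $\bar x^{(r)}(\delta^+(S))\ge 1$ one must show the corresponding cut of $x^{(r)}$ carries slack at least $(p-1)\varepsilon$ beyond $1$. This does not follow from feasibility of $x^{(r)}$ alone; I expect the argument to use that $x^{(r)}$ is a \emph{basic} optimal solution of~\eqref{eq:BCR}---exploiting the distinctness of the star edge costs and the minimality/tightness of the cuts that support $x^{(r)}$, e.g.\ by passing to an inclusion-minimal violated set and uncrossing it with the tight cuts of $x^{(r)}$. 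Making this slack estimate precise is where the real work of the proof lies.
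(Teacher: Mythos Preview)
Your setup and the easy cases are fine, but the proposal is genuinely incomplete: you explicitly leave the ``main obstacle'' case ($\bar u\in S$, $u\notin S$, and $p\ge 2$ neighbours of $u$ in $S$) unresolved, and your speculation about how to close it is off track. The paper does \emph{not} use basicness of $x^{(r)}$ or any uncrossing of tight cuts here, and there is no reason to expect a slack of $(p-1)\varepsilon$ to materialize from such structural arguments.

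The missing idea is the \emph{root relocation} trick already set up just before the lemma. Argue first for small $\varepsilon>0$: a violated cut must have been tight for $x^{(r)}$, and as you computed, a tight cut that loses capacity must have $u\notin S$ and $|N(u)\cap S|\ge 2$. Let $r'=\mathrm{argmin}\{c(u,s):s\in N(u)\cap S\}$. Since $x^{(r)}(\delta^+(S))=1$, the unit $r'\!\to r$ flow supported by $x^{(r)}$ saturates every arc of $\delta^+(S)$, in particular every arc $(s,u)$ with $s\in N(u)\cap S$. Relocating the root to $r'$ reverses this flow, yielding $x^{(r')}(u,s)>0$ for each such $s\ne r'$. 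But Lemma~\ref{lem:BCRincomingflow} applied at root $r'$ forces $x^{(r')}(u,s)=0$ whenever $c(u,s)>c(u,r')$, a contradiction since there is at least one such $s$. Hence no tight cut can lose capacity, so $\bar x^{(r)}$ is feasible for all small $\varepsilon$; moreover the same argument shows no cut can \emph{become} tight as $\varepsilon$ grows, so the only obstruction is an arc capacity hitting zero, i.e.\ $\varepsilon=\min_{e\in H}x^{(r)}(e)$ works. This is what you should replace your final paragraph with.
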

\begin{proof}
We first show that the claim holds for \emph{some} $\varepsilon > 0$ which is small enough.
Consider any cut $S \subseteq V \backslash \{ r\}$ and assume for the sake of a contradiction that  $\bar{x}^{(r)}(\delta^+(S)) < 1$.
For $\varepsilon > 0$ small enough, this may only happen if $S$ was a tight cut before, i.e. $x^{(r)}(\delta^+(S)) = 1$.
Furthermore, any critical cut must contain at least two arcs of the 
form $(s,u)$, i.e., $|N(u) \cap S| \geq 2$. Pick $r' := \textrm{argmin}\{ c(u,r') \mid r' \in N(u) \cap S \}$. 
According to Lemma~\ref{lem:BCRincomingflow}, the flow is $x^{(r)}(e) = 0$ for $e \in (\delta^-(u)\cup\delta^+(u))\backslash H$. %
Since $x^{(r)}(\delta^+(S)) = 1$, the unit flow from $r'$ to $r$ needs all capacities on $(s,u)$
arcs for $s \in N(u) \cap S$. Consequently, when relocating the root to $r'$, the flow
on all these arcs must be turned around completely. In particular $x^{(r')}(u,s)>0$ for $s \in (N(u) \cap S)\backslash\{r'\}$
contradicting Lemma \ref{lem:BCRincomingflow}.
\begin{figure}[H]
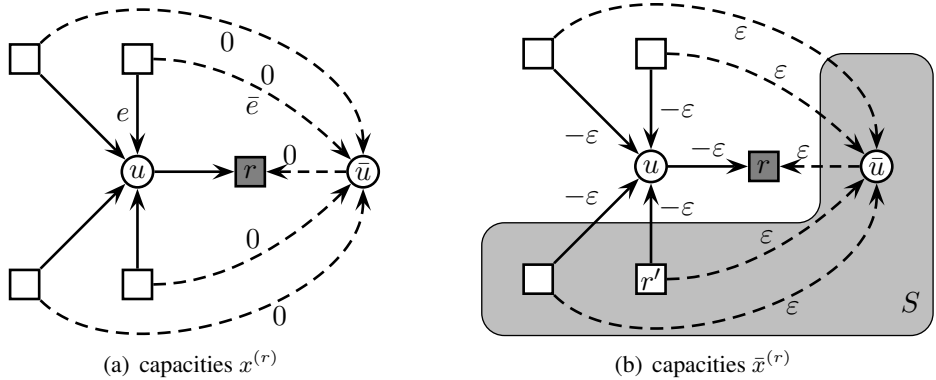

\begin{center}
\subfigure[capacities $x^{(r)}$]{ \includegraphics[page=6]{picture-sources-pics.pdf}} \hspace{1cm}
\subfigure[capacities $\bar{x}^{(r)}$]{ \includegraphics[page=7]{picture-sources-pics.pdf}}

\caption{Transferring capacity of $\varepsilon$ according to Lemma~\ref{lem:SplittingBCR}. $(a)$ visualizes capacity in $x^{(r)}$, where newly added edges $\bar{e}$ are dashed. $(b)$ depicts $\bar{x}^{(r)}$ and a potentially critical cut $S$. \label{fig:TransferingCapacity}}
\end{center}
\end{figure}
We conclude that we can choose \emph{some} $\varepsilon > 0$ s.t. $\bar{x}^{(r)}$ is feasible.
But the argument above shows that no cut $S$ can become tight, thus the only limitation on $\varepsilon$
is the arc capacity. The claim then follows. 
\end{proof}
We apply Lemma~\ref{lem:SplittingBCR} iteratively to all stars, adding copies of Steiner nodes as required, until we have a solution $x^*$ (with root $r^* \in R$ chosen arbitrarily) where \begin{inparaenum}[(i)] \item every Steiner node has flow on at most one outgoing arc, and \item the flow on all arcs of a star carrying a nonzero amount of flow is the same. \end{inparaenum}
Then $x^*$ induces a solution to the
\emph{directed component-based relaxation}: %
\begin{eqnarray*}
  \min \sum_{C \in \mathcal{K}, s \in C} \cost(C) \cdot y_{C,s} & &   \\
  \sum_{C \in \mathcal{K},s \in C:  C \cap S \neq \emptyset, s \notin S} y_{C,s} &\geq& 1 \quad \forall \emptyset \subsetneq S\subseteq R\backslash\{r^*\} \\
 y_{C,s} &\geq& 0 \quad \forall C \in \mathcal{K} \; \forall s\in C.  %
\end{eqnarray*}
The solution $y^*$ corresponding to $x^*$ is obtained by setting, for each flow-carrying star with terminal set $C$ and outgoing flow on arc $(u,s)$, $y^*_{C,s} = x^*(u,s)$ (the common flow value in the star).
All other components of $y^*$ are zero.
It is easily checked that $y^*$ is feasible, and has the same cost as $x^*$ (and hence $x$).
Then projecting to the undirected formulation, the vector $(\sum_{s \in C} y_{C,s})_{C \in \mathcal{K}}$ is feasible for \eqref{eq:kLP} (see \cite{PV03}), and moreover corresponds precisely to the natural decomposition described earlier.

\section{$\mathbf{NP}$-hardness for solving the component-based relaxation}\label{appendix:nphardness}

It is well-known that there is a PTAS for solving ~\eqref{eq:kLP}. In other words, for every 
fixed $\varepsilon > 0$, there is a polynomial time algorithm that computes a feasible fractional solution 
to the considered hypergraphic relaxation~\eqref{eq:kLP}, which is within a $1+\varepsilon$ factor
of the optimum fractional value. 
We argue now, that this is best possible (answering the posed question in \cite{chakrabarty_2010_hypergraphic}).
\begin{theorem}
Solving \eqref{eq:kLP} is strongly $\mathbf{NP}$-hard. 
\end{theorem}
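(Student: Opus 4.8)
The plan is to reduce from a known strongly $\mathbf{NP}$-hard problem. The natural candidate is \textsc{Steiner Tree} itself with small integer edge lengths, or better, a problem whose hardness is robust to the ``gap'' introduced by passing to a fractional optimum. Since \eqref{eq:kLP} has a PTAS (by Borchers--Du together with the polynomial solvability of the $k$-restricted LP), we cannot hope to reduce from a problem where a $(1+\epsilon)$-approximation already settles the decision question; the reduction must produce instances where the exact LP value is some precise rational number that encodes the answer to an $\mathbf{NP}$-hard question, and where that value is \emph{bounded away} from what a near-optimal Steiner tree would give, so that the hardness survives. A clean route is to reduce from \textsc{3-Dimensional Matching} or \textsc{Exact Cover by 3-Sets}, or from a variant of \textsc{Set Cover} with bounded set sizes, constructing a gadget graph in which the optimum value of \eqref{eq:kLP} is exactly $|R|-1$ minus a correction term that is maximized precisely when a perfect matching / exact cover exists.

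The key steps, in order, would be: (1) Fix the source problem---I would try \textsc{Exact Cover by 3-Sets} (X3C), which is strongly $\mathbf{NP}$-complete. (2) Build an instance $G$ of Steiner tree whose terminal set $R$ corresponds to the ground elements (plus perhaps a root), with one non-terminal ``set vertex'' per 3-set, joined to the three terminals it covers, all unit cost; the full components available are then the stars of size $3$ (cost $3$) and the trivial edges of size $2$ among terminals (cost, say, $2$ each, or arranged so that using stars is strictly preferable). (3) Analyze the LP: using the partition constraints and the structure of \eqref{eq:kLP}, show that the optimum fractional value is an explicit function of the fractional matching number of the 3-uniform hypergraph, and that this value is integral---equal to the trivial bound---\emph{if and only if} an exact cover exists; otherwise the LP optimum is strictly larger by at least $1/\text{poly}$, but crucially by a fixed rational amount determined by the combinatorics, not by $\epsilon$. (4) Conclude that an exact solver for \eqref{eq:kLP} decides X3C, and since all numbers in the construction are polynomially bounded integers, this is a \emph{strongly} $\mathbf{NP}$-hard reduction. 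One should double-check that the equivalence of \eqref{eq:kLP} with the partition-based LP (Chakrabarty et al.) is what makes the LP-value computation tractable to write down.

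The main obstacle will be step (3): pinning down the \emph{exact} optimum of \eqref{eq:kLP} on the gadget. The LP is not simply the LP-relaxation of the natural covering IP---it has the nontrivial packing constraints $\sum_C x_C(|S\cap C|-1)^+ \le |S|-1$ for all $S$, and the equality $\sum_C x_C(|C|-1)=|R|-1$. I would need a matching primal/dual pair: exhibit a fractional solution achieving the claimed value (e.g. spread weight uniformly over stars when no exact cover exists, using the LP-optimal fractional matching), and a dual certificate (a suitable conic combination of partition-function constraints, exactly in the spirit of Theorem~\ref{thm:submLowerBound} and the Chakrabarty et al.\ partition constraints) proving optimality. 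Getting the dual to certify the \emph{strict} gap in the ``no'' case---rather than just $\le$---is the delicate part, and may require choosing the costs of the size-$2$ components carefully (or forbidding them, working with a $3$-uniform instance) so that the only way to meet the equality constraint cheaply is through a (fractional) exact cover. A fallback, if a transparent exact-cover gadget proves awkward, is to reduce instead from a weighted problem such as \textsc{Minimum Steiner Tree in graphs with edge weights $1$ and $2$} restricted to instances where the optimal $k$-restricted tree for $k=3$ already equals the optimum, leveraging that on such instances the integrality gap is controlled and the LP value coincides with a hard-to-compute integer.
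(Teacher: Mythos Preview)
Your proposal has a genuine gap. In your X3C gadget the set $\comp$ of components is \emph{polynomial in size}: the only components are the stars around the set vertices (and possibly some terminal--terminal edges). On any instance where $\comp$ is polynomial, \eqref{eq:kLP} is just an explicit LP with polynomially many variables and exponentially many constraints that admit a poly-time separation oracle, so it can be solved exactly in polynomial time. Thus your reduction cannot establish hardness: the LP optimum on your gadget will be some polynomial-time computable quantity related to a fractional hypergraph cover, not to the (hard) integral exact cover. Your step (3) would in fact go through perfectly---you would compute the LP value exactly---which is precisely why the reduction fails.

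The hardness of \eqref{eq:kLP} must come from the fact that $\comp$ is exponential and that even \emph{identifying} a cheapest useful component is hard. The paper's reduction engineers exactly this: starting from a Bern--Plassmann Steiner-tree instance $G$ with costs in $\{1,2\}$, it attaches to each original terminal $s_i$ a new leaf terminal $s_i'$ via an edge of very large cost $M$, and demotes the $s_i$'s to Steiner vertices. Now a single component in $G'$ spanning all new terminals is essentially ``a Steiner tree in $G$ plus the $k$ pendant edges'', so the cheapest such component already encodes the NP-hard optimum of $G$. A short averaging argument using only the equality $\sum_C x_C(|C|-1)=k-1$ and the dichotomy ``either $|C|<k$ or $\cost(E(C))\ge Z+1$'' then shows $OPT_f'\ge Z+1+kM$ whenever $OPT\ge Z+1$, with the easy converse giving $OPT_f'\le Z+kM$ when $OPT\le Z$. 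The idea you are missing is to push the combinatorial hardness \emph{inside} the components rather than into how they are combined.
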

\begin{proof}
Let $G=(V,E)$ be a complete graph with terminals $R = \{ s_1,\ldots,s_k\}  \subseteq V$, edge cost 
$c(e) \in \{ 1,2\}$ for all $e \in E$. Bern and Plassmann~\cite{BP89} showed that  
it is $\mathbf{NP}$-hard to decide whether the cost $OPT$ of the 
cheapest Steiner tree is at most a given parameter $Z$. 

We construct another Steiner tree instance $G'=(V',E')$ as follows: 
For each terminal $s_i \in R$ in the original instance, we add 
a terminal $s_i'$ and an edge $s_is_i'$ with cost $c(s_i,s_i') := M$ with $M := 3n^2$ and $n= |V|$.
Furthermore, we downgrade the original terminal to
an ordinary vertex, i.e., we define $R' := \{ s_i' \mid i=1,\ldots,k\}$ as terminal set. %
Let $OPT_f'$ be the value of the optimum fractional solution
of \eqref{eq:kLP} for instance $G'$ (using components of arbitrary size). 

First we show that $OPT \leq Z \Rightarrow OPT_f' \leq Z + k\cdot M$. 
Let $S^*$ be the optimum integral Steiner tree in $G$. 
We add all $s_is_i'$ edges to $S^*$ and consider the emerging tree as 
component with fractional weight $1$ and cost $OPT + k\cdot M$.

Next, we prove that $OPT \geq Z+1 \Rightarrow OPT_f' \geq Z+1 + k\cdot M$ (which in turn implies the claim of the theorem).
Let $x$ be an optimum solution to \eqref{eq:kLP} in $G'$.
For a component $C \in \comp$, we denote $E(C)$ as the contained edges from the 
original graph (i.e. without $s_is_i'$ edges) and by $|C|$ we denote the number of terminals.
Either $C$ contains less than $k$ terminals, or $\cost(E(C)) \geq Z+1$. In any case 
\[
\frac{cost(E(C)) + M}{|C| - 1}
\geq \min\left\{ \frac{Z+1 + M}{k - 1}, \frac{M}{k - 2} \right\} \geq \frac{Z+1 + M}{k - 1}
\]
using that $M = 3n^2$, $k\leq n$ and $Z \leq 2n$.
Now we can bound the cost of the fractional solution as
\begin{eqnarray*}
OPT_f' &=& \sum_{C \in \comp} (\cost(E(C))  + |C|\cdot M)\cdot x_{C} \\
&=& M\sum_{C \in \comp} (|C|-1)x_C + \sum_{C \in \comp} ( \cost(E(C)) + M)x_C \\
&\geq& (k-1)M + \sum_{C \in \comp} \frac{|C| - 1}{k -1}(Z+1+M) x_C
= Z+1+kM 
\end{eqnarray*}
exploiting $\sum_{C \in \comp} x_C ( |C| - 1) = k - 1$. 
\end{proof}
Observe that the above reduction in not approximation preserving. This is not 
surprising, considering the fact that Steiner tree even with edge weights $\{ 1,2\}$
is $\mathbf{APX}$-hard (e.g. by a straightforward reduction from Set Cover with sets of size 3).

\end{document}